%          IEEEbib.bst - IEEE bibliography style file.
% --------------------------------------------------------------------------
%\documentclass[conference, 10pt]{IEEEtran}
\documentclass[draftcls, onecolumn, 12pt]{IEEEtran}
\usepackage{amsmath,graphicx, amsthm}

%package for drawing networks
\usepackage{tikz}
\usetikzlibrary{arrows,automata}

\usepackage{dsfont}
\usepackage{algorithmicx}
\usepackage{algpseudocode}
\usepackage{caption}
\usepackage{subcaption}
\usepackage{colortbl} %color table

% Example definitions.
% --------------------

 %norm

%%theorem and lemma
\newtheorem{theorem}{Theorem}[section]
\newtheorem{lemma}{Lemma}[section]

\captionsetup{justification=centering} %center align all captions

\begin{document}

% Title.
% ------
\title{Contact Process with Exogenous Infection and the Scaled SIS Process}

% author names and affiliations
% use a multiple column layout for up to three different
% affiliations
\author{\IEEEauthorblockN{June Zhang}\\
\IEEEauthorblockA{Carnegie Mellon University, Dept. of Electrical and Computer Engineering,\\
Pittsburgh, PA, USA\\
junez@andrew.cmu.edu\\}
\and
\IEEEauthorblockN{Jos\'{e} M.F.~Moura}\\
\IEEEauthorblockA{Carnegie Mellon University, Dept. of Electrical and Computer Engineering\\
Pittsburgh, PA, USA\\
moura@ece.cmu.edu}}
\thanks{This work is partially supported by AFOSR grant FA95501010291, and by NSF grants CCF1011903 and CCF1018509.}

\maketitle

\begin{abstract}
Propagation of contagion in networks depends on the graph topology. This paper is concerned with studying the time-asymptotic behavior of the extended contact processes on static, undirected, finite-size networks. This is a contact process with nonzero exogenous infection rate (also known as the $\epsilon$-SIS, $\epsilon$ susceptible-infected-susceptible, model \cite{PhysRevE.86.016116}). The only known analytical characterization of the equilibrium distribution of this process is for complete networks. For large networks with arbitrary topology, it is infeasible to numerically solve for the equilibrium distribution since it requires solving the eigenvalue-eigenvector problem of a matrix that is exponential in $N$, the size of the network. We show that, for a certain range of the network process parameters, the equilibrium distribution of the extended contact process on \emph{arbitrary}, finite-size networks is well approximated by the equilibrium distribution of the scaled SIS process, which we derived in closed-form in prior work. We confirm this result with numerical simulations comparing the equilibrium distribution of the extended contact process with that of a scaled SIS process. We use this approximation to decide, in \emph{polynomial-time}, which agents and network substructures are more susceptible to infection by the extended contact process.
\end{abstract}
\begin{keywords}
scaled SIS process, contact process, $\epsilon$-SIS, epidemics on networks, networks, subgraph density 
\end{keywords}

\section{Introduction}

The contact process \cite{liggett1999stochastic} and its extension, the $\epsilon$-SIS (susceptible-infected-susceptible) model \cite{PhysRevE.86.016116}, which we will refer to as the extended contact process in this paper, are models widely considered for describing the propagation dynamics of failures or epidemics in complex networks; the network describes and constrains the interactions and interdependencies between multiple agents/components in the system \cite{barrat2008dynamical, keeling2005networks, porter2014dynamical, pellis2014eight}. We call models of such phenomena \emph{network processes}. Network processes extend traditional dynamical processes since the network substrate itself is a determinant of the observed dynamics. Except in special cases, such as when the network is a complete graph or the network is comprised of isolated nodes, it is a challenge to analyze how network topology influences the process dynamics. In network science, developing analyzable models that quantify the impact of topology on the behavior of network processes remains an open question.

We are interested in understanding the role that topology plays on the time-asymptotic behaviors of network processes. For continuous-time Markov processes such as the contact and the extended contacted processes, the time-asymptotic behavior is characterized by their equilibrium distribution (i.e., the limiting distribution of the process). The equilibrium distribution of the contact process on networks with $N$ agents ($N < \infty$) is trivial, because it assumes healing and infections are only due to contagion from infected neighbors \cite{liggett1999stochastic}. The contact process has an absorbing state; its equilibrium distribution is zero everywhere, except one at the absorbing state. References \cite{Ganesh, Draief:2006:TVS:1190095.1190160} looked at the effect of topology on the time it takes for the process to reach steady state.

Due to the inclusion of a non-zero exogenous infection rate, the extended contact process does not have a trivial equilibrium distribution. In general, to compute the equilibrium distribution for this network process requires solving for the left eigenvector corresponding to the zero eigenvalue of the $2^N \times 2^N$ transition rate matrix, $\mathbf{Q}_e$; this is an infeasible computation problem for networks with more than a few agents. For large-scale networks, researchers usually approximate large-scale networks by infinite-size networks using the mean-field approximation \cite{PhysRevE.86.016116}. We take a different approach and show that, for a subclass of extended contact processes, their equilibrium distribution can be approximated by that of the scaled SIS process, for which we found the closed-form equilibrium distribution of the process on \emph{arbitrary}, undirected, finite-size network topology \cite{JZhangJournal, JZhangJournal2}. Unlike the extended contact process, which assumes that the infection rate of a healthy agent is linearly dependent on its number of infected neighbors, the scaled SIS process assumes an exponential dependence. 

We use this analytical characterization of the equilibrium distribution of the scaled SIS process as an approximation to the equilibrium distribution of the extended contact process. The paper shows that this approximation is appropriate for a range of endogenous infection rates of the extended contact process; it shows this range depends on the maximum degree of the underlying network. With numerical simulations, we confirm that, within this parameter range, the deviation between the true equilibrium distribution of the extended contact process and the approximation is very small (on the order of $10^{-5}$). Further, we observe from experiments that for certain network topologies, the approximation remains good even as the infection rate deviates from the established range.

We use the equilibrium distribution to address important questions regarding the extended contact process like deciding which agents and network structures are more susceptible to infection by solving for the most-probable configuration, configuration with maximum equilibrium probability. When the approximation holds, the most-probable configuration of the extended contact process is the same as the scaled SIS process, which we proved we can find in polynomial-time in \cite{JZhangJournal2}. 

In Section~\ref{sec:contact}, we review the contact process and the extended contact process. We review the scaled SIS process and compare and contrast it with the extended contact process in Section~\ref{sec:scaledSIS}. In Section~\ref{sec:timeasy}, we show a bound on the endogenous infection rate such that the equilibrium distribution of the extended contact process is well approximated by that of the scaled SIS process. We compare the true equilibrium distribution of the extended contact process with its approximation for six different 16-node networks using the total variation distance in Section~\ref{sec:numerical}. We discuss the most-probable configuration of the extended contact process and the approximate distribution in Section~\ref{sec:vulnerablesub}. Section~\ref{sec:conclusion} concludes the paper.

\section{Contact Process}\label{sec:contact}

The contact process models the spread of infection in a network \cite{liggett1999stochastic}. It is a binary state, irreducible, continuous-time Markov process on a static, simple, connected, undirected network $G$. See \cite{norris1998markov, Kelly} for review of continuous-time Markov processes, \cite{west2001introduction, algraph} for review of graph theory. Each node in the network is an agent in the population. Each node can be in one of two states, $\{0,1\}$, representing, for example, healthy or infected state. For a system with $N$ nodes, the microscopic network configuration is
\[
\mathbf{x} = [x_1, x_2, \ldots x_N]^T, \text{ where } x_i = \{0,1\}.
\]
As a result, there are $2^N$ possible configurations. 

The contact process models SIS (susceptible-infected-susceptible) epidemics on networks. There are two types of state transitions representing 1) healing of infected agents and 2) infection of susceptible agents.

\begin{enumerate}
\item
Consider the configuration
\[ 
\mathbf{x} = [x_1,x_2, \ldots, x_j = 1, x_k, \ldots x_N]^T.
\] 
Let $T^-_j\mathbf{x}$ be the configuration where the $j$th agent heals: 
\[
T^-_j\mathbf{x} =  [x_1,x_2, \ldots, x_j = 0, x_k, \ldots x_N]^T.
\] 
The contact process transitions from $\mathbf{x}$ to  $T^-_j\mathbf{x}$ in an exponentially distributed random amount of time with transition rate
%The transition rate of the contact process from $\mathbf{x}$ to  $T^-_j\mathbf{x}$ is 
\begin{equation}\label{eq:contacthealrate}
q(\mathbf{x}, T^-_j\mathbf{x}) = \mu.
\end{equation}
Parameter $\mu$ is the \emph{healing rate}. Without loss of generality, typically $\mu \equiv 1$. 

\item Consider the configuration 
\[
\mathbf{x} = [x_1,x_2, \ldots, x_j, x_k = 0, \ldots x_N]^T.
\]
Let $T^+_k\mathbf{x}$ be the configuration where the $k$th agent becomes infected:
\[
T^+_k\mathbf{x} = [x_1,x_2, \ldots, x_j, x_k = 1, \ldots x_N]^T.
\]
The contact process transitions from $\mathbf{x}$ to  $T^+_k\mathbf{x}$ in an exponentially distributed random amount of time with transition rate
%The transition rate of the contact process from $\mathbf{x}$ to  $T^+_k\mathbf{x}$ is 
\begin{equation}\label{eq:contactinfectrate}
q(\mathbf{x}, T^+_k\mathbf{x}) = \beta_e\sum_{i =1}^N x_iA_{ik},
\end{equation}
where $A = [A_{ik}]$ is the adjacency matrix of the underlying network. The parameter $\beta_e > 0$ is the \emph{endogenous infection rate}. The infection rate of the $k$th agent is assumed to be linearly dependent on its number of infected neighbors, $m = \sum_{i =1}^N x_iA_{ik}$.
\end{enumerate}

In the contact process, when all the agents in the network are healthy, the process dies out. The configuration where all the agents are healthy ($\mathbf{x}^0 = [0,0,\ldots, 0]^T$) is an absorbing state of the Markov process. For networks with $N$ agents, the contact process will eventually reach the configuration $\mathbf{x}^0$ and remain there indefinitely. Thus, the equilibrium distribution is trivial for contact processes on finite-size networks \cite{liggett1999stochastic}. 

%What is not know is how $N$, the size of the network, impacts how long it will take the contact process to reach the absorbing; this is known as the absorption time, $\tau$. It is also the time it will take for the epidemics to die out.
%
%Reference \cite{liggett1999stochastic} showed that the absorption time depends on the size of the network, $N$. However, if the endogenous infection rate $\beta < \beta_{critical}$, where $\beta_{critical}$ is a critical threshold value, then the dependence is logarithmic in $N$, if $\beta = \beta_{critical}$, then the dependence is polynomial in $N$, and if $\beta > \beta_{critical}$, then the dependence is exponential in $N$; this means that if the endogenous infection rate is small, then the process will quickly die out whereas if the endogenous infection rate is large, then it will take some time for the process to die out. In \cite{Ganesh, Draief:2006:TVS:1190095.1190160} and \cite{1238052}, which considered a discrete-time model, they showed that the absorbing time relates to the spectral radius of the network.

\subsection{Extended Contact Process}
In the contact process, a healthy agent can only become infected through contagion from an infected neighbor. It may be the case that a healthy agent (or working component) may also become infected (or fail) due to an exogenous (i.e., outside of the network) source \textemdash the agent is infected spontaneously \cite{Augusto, PhysRevE.86.016116, JZhang}. For SIS epidemics, this is captured by a non-zero \emph{exogenous infection rate}, $\lambda$. The transition rate of the extended contact process from $\mathbf{x}$ to  $T^+_k\mathbf{x}$ is
\begin{equation}\label{eq:extendedcontactinfectrate}
q(\mathbf{x}, T^+_k\mathbf{x}) = \lambda + \beta_e\sum_{i =1}^N x_iA_{ik},
\end{equation}
where $A$ is the adjacency matrix of the underlying network. The healing rate remains the same as \eqref{eq:contacthealrate}. We call this modified model the \emph{extended contact process}, whereas \cite{PhysRevE.86.016116} referred to it as the $\epsilon$-SIS model. When agent $k$ has $0$ infected neighbors, the rate at which agent $k$ becomes infected is the exogenous infection rate. For a system where spontaneous infection is rare, the exogenous infection rate can be made arbitrarily small, but for the extended contact process, it has to remain greater than zero. 

The configuration where all the agents are healthy ($\mathbf{x}^0 = [0,0,\ldots, 0]^T$) is no longer an absorbing state in the Markov process since susceptible agents can spontaneously become infected.  As a result, the equilibrium distribution of the Markov process is no longer trivial. There is currently no known tractable analytical results regarding this equilibrium distribution for the extended contact process for \emph{arbitrary} network topologies; reference \cite{PhysRevE.86.016116} provided the exact equilibrium distribution only for the complete graph. 

The equilibrium distribution can be calculated numerically. However, this approach is infeasible for large networks. In the case of an irreducible, continuous-time Markov process, the equilibrium distribution, $\pi$, is the left eigenvector of the transition rate matrix, $\mathbf{Q}_e$, corresponding to the 0 eigenvalue. However, the transition rate matrix is a $2^N \times 2^N$ matrix, where $N$ is the size of the network. Solving for the equilibrium distribution of the extended contact process over a 200-node network with arbitrary topology means finding the eigenvector of a $2^{200} \times 2^{200}$ matrix; even taking into account sparsity, such computation is clearly infeasible. 

We will show in this paper that we can obtain an approximation to the equilibrium distribution over arbitrary network topologies for a subset of extended contact processes using the scaled SIS process.

% We showed in \cite{JZhang} by treating infected and healthy states are colors, we were able to  prove that configurations that are color isomorphic to each other will have the same equilibrium probability. This means that the complexity of the equilibrium distribution depends on the inherent symmetry of the underlying network. It is difficult however, to derive a closed-form description of the equilibrium distribution for arbitrary network topology. 

\section{Scaled SIS Process}\label{sec:scaledSIS}
We introduced the scaled SIS process in \cite{JZhangJournal, JZhang2}. Like the contact process, it is a binary state, irreducible, continuous-time Markov process on static, simple, connected, undirected networks. The scaled SIS process assumes that an agent can be in one of 2 states, i.e., $\{0,1\}$, representing an healthy or infected state. The space of possible network configurations of the scaled SIS process is the same as that of the contact and extended contact process. 
The scaled SIS process also accounts for two types of state transitions representing 1) healing of infected agents and 2) infection of susceptible agents.
\begin{enumerate}
\item Consider the configuration
\[ 
\mathbf{x} = [x_1,x_2, \ldots, x_j = 1, x_k, \ldots x_N]^T.
\] 
Let $T^-_j\mathbf{x}$ be the configuration where the $j$th agent heals: 
\[
T^-_j\mathbf{x} =  [x_1,x_2, \ldots, x_j = 0, x_k, \ldots x_N]^T.
\] 
The scaled SIS process transitions from $\mathbf{x}$ to  $T^-_j\mathbf{x}$ in an exponentially distributed random amount of time with transition rate
%The transition rate of the scaled SIS process from $\mathbf{x}$ to  $T^-_j\mathbf{x}$ is then
\begin{equation}\label{eq:scaledhealrate}
q(\mathbf{x}, T^-_j\mathbf{x}) = \mu.
\end{equation}
The healing rate of the scaled SIS process is the same as the healing rate of the contact process.

\item Consider the configuration 
\[
\mathbf{x} = [x_1,x_2, \ldots, x_j, x_k = 0, \ldots x_N]^T.
\]
Let $T^+_k\mathbf{x}$ be the configuration where the $k$th agent becomes infected:
\[
T^+_k\mathbf{x} = [x_1,x_2, \ldots, x_j, x_k = 1, \ldots x_N]^T.
\]
The scaled SIS process transitions from $\mathbf{x}$ to  $T^+_k\mathbf{x}$ in an exponentially distributed random amount of time with transition rate
%The transition rate of the scaled SIS process from $\mathbf{x}$ to  $T^+_k\mathbf{x}$ is
\begin{equation}\label{eq:scaledinfectrate}
q(\mathbf{x}, T^+_k\mathbf{x}) = \lambda\beta_s^{\sum_{i =1}^N x_iA_{ik}},
\end{equation}
where $A$ is the adjacency matrix of the underlying network. The parameter $\beta_s > 0$ is the endogenous infection rate. Unlike the infection rate of the extended contact process \eqref{eq:extendedcontactinfectrate}, the infection rate of the $k$th agent is assumed to be \emph{exponentially} dependent on the number of infected neighbors, $m = \sum_{i =1}^N x_iA_{ik}$. When the number of infected neighbors of agent $k$ is 0, the infection rate, like for the extended contact process, reduces to the exogenous infection rate $\lambda$. 
\end{enumerate}

We proved in \cite{JZhangJournal, JZhang2}, that, for the scaled SIS process, the resulting continuous-time Markov process is a \emph{reversible} Markov process; a reversible Markov process is a stochastic process that is statistically the same forward in time as it is in reverse \cite{Kelly}. The equilibrium distribution of a reversible Markov process is unique. The equilibrium distribution of the scaled SIS process over any undirected network topology described by the adjacency matrix, $A$, is found in \cite{JZhangJournal, JZhang2} to be:
\begin{equation} \label{eq:equilibriumdistribution}
\pi({\bf x}) =\frac{1}{Z}\left( \frac{\lambda}{\mu}\right)^{1^T{\mathbf x}}  \beta_s^{\frac{{\mathbf x}^TA{\mathbf x}}{2}  }, \quad  \mathbf{x} \in \mathcal{X}
\end{equation}
where $1$ is the vector of all 1's, $Z$ is the partition function, and $\mathcal{X}$ is the space of $2^N$ possible configurations \cite{JZhangJournal, JZhang2}. The equilibrium probability of a configuration $\mathbf{x}$ depends on the number of infected agents, $1^T{\mathbf x}$, and on the number of edges where both end nodes are infected, $\frac{{\mathbf x}^TA{\mathbf x}}{2}$.

%We interpret the equilibrium behavior of the scaled SIS process to be dependent on 2 separate processes: a \emph{topology-independent} process parameterized by $\frac{\lambda}{\mu}$ \textemdash it is topology independent because the exogenous infection rate $\lambda$ and the healing rate $\mu$ are identical for all the agents in the network \textemdash and a \emph{topology-dependent} process parameterized by $\beta$. 

%%%%%%%%%%%%%%%%%%%%%%%%%%%%%%%%%%%
\subsection{Scaled SIS Process vs. Extended Contact Process}\label{sec:scaledcontact}

The infection rate of a susceptible agent in both the extended contact process and the scaled SIS process depends on its number of infected neighbors. The two models make  different assumptions regarding the underlying mechanism of the contagion process:

\begin{figure}[ht]
        \centering
                \includegraphics[width=0.7\textwidth]{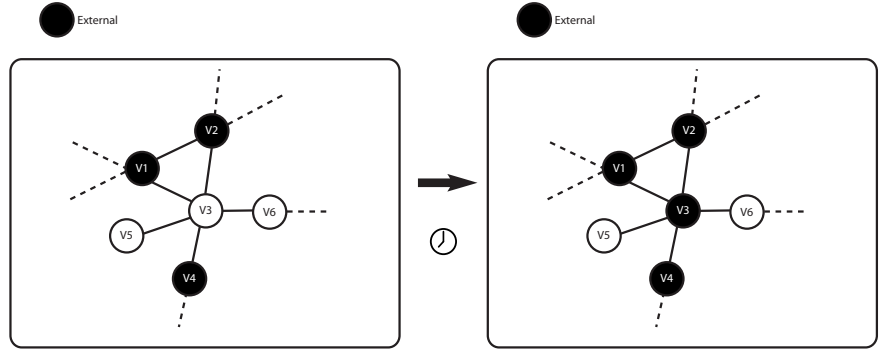}
	\caption{Transition from Configuration $\mathbf{x}$ to Configuration $T_3^+\mathbf{x}$}\label{fig:contact}
\end{figure}

\begin{description}
\item[Extended Contact Process] \hfill \\
The extended contact process is parameterized by the exogenous infection rate, $\lambda$, the healing rate, $\mu$, and the endogenous infection rate $\beta_e$. Consider the scenario in Figure~\ref{fig:contact}. Let $T_3$ be the random amount of time it takes for agent $V_3$ to become infected. Each infected neighbors of agent $V_3$ (i.e., $V_1, V_2, V_4$) and the exogenous (i.e., external) source may infect $V_3$ in an exponentially distributed amount of time $T_3^i \sim \exp(\beta_e), i = 1,2,4,$ and $T_3^e \sim \exp(\lambda)$, respectively. Therefore, $T_3 = \min\{T_3^1, T_3^2, T_3^{4}, T_3^{e} \}$. Assuming that these sources act independently, then $T_3 \sim \exp(\lambda + 3\beta_e)$. As the number of infected neighbors of $V_3$ increases, its infection rate also increases. The extended contact process models a \emph{distributed} contagion scenario where all the infection sources compete to be the first to infect a healthy agent.

\item[Scaled SIS Process] \hfill \\
The scaled SIS process is parameterized by the exogenous infection rate, $\lambda$, healing rate, $\mu$, and the endogenous infection rate $\beta_s$. Consider the scenario in Figure~\ref{fig:contact}. Let $T_3$ be the random amount of time it takes for agent $V_3$ to become infected. As agent $V_3$ has three infected neighbors (i.e., $V_1, V_2, V_4$), the scaled SIS process assumes that $T_3 = \frac{1}{(\beta_s)^3}T \sim \exp(\lambda(\beta_s)^{3})$, where $T \sim \exp(\lambda)$ is the random amount of time a healthy agent becomes infected when it has no infected neighbors.  When $\beta_s > 1$,  as the number of infected neighbors of $V_3$ increases, its infection rate also increases. Unlike the extended contact process, the scaled SIS process assumes an \emph{aggregate} contagion scenario.

\end{description}

%%%%%%%%%%%%%%%%%%%%%%%%%%
\section{Time-Asymptotic Behavior of the Extended Contact Process}\label{sec:timeasy}
For finite-size networks, unlike the contact process, the equilibrium distribution of the extended contact process is nontrivial. In this section, we show that, for a subclass of extended contact processes over \emph{arbitrary} network topology, this equilibrium distribution is well approximated by the equilibrium distribution of the scaled SIS process; for these processes, the time-asymptotic behavior of both processes are similar.

\begin{lemma}\label{lemma:approx}[Proof in Appendix~\ref{proof:lemma:approx}]
For any nonnegative integer $m$ from $0$ to $d_{\max}$, if 
\[
\Delta^2 << \frac{2}{d_{\max}(d_{\max} - 1)},
\]
then
\[
\frac{\lambda}{\mu}(1+\Delta)^{m} \approx  \frac{\lambda}{\mu} + \frac{\lambda}{\mu}\Delta m.
\]
\end{lemma}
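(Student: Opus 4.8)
The plan is to use the binomial theorem to split $(1+\Delta)^m$ into its linear part $1+m\Delta$ and a higher-order remainder, and then to observe that the hypothesis $\Delta^2 \ll \frac{2}{d_{\max}(d_{\max}-1)}$ is calibrated precisely so that the largest possible quadratic coefficient, $\binom{d_{\max}}{2}$, multiplied by $\Delta^2$, is $\ll 1$. Since $\lambda/\mu$ is a fixed positive multiplier, it suffices to prove $(1+\Delta)^m \approx 1+m\Delta$ for every integer $m$ with $0 \le m \le d_{\max}$ and then multiply through by $\lambda/\mu$.

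First I would write, by the binomial theorem,
\[
(1+\Delta)^m = 1 + m\Delta + R_m, \qquad R_m := \sum_{k=2}^{m}\binom{m}{k}\Delta^k,
\]
so that the asserted approximation is equivalent to $R_m$ being negligible compared with the retained terms $1+m\Delta$; because $m \ge 0$ forces $1+m\Delta \ge 1$, it is enough to show that $R_m \ll 1$ in absolute value. The cases $m=0$ and $m=1$ are trivial since the sum is empty ($R_m = 0$), and if $d_{\max} \le 1$ the stated bound is degenerate (the right-hand side is infinite) and there is nothing to prove, so one may assume $d_{\max} \ge 2$ and $2 \le m \le d_{\max}$.

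Next I would estimate $R_m$ by pulling out $\Delta^2$ and comparing every coefficient to the quadratic one. Using
\[
\binom{m}{k} = \binom{m}{2}\cdot\frac{2}{k!}\,(m-2)(m-3)\cdots(m-k+1) \;\le\; \binom{m}{2}\, m^{\,k-2} \qquad (2 \le k \le m),
\]
one obtains
\[
R_m \;\le\; \binom{m}{2}\Delta^2 \sum_{k=2}^{m} (m\Delta)^{k-2} \;\le\; \binom{m}{2}\Delta^2 \sum_{j=0}^{\infty}(m\Delta)^{j}.
\]
The hypothesis gives $(d_{\max}\Delta)^2 \ll \frac{2d_{\max}}{d_{\max}-1} \le 4$, hence $m\Delta \le d_{\max}\Delta$ is small (well below $1$), so the geometric series is bounded by a modest absolute constant $C$. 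Therefore $R_m \le C\binom{m}{2}\Delta^2 \le C\binom{d_{\max}}{2}\Delta^2 = C\,\tfrac{d_{\max}(d_{\max}-1)}{2}\,\Delta^2$, and the hypothesis $\Delta^2 \ll \frac{2}{d_{\max}(d_{\max}-1)}$ makes the right-hand side $\ll C$, i.e.\ negligible. Thus $R_m \ll 1 \le 1+m\Delta$, so $(1+\Delta)^m \approx 1+m\Delta$, and multiplying by $\lambda/\mu$ yields the lemma.

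The only delicate point is making the informal symbols $\ll$ and $\approx$ precise: one must pin down that the leading neglected term is the quadratic one, whose worst case over $0 \le m \le d_{\max}$ is exactly $\binom{d_{\max}}{2}\Delta^2$, and that the rest of the Taylor tail is dominated by a geometric series with ratio $m\Delta \le d_{\max}\Delta$, which the same hypothesis keeps far below $1$. Everything else is a routine binomial estimate, so I do not expect any genuine obstacle beyond this bookkeeping.
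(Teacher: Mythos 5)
Your proof is correct and follows essentially the same route as the paper: expand $(1+\Delta)^m$ by the binomial theorem, observe that the worst-case quadratic coefficient over $0 \le m \le d_{\max}$ is $\binom{d_{\max}}{2} = \frac{d_{\max}(d_{\max}-1)}{2}$, and conclude that the hypothesis makes all terms beyond the linear one negligible. If anything, your treatment of the tail is slightly more careful than the paper's --- you dominate $\sum_{k\ge 2}\binom{m}{k}\Delta^k$ by an explicit geometric series with ratio $m\Delta \ll 1$, whereas the paper only asserts that the successive terms ${m \choose k}\Delta^k$ decrease --- but the decomposition and the key calibration of the bound are identical.
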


Using Lemma~\ref{lemma:approx}, we can prove the following theorem.
\begin{theorem}\label{theorem:scaledcontactapprox}[Proof in Appendix~\ref{proof:theorem:scaledcontactapprox}]
Consider the extended contact process exogenous infection rate $\lambda$, healing rate $\mu$, and endogenous infection rate $\beta_e$, over a static, simple, connected, undirected network of arbitrary topology, $G$, with maximum degree $d_{\max}$. Let $\beta_e = \frac{\lambda}{\mu}\Delta$. If 
\[
\Delta^2 << \frac{2}{d_{\max}(d_{\max} - 1)},
\]
then the equilibrium distribution of the extended contact process is well approximated by 
\begin{equation}\label{eq:eqapprox}
\pi_{\scriptsize\textrm{approx}}(\mathbf{x})= \frac{1}{Z}\left( \frac{\lambda}{\mu}\right)^{1^T{\mathbf x}}  (1 + \Delta)^{\frac{{\mathbf x}^TA{\mathbf x}}{2}  }, \quad  \mathbf{x} \in \mathcal{X} ,
\end{equation}
where $A$ is the adjacency matrix of the network $G$, and $Z$ is the partition function. The approximate distribution, $\pi_{\scriptsize\textrm{approx}}(\mathbf{x})$, is the equilibrium distribution \eqref{eq:eqapprox} of a scaled SIS process over the same network $G$ with exogenous infection rate $\lambda$, healing rate $\mu$, and endogenous infection rate $\beta_s = 1+ \Delta$.
\end{theorem}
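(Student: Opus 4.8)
The plan is to compare the transition rate matrices of the two processes entry by entry and then appeal to the continuous dependence of a finite irreducible Markov chain's equilibrium distribution on its generator. Write $\mathbf{Q}_e$ for the transition rate matrix of the extended contact process and $\mathbf{Q}_s$ for that of the scaled SIS process on the same network $G$ with exogenous rate $\lambda$, healing rate $\mu$, and endogenous rate $\beta_s = 1+\Delta$. Both processes live on the same state space $\mathcal{X}$ and allow exactly the same single-site transitions ($T^-_j\mathbf{x}$ and $T^+_k\mathbf{x}$), so $\mathbf{Q}_e$ and $\mathbf{Q}_s$ have the same sparsity pattern, and it suffices to compare corresponding off-diagonal entries (the diagonals being minus the row sums). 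Since scaling the generator by $\mu$ does not change the equilibrium distribution, I would work with $\frac{1}{\mu}\mathbf{Q}_e$ and $\frac{1}{\mu}\mathbf{Q}_s$, whose healing entries are both $1$ by \eqref{eq:contacthealrate} and \eqref{eq:scaledhealrate} and hence agree exactly.

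Next I would handle the infection transitions. For a transition $\mathbf{x}\to T^+_k\mathbf{x}$, let $m = \sum_{i=1}^N x_i A_{ik}$ be the number of infected neighbors of site $k$; since $m$ cannot exceed the degree of $k$, we have $m \in \{0,1,\ldots,d_{\max}\}$. After dividing by $\mu$, the extended-contact infection rate from \eqref{eq:extendedcontactinfectrate} is $\frac{\lambda}{\mu} + \frac{\beta_e}{\mu}m = \frac{\lambda}{\mu} + \frac{\lambda}{\mu}\Delta m$ (using $\beta_e = \frac{\lambda}{\mu}\Delta$), while the scaled-SIS infection rate from \eqref{eq:scaledinfectrate} is $\frac{\lambda}{\mu}\beta_s^{m} = \frac{\lambda}{\mu}(1+\Delta)^{m}$. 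Because $\Delta^2 \ll \frac{2}{d_{\max}(d_{\max}-1)}$ by hypothesis, Lemma~\ref{lemma:approx} applies for every $m$ in the relevant range and gives $\frac{\lambda}{\mu}(1+\Delta)^{m} \approx \frac{\lambda}{\mu} + \frac{\lambda}{\mu}\Delta m$, uniformly in $m$ and in $k$. Hence every off-diagonal entry of $\frac{1}{\mu}\mathbf{Q}_s$ is approximately equal to the corresponding entry of $\frac{1}{\mu}\mathbf{Q}_e$, and therefore $\mathbf{Q}_s \approx \mathbf{Q}_e$ entrywise; the discrepancy in each infection rate is a second-order term in $\Delta$, controlled by $d_{\max}(d_{\max}-1)\Delta^2/2 \ll 1$.

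To conclude, I would invoke that for a finite, irreducible, continuous-time Markov chain the equilibrium distribution is the unique normalized left null vector of the generator, and that this null vector depends continuously (indeed smoothly) on the generator because the zero eigenvalue is simple. A standard Markov-chain sensitivity bound (via the group inverse or the fundamental matrix) then translates the small entrywise perturbation $\mathbf{Q}_e - \mathbf{Q}_s$ into a small perturbation of the stationary distribution, so the equilibrium distribution $\pi_e$ of the extended contact process is well approximated by the equilibrium distribution $\pi_s$ of the scaled SIS process with $\beta_s = 1+\Delta$. Finally, substituting $\beta_s = 1+\Delta$ into the closed-form scaled SIS equilibrium distribution \eqref{eq:equilibriumdistribution} yields exactly \eqref{eq:eqapprox}, identifying $\pi_{\scriptsize\textrm{approx}}$. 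The main obstacle is precisely this last passage: turning ``the generators are entrywise close'' into ``the equilibrium distributions are close'' in a controlled way, since a crude perturbation bound can be inflated by the conditioning (mixing time) of the chain. The cleanest rigorous route is to note that $\pi_s$ already satisfies $\pi_s \mathbf{Q}_s = 0$ exactly, so $\pi_s \mathbf{Q}_e = \pi_s(\mathbf{Q}_e - \mathbf{Q}_s) \approx 0$; thus $\pi_s$ nearly solves the global balance equations of the extended contact process, and an almost-stationary probability vector must be close to the true stationary vector $\pi_e$.
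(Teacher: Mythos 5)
Your proposal is correct and follows essentially the same route as the paper's own proof: normalize by $\mu$, note that the healing entries of the two generators agree exactly, invoke Lemma~\ref{lemma:approx} to show the infection entries agree up to terms controlled by $\tfrac{1}{2}d_{\max}(d_{\max}-1)\Delta^2 \ll 1$, conclude that $\mathbf{Q}_e$ and $\mathbf{Q}_s$ nearly coincide, and read off the scaled SIS null vector from the closed form \eqref{eq:equilibriumdistribution} with $\beta_s = 1+\Delta$. The one place you go beyond the appendix is the last step---the paper simply asserts that approximately equal rate matrices share (approximately) the same zero-eigenvalue left eigenvector, whereas you correctly flag this as the real obstacle and sketch how to control it via simplicity of the zero eigenvalue and an almost-stationarity/group-inverse sensitivity bound ($\pi_s\mathbf{Q}_e = \pi_s(\mathbf{Q}_e-\mathbf{Q}_s)\approx 0$), which is a genuine improvement in rigor over the published argument.
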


Theorem~\ref{theorem:scaledcontactapprox} gives an upperbound on the factor, $\Delta$, between the endogenous infection rate, $\beta_e$, and the ratio, $\frac{\lambda}{\mu}$, of the exogenous infection rate, $\lambda$, and the healing rate, $\mu$. This bound depends on the maximum degree of the underlying network topology. When $\beta_e$ is much smaller than $\frac{\lambda}{\mu}$, then the equilibrium distribution, $\pi_e(\mathbf{x})$, of the extended contact process is well approximated by that of an equivalent scaled SIS process. What does this imply about the extended contact process?

Recall that, for the extended contact process, all infection sources are \emph{independent}. Suppose that susceptible agent $i$ has one infected neighbor. Let $T_i^1 \sim \exp(\beta_e)$ be the random amount of time it takes for susceptible agent $i$ to be infected by this infected neighbor, and $T_i^e \sim \exp(\lambda)$ be the random amount of time it takes for susceptible agent $i$ to become infected by an exogenous source. The probability that the agent $i$ is infected by the exogenous source rather than by its infected neighbors is 
\begin{equation}\label{eq:singleinfection}
P(T_i^e \le T_i^1) = \frac{\lambda}{\beta_e + \lambda} = \frac{1}{\frac{\Delta}{\mu} + 1},
\end{equation}
since $\beta_e = \frac{\lambda}{\mu}\Delta$. (See Appendix~\ref{expmin} for a review regarding functions of exponentially distributed random variables.) Suppose that susceptible agent $i$ has multiple (i.e., $m > 1$) infected neighbors. The probability that agent $i$ is infected by the exogenous source rather than by its infected neighbors is
\begin{equation}\label{eq:multinfection}
P(T_i^e \le \min\{T_i^1, \ldots T_i^m\}) = \frac{\lambda}{m\beta_e + \lambda} = \frac{1}{\frac{m\Delta}{\mu} + 1}.
\end{equation}

Without loss of generality, let $\mu = 1$. According to Theorem~\ref{theorem:scaledcontactapprox}, the scaled SIS process is a valid approximation for the extended contact process when $\Delta$ is small. In this case, according to \eqref{eq:singleinfection} and \eqref{eq:multinfection}, the probability that the source of infection is exogenous rather than endogenous is high; infection due to contagion from infected neighbor is rare but not impossible.

%Under the upperbound condition of Theorem~\ref{theorem:scaledcontactapprox}, when a healthy agent has a single infected neighbor, according to \eqref{eq:singleinfection}, the probability that the source of infection is exogenous rather than endogenous is $\frac{1}{1+\Delta} \approx 1$. When a healthy agent has $m>1$ infected neighbors, according to \eqref{eq:multinfection}, the probability that the source of infection is exogenous rather than endogenous is $\frac{1}{1+m\Delta} \approx 1$. The time-asymptotic behavior of the extended contact process is like that of the scaled SIS process when infection due to contagion from infected neighbor is rare but not impossible. 

%%%%%%%%%%%%%%%%%%%%%%%%%
%%%%%%%%%%%%%%%%%
\section{Experimental Simulations}\label{sec:numerical}

We showed when the extended contact process can be well approximated by the scaled SIS process. We confirm Theorem~\ref{theorem:scaledcontactapprox} with numerical simulations. Further, we show that this upperbound is conservative. Below it, the equilibrium distribution of the extended contact process. $\pi_e(\mathbf{x})$, for arbitrary network topology is well approximated by the equilibrium distribution, $\pi_{\scriptsize\textrm{approx}}(\mathbf{x})$, of a scaled SIS process. However, depending on the underlying network topology, the approximation may still remain accurate ($< 0.1$ deviation) for extended contact processes with parameters \emph{away from} the bound.

\subsection{Setup}
We will compare the true equilibrium distribution, $\pi_e(\mathbf{x})$, of the extended contact process, with infection and healing rates $\left(\lambda, \mu, \beta_e= \frac{\lambda}{\mu}\Delta\right)$ over network $G$, with the approximation distribution, $\pi_{\scriptsize\textrm{approx}}(\mathbf{x})$. The true distribution, $\pi_e(\mathbf{x})$, is found numerically by forming the transition rate matrix, $\mathbf{Q}_e$, according to \eqref{eq:contacthealrate} and \eqref{eq:extendedcontactinfectrate} and solving for the left eigenvector of $\mathbf{Q}_e$ corresponding to eigenvalue $0$. The approximate distribution, $\pi_{\scriptsize\textrm{approx}}(\mathbf{x})$, is obtained from the closed-form equation according to Theorem~\ref{theorem:scaledcontactapprox}
\[
\pi_{\scriptsize\textrm{approx}}(\mathbf{x})= \frac{1}{Z}\left( \frac{\lambda}{\mu}\right)^{1^T{\mathbf x}}  (1 + \Delta)^{\frac{{\mathbf x}^TA{\mathbf x}}{2}  }, \quad  \mathbf{x} \in \mathcal{X}.
\]

%$\frac{\lambda}{\mu} = \{0.5,	0.7,	0.9,	1.1,	1.3\}$ = \{0.0023, 0.1986, 0.4333, 0.7140,1.0496,1.4509,1.9309,2.5048,    3.1912,4.0119\}
We solve for $\pi_e(\mathbf{x})$ and $\pi_{\scriptsize\textrm{approx}}(\mathbf{x})$ for different $\frac{\lambda}{\mu}$ values and different $\Delta$ values, both below and above the upperbound, 
\[
\Delta_u =\sqrt{\frac{2}{d_{\max}(d_{\max}-1)}}.
\]
To quantify the difference between the exact and the approximation equilibrium distribution, $\pi_e(\mathbf{x})$ and $\pi_{\scriptsize\textrm{approx}}(\mathbf{x})$, we use the total variation distance (TVD) \cite{levin2009markov}:
\begin{equation}\label{eq:TVD}
\text{TVD}(\pi_{e}, \pi_{\scriptsize\textrm{approx}}) = \frac{1}{2}\sum_{\mathbf{x} \in \mathcal{X}}|\pi_{e}(\mathbf{x}) - \pi_{\scriptsize\textrm{approx}}(\mathbf{x})|.
\end{equation}
When the two distributions are equal, TVD is 0. The maximum TVD between any two probability distributions over the same support is $1$.

As the true distribution of the extended contact process, $\pi_e(\mathbf{x})$, is obtained by solving the zero eigenvalue-eigenvector problem of $\mathbf{Q}_e$, which is a $2^N \times 2^N$ matrix, we are restricted to simulating examples with small networks of size $N$. We consider six $16$-node networks (see Figure~\ref{fig:networkexample}) with different maximum degree, $d_{\max}$, corresponding to different upperbounds $\Delta_u$. Networks A and B have the smallest possible maximum degree of any connected graph ($d_{\max}=2$); they have the largest possible upperbound ($\Delta_u =1$). Network F has the largest maximum degree of the networks studied ($d_{\max}=15$) and has the smallest upperbound ($\Delta_u = 0.098$). 

In Matlab, on a Microsoft Azure cloud virtual machine with 2.6GHz Intel Xeon E5-2670 and 56GB of RAM, for a 16-node network, it takes approximately $2$ secs to generate the sparse transition rate matrix $\mathbf{Q}_e$ and $460$ secs to solve for the eigenvector corresponding to the $0$ eigenvalue. For a 20-node network, it takes approximately $30$ secs to generate the transition rate matrix $\mathbf{Q}_e$; we receive an OUT-OF-MEMORY error when computing the eigenvector.

\begin{figure}[ht]      
      \centering
          \begin{subfigure}[b]{0.25\textwidth}
                \centering
                \includegraphics[width=\textwidth]{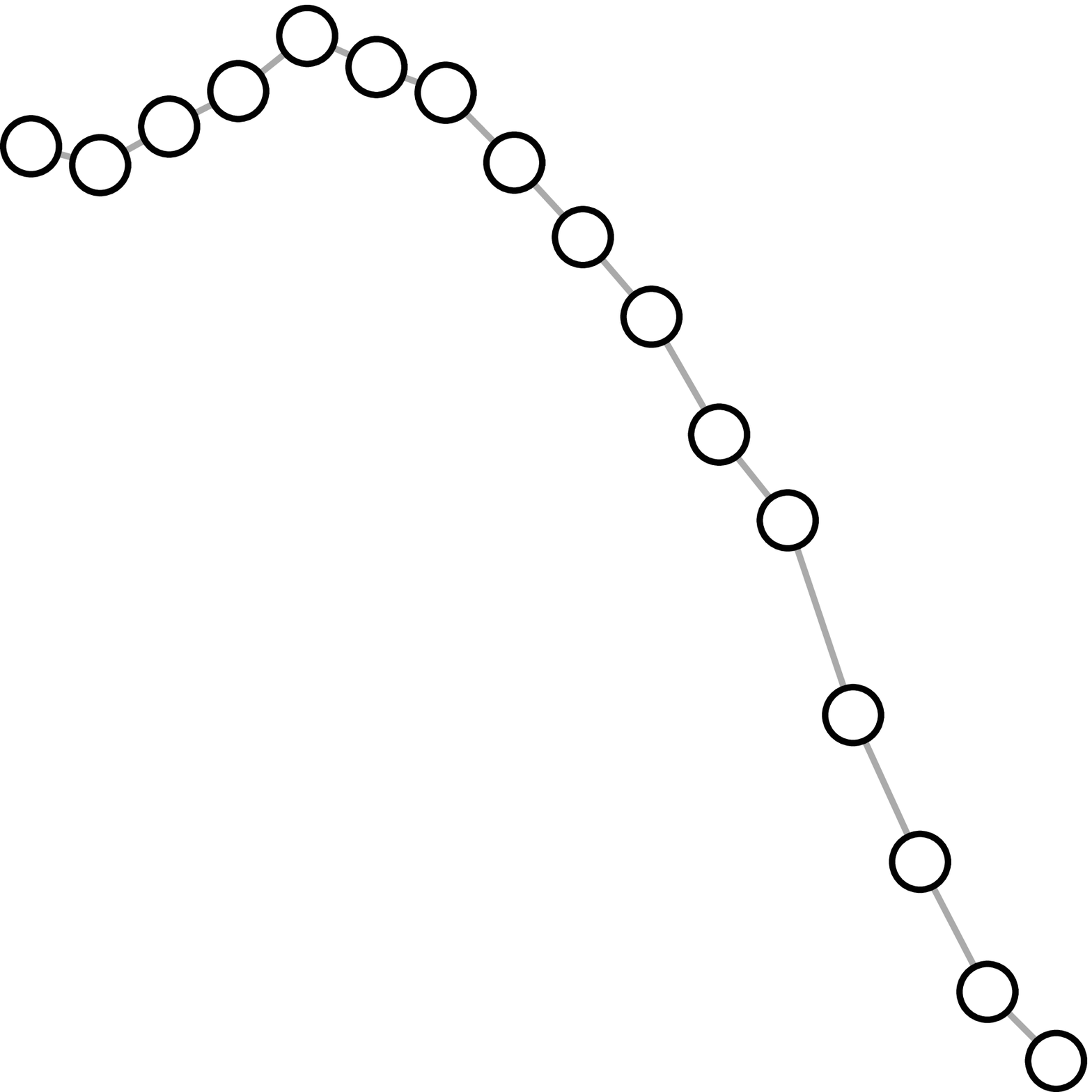}
                \caption{Network A\\($d_{\max} = 2, \Delta_u = 1$)}
                \label{fig:16Path}
        \end{subfigure}
	  \enspace%add desired spacing between images, e. g. ~, \quad, \qquad etc.
          %(or a blank line to force the subfigure onto a new line)
        \begin{subfigure}[b]{0.25\textwidth}
                \centering
                \includegraphics[width=\textwidth]{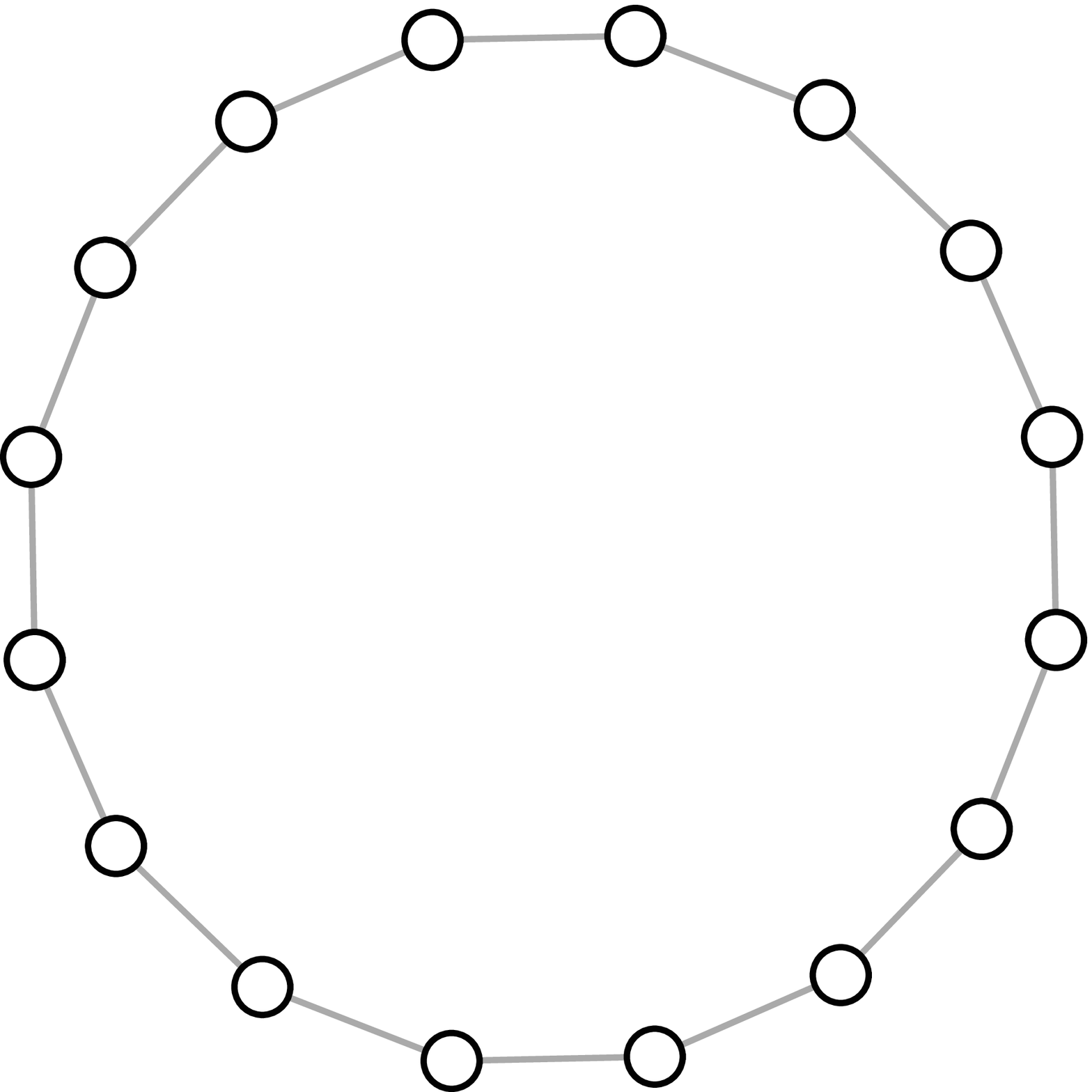}
                \caption{Network B\\($d_{\max} = 2, \Delta_u = 1$)}
                \label{fig:16Cycle}
        \end{subfigure}%
         \enspace%add desired spacing between images, e. g. ~, \quad, \qquad etc.
          %(or a blank line to force the subfigure onto a new line)
          \begin{subfigure}[b]{0.25\textwidth}
                \centering
                \includegraphics[width=\textwidth]{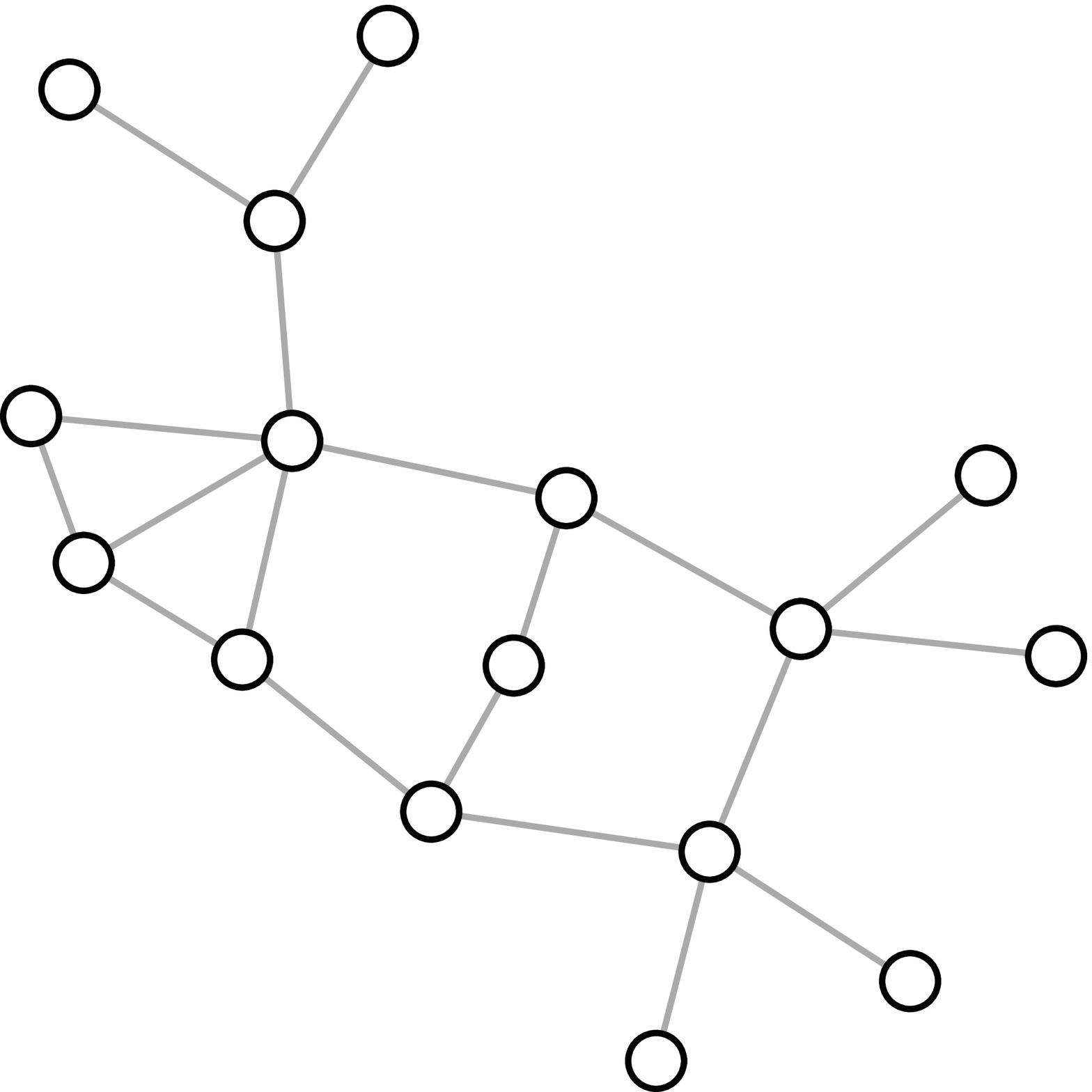}
                \caption{Network C\\($d_{\max} = 5, \Delta_u = 0.32$)}
                \label{fig:16ERa}
        \end{subfigure}
        
        \hfill \\
          \begin{subfigure}[b]{0.25\textwidth}
                \centering
                \includegraphics[width=\textwidth]{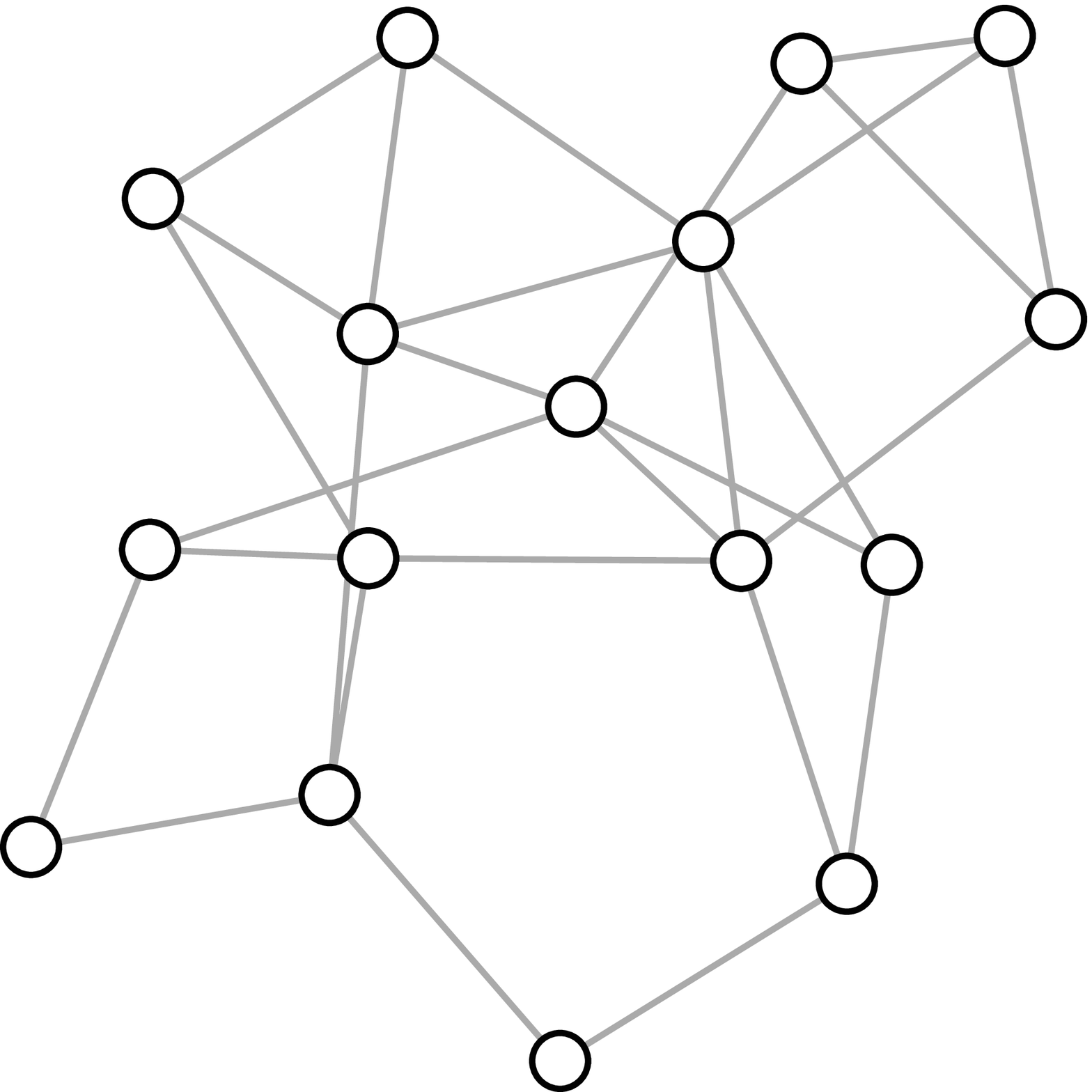}
                \caption{Network D\\($d_{\max} = 5, \Delta_u = 0.32$)}
                \label{fig:16ERb}
        \end{subfigure}
	 \enspace %add desired spacing between images, e. g. ~, \quad, \qquad etc.
          %(or a blank line to force the subfigure onto a new line)
         \begin{subfigure}[b]{0.25\textwidth}
                \centering
                \includegraphics[width=\textwidth]{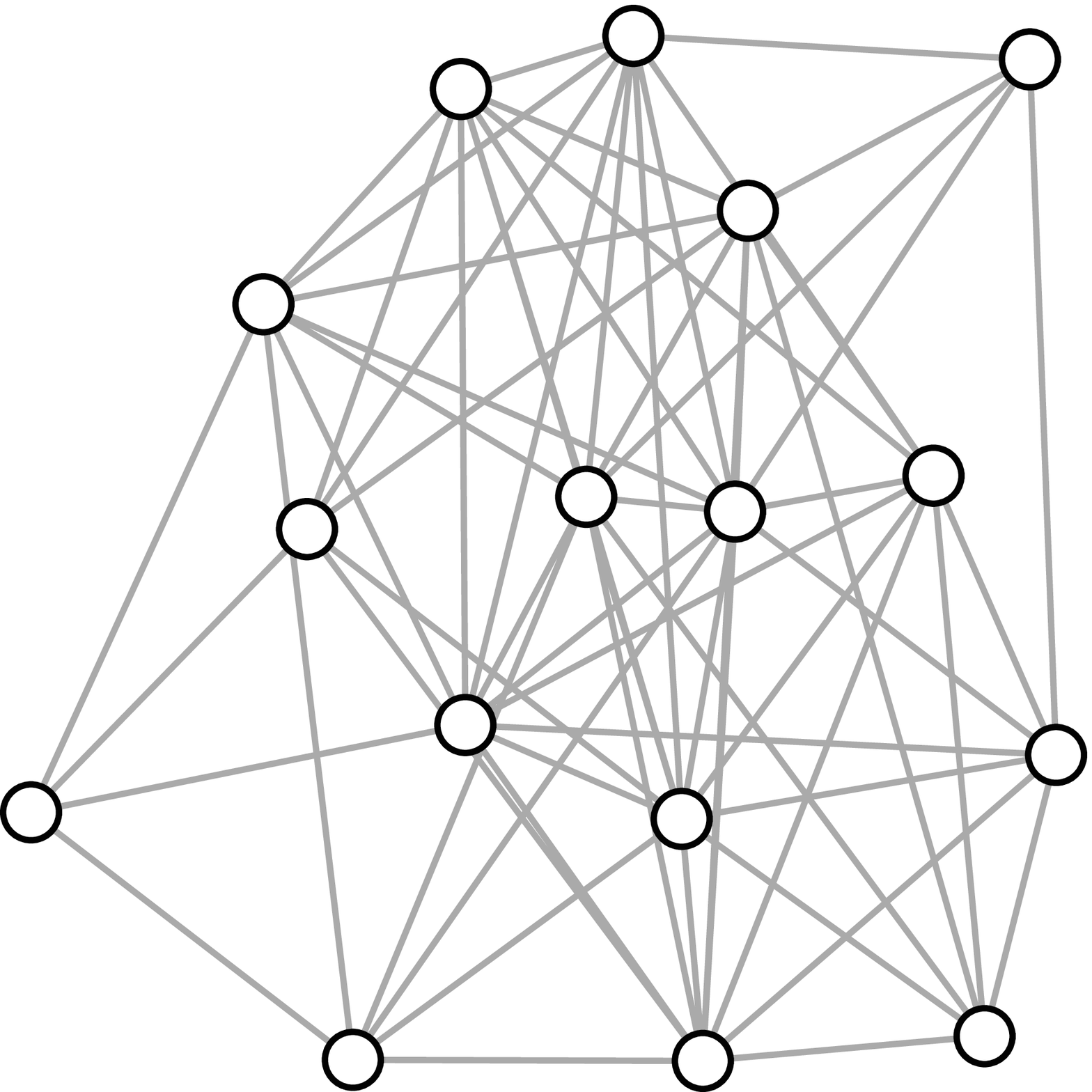}
                \caption{Network E\\($d_{\max} = 11, \Delta_u = 0.135$)}
                \label{fig:16WSa}
        \end{subfigure}
       \enspace %add desired spacing between images, e. g. ~, \quad, \qquad etc.
          %(or a blank line to force the subfigure onto a new line)
         \begin{subfigure}[b]{0.25\textwidth}
                \centering
                \includegraphics[width=\textwidth]{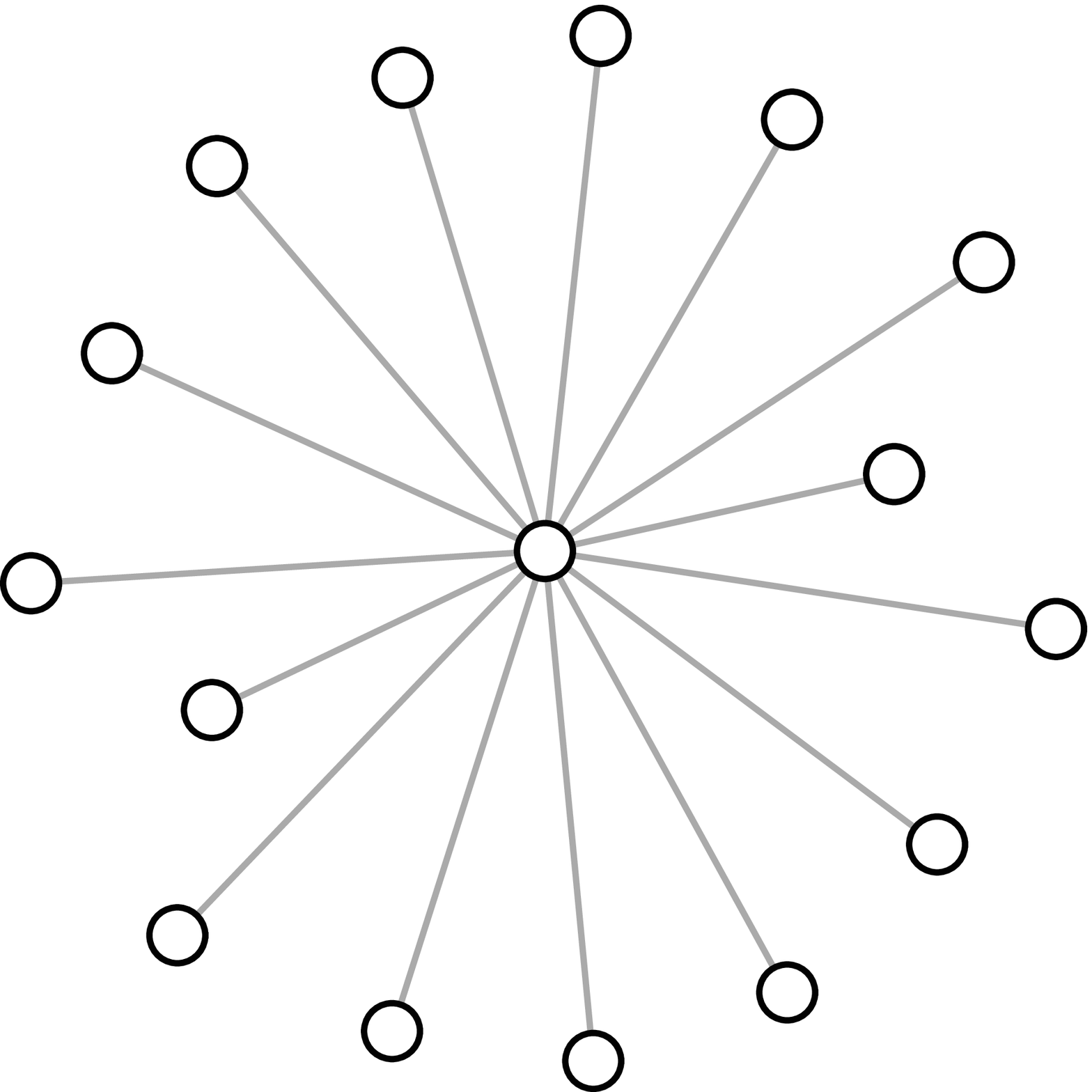}
                \caption{Network F\\($d_{\max} = 15, \Delta_u = 0.098$)}
                \label{fig:16Star}
        \end{subfigure}

	\caption{Different Network Topologies with Different Maximum Degree}\label{fig:networkexample}
 \end{figure}

\subsection{Results: $\pi_e(\mathbf{x})$ and $\pi_{\scriptsize\textrm{approx}}(\mathbf{x})$}
To provide intuition on the quality of the approximations for different TVDs, we plot in Figure~\ref{fig:eqvsplotsinside} the true equilibrium distribution, $\pi_e(\mathbf{x})$, of the extended contact process together with the approximate equilibrium distribution, $\pi_{\scriptsize\textrm{approx}}(\mathbf{x})$. The Y-axis displays both equilibrium distributions; we use log scaling for better visualization. The $2^{16}$ network configurations are on the X-axis. The configurations are ordered such that high probability configurations in $\pi_e(\mathbf{x})$ are in the center. 

Figure~\ref{fig:eqvsplotsinside} shows $\pi_e(\mathbf{x})$ and $\pi_{\scriptsize\textrm{approx}}(\mathbf{x})$ and their corresponding TVD, see \eqref{eq:TVD}, for the six different network topologies (see Figure~\ref{fig:networkexample}) for parameters $\frac{\lambda}{\mu} = 0.7, \Delta = 0.0023$. This value of $\Delta$ is much smaller than the upperbound, $\Delta_u$, for all the networks. The equilibrium distribution, $\pi_e(\mathbf{x})$, of the extended contact process is well approximated (i.e., the TVD is on the order of $10^{-5}$ or smaller) by equilibrium distribution, $\pi_{\scriptsize\textrm{approx}}(\mathbf{x})$, of the scaled SIS process. Note that this value of TVD is over $2^{16}$ configurations; so the actual divergence for any configuration is very small. The two distributions are almost identical for all the networks. 
%
%We observe that for the same infection and healing rates, different network have different TVD values. Deviation appears to increase as the maximum degree increases; TVD in Figure~\ref{fig:16WSaTVDeq2} and Figure~\ref{fig:16StarTVDeq2} is larger than in Figure~\ref{fig:16PathTVDeq2} and Figure~\ref{fig:16CycleTVDeq2}. However, Network F, which has the largest $d_{\max}$, does not have the largest deviation between the exact distribution and the approximate distribution. 

%
%Although the theoretical bound we derived is dependent only on the maximum degree of the underlying network, TVD between the two distributions depend on more than the maximum degree.

\begin{figure}[htbp]      
      \centering
             \begin{subfigure}[b]{0.45\textwidth}
                \centering
                \includegraphics[width=\textwidth]{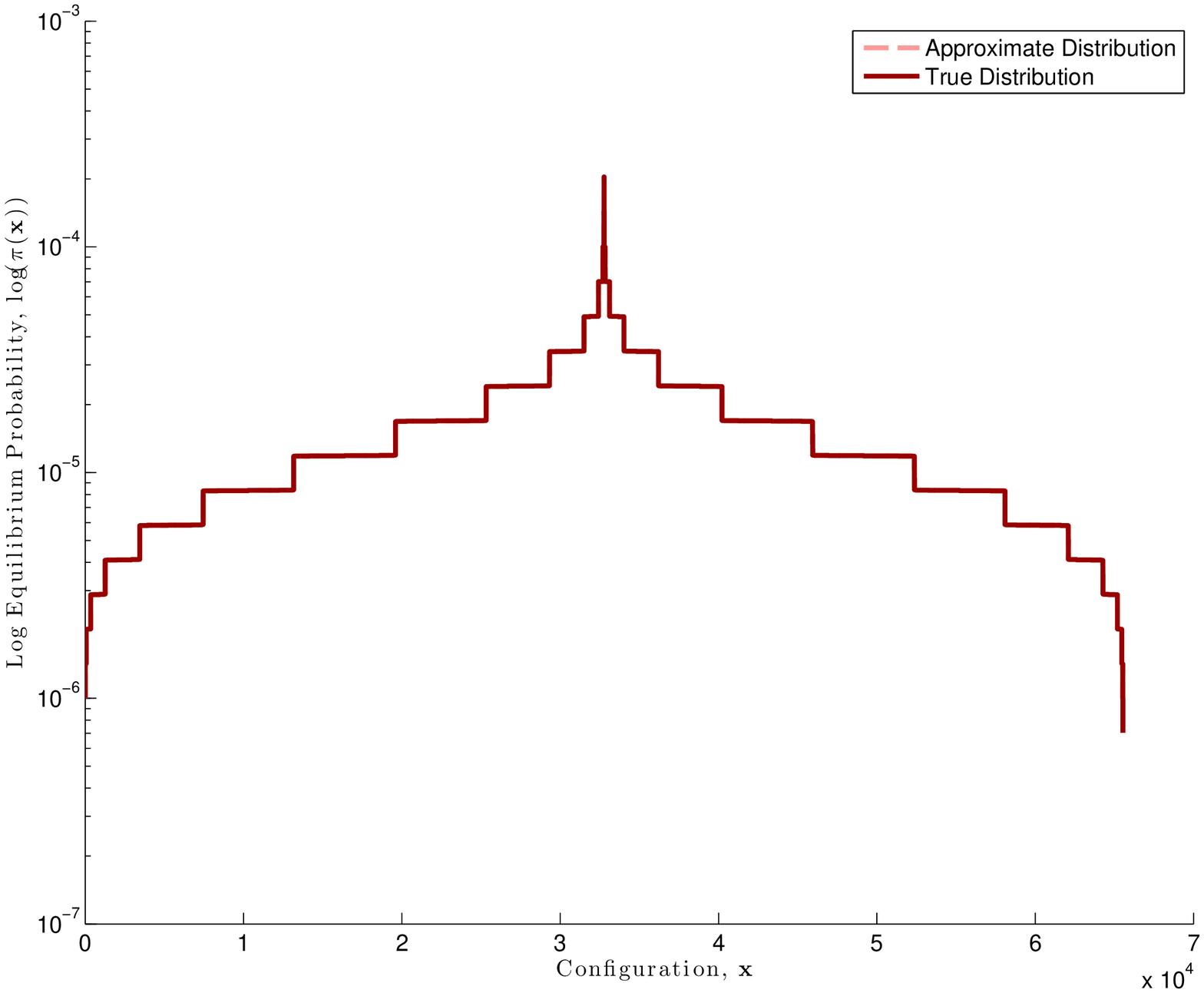}
                \caption{Network A (TVD = $1.0384\times 10^{-6}$)}
                \label{fig:16PathTVDeq2}
        \end{subfigure}%
        ~ %add desired spacing between images, e. g. ~, \quad, \qquad etc.
          %(or a blank line to force the subfigure onto a new line)
        \begin{subfigure}[b]{0.45\textwidth}
                \centering
                \includegraphics[width=\textwidth]{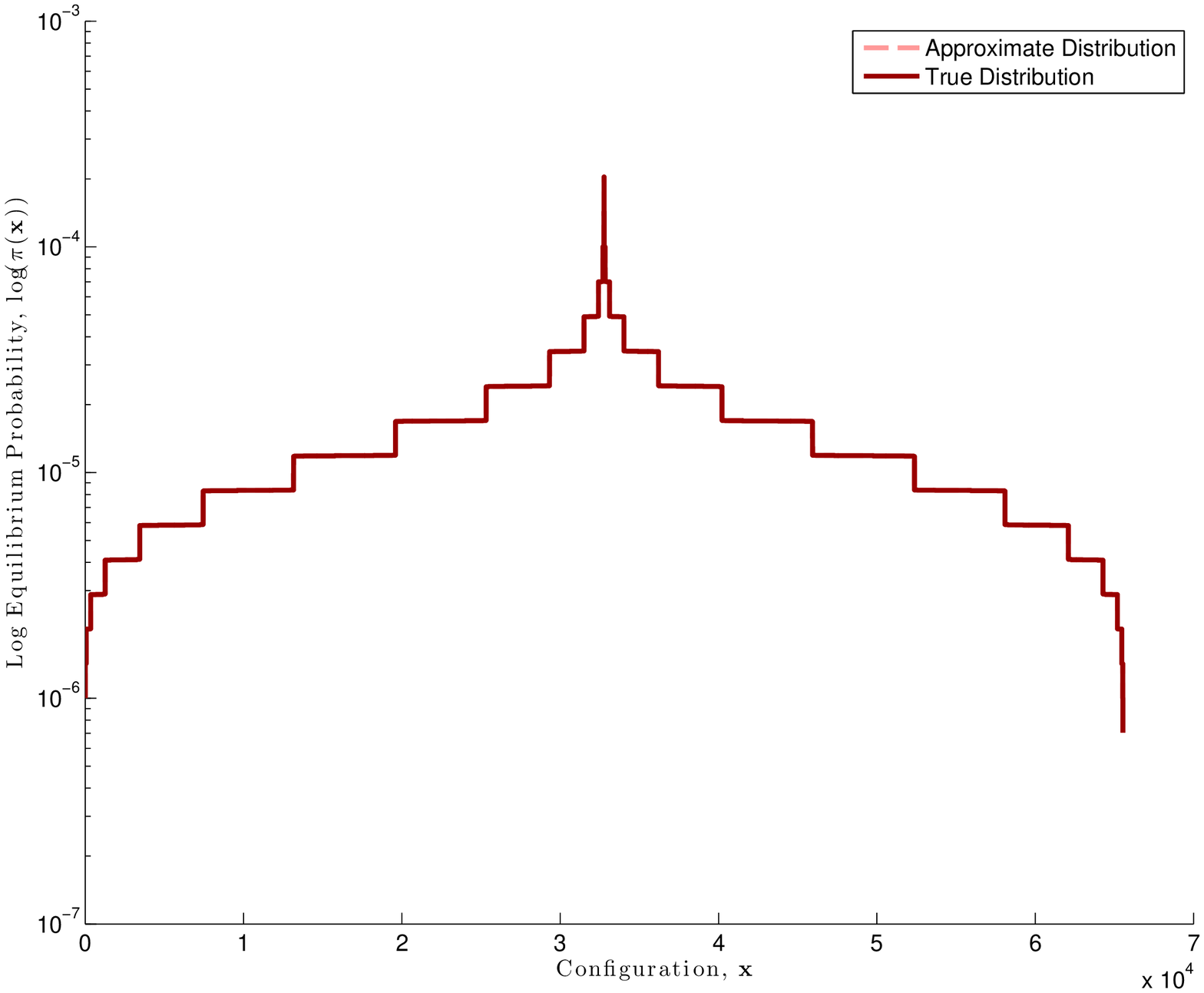}
                \caption{Network B (TVD = $1.1236 \times 10^{-6}$)}
                \label{fig:16CycleTVDeq2}
        \end{subfigure}%

        \begin{subfigure}[b]{0.45\textwidth}
                \centering
                \includegraphics[width=\textwidth]{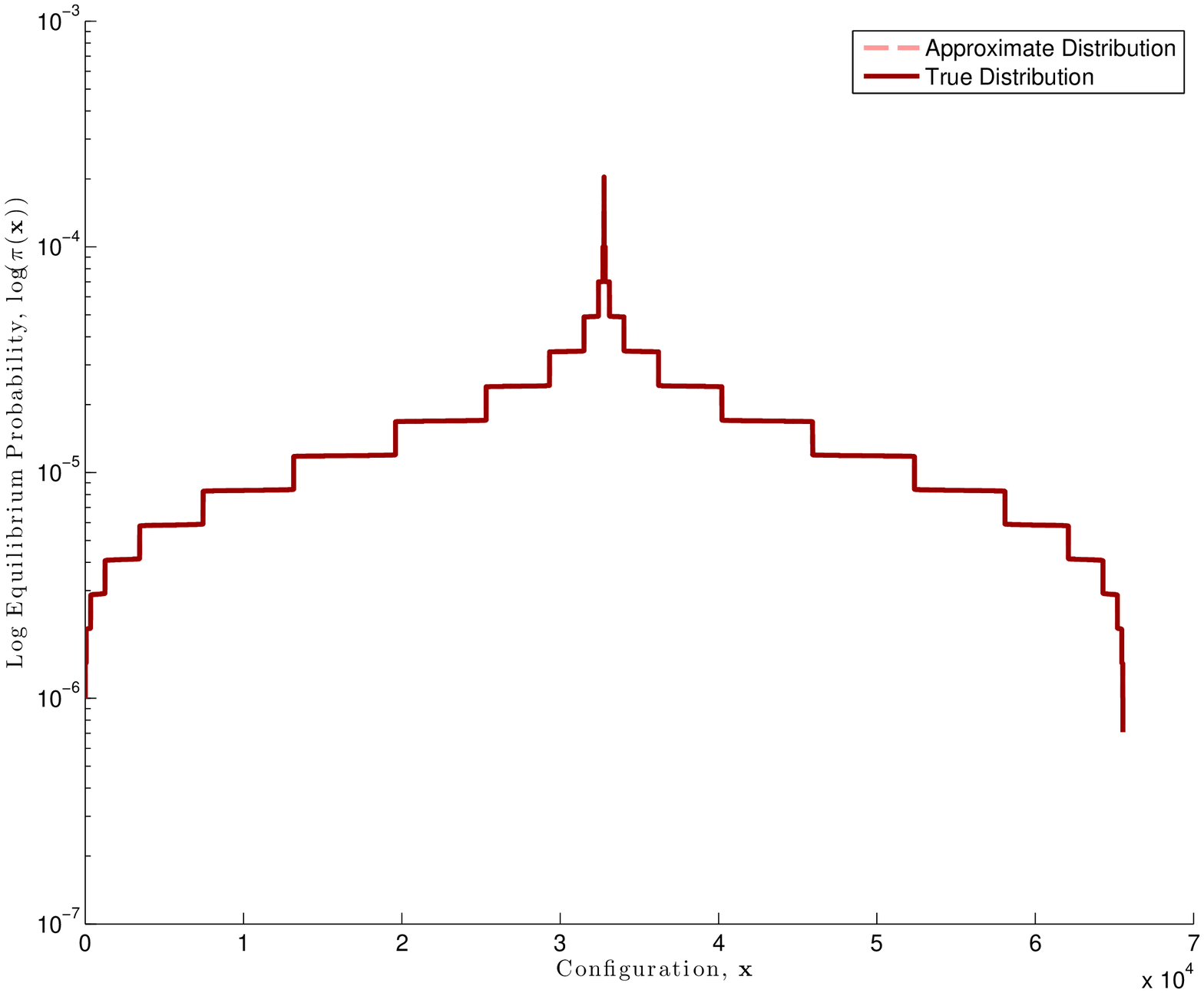}
                \caption{Network C (TVD = $3.2392\times 10^{-6}$)}
                \label{fig:16ERaTVDeq2}
        \end{subfigure}%
            ~ %add desired spacing between images, e. g. ~, \quad, \qquad etc.
          %(or a blank line to force the subfigure onto a new line)
         \begin{subfigure}[b]{0.45\textwidth}
                \centering
                \includegraphics[width=\textwidth]{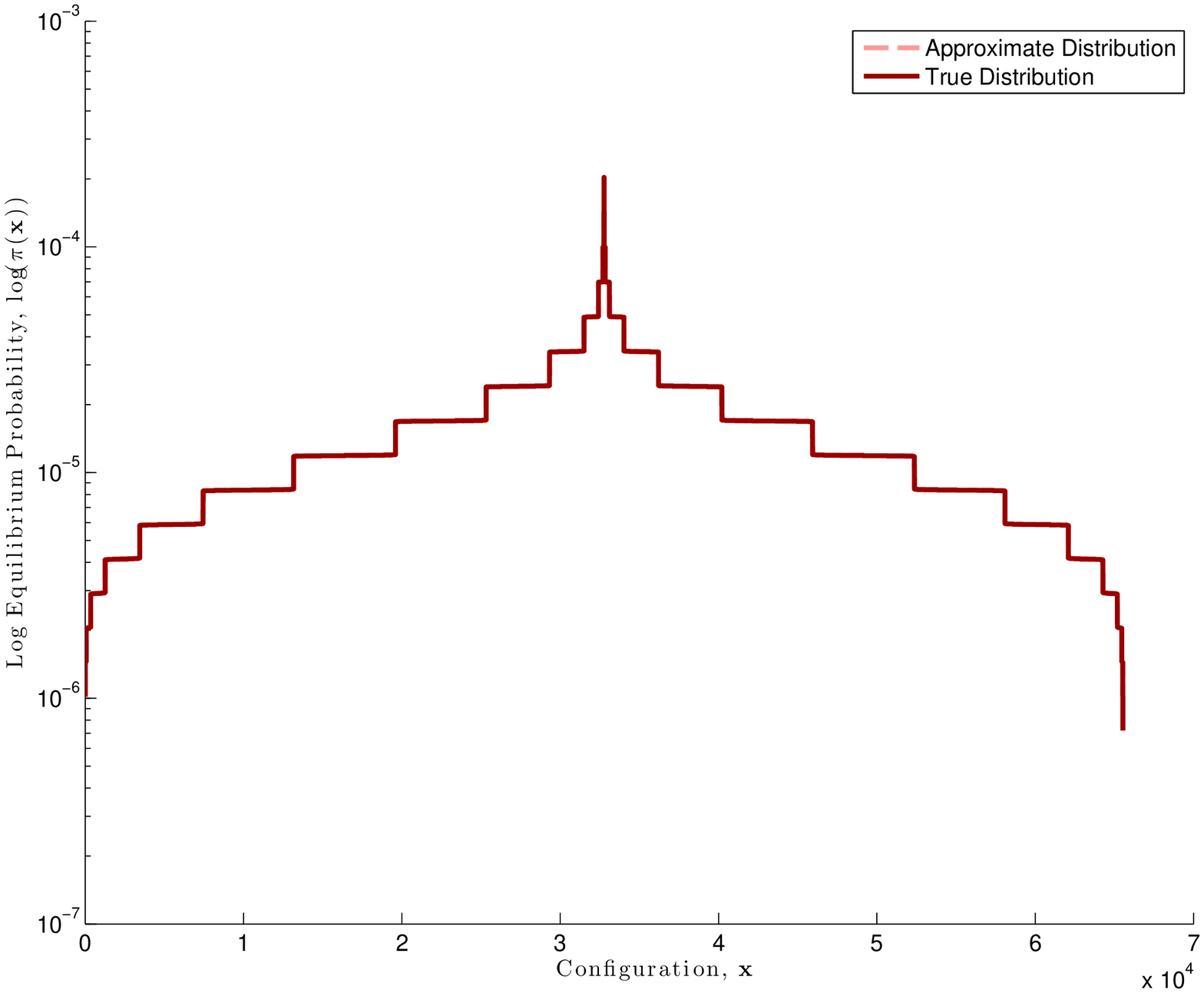}
                \caption{Network D (TVD = $5.0208\times 10^{-6}$)}
                \label{fig:16ERbTVDeq2}
        \end{subfigure}%
       
        \begin{subfigure}[b]{0.45\textwidth}
                \centering
                \includegraphics[width=\textwidth]{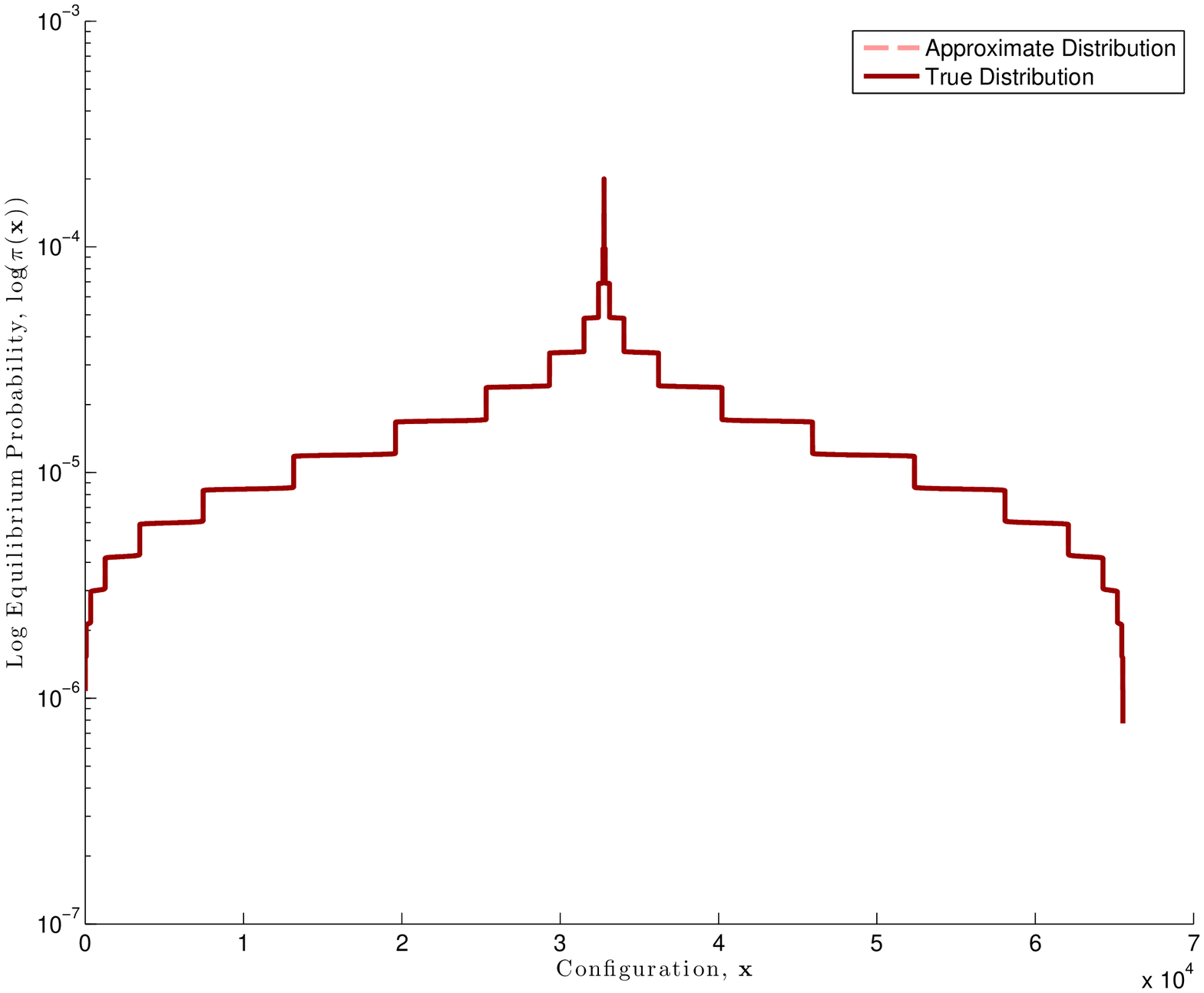}
                \caption{Network E (TVD = $2.7487\times 10^{-5}$)}
                \label{fig:16WSaTVDeq2}
        \end{subfigure}%
         ~ %add desired spacing between images, e. g. ~, \quad, \qquad etc.
          %(or a blank line to force the subfigure onto a new line)
        \begin{subfigure}[b]{0.45\textwidth}
                \centering
                \includegraphics[width=\textwidth]{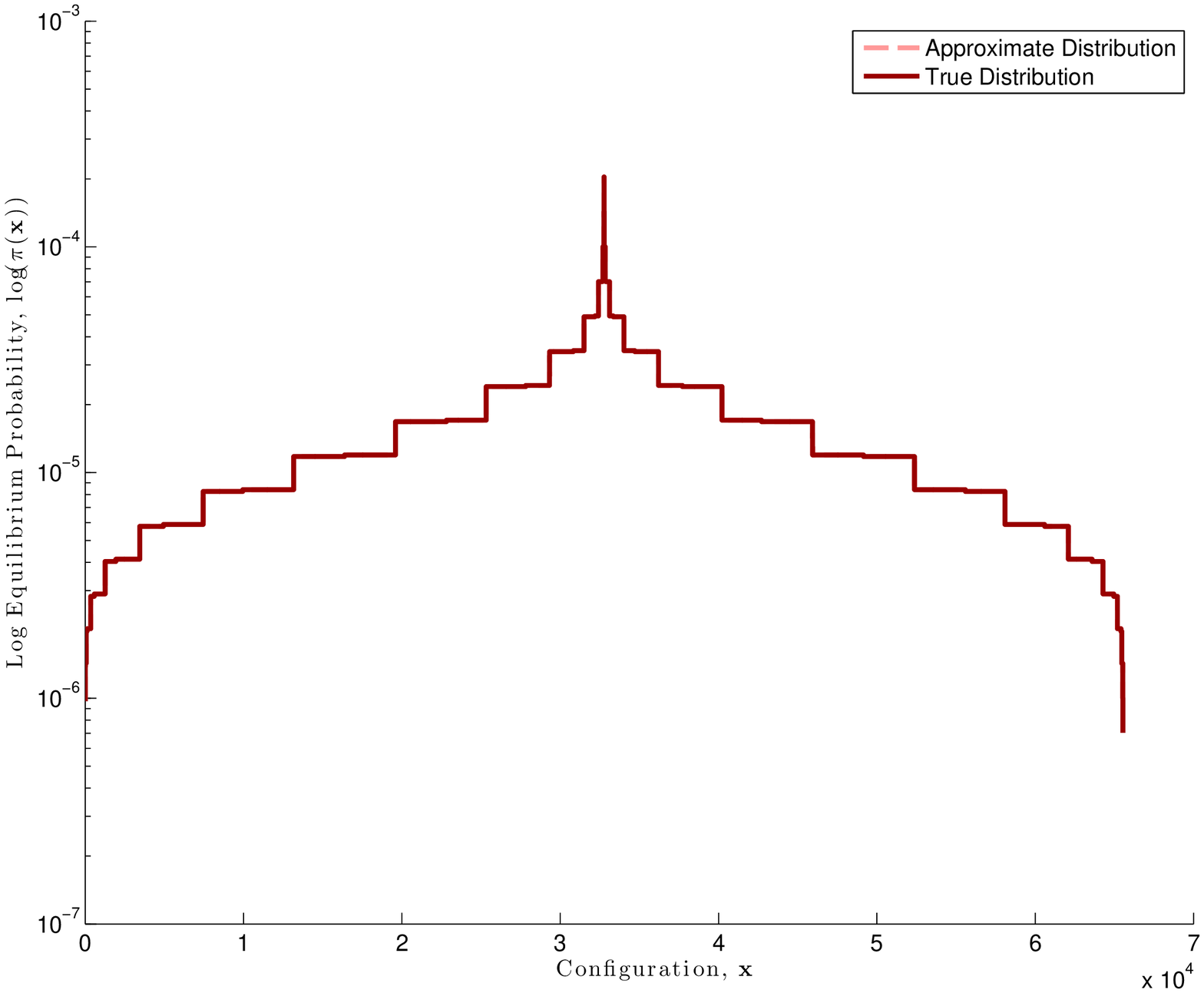}
                \caption{Network F (TVD = $2.2729\times 10^{-5}$)}
                \label{fig:16StarTVDeq2}
        \end{subfigure}%

	\caption{$\pi_e(\mathbf{x})$ and $\pi_{\scriptsize\textrm{approx}}(\mathbf{x})$ when $\frac{\lambda}{\mu} = 0.7, \Delta = 0.0023$}\label{fig:eqvsplotsinside}
 \end{figure}	
 
% ; although the configuration with the maximum equilibrium probability may differ between the two distributions. We will discuss this in more detail in the next section.

We also considered the case when the condition of Theorem~\ref{theorem:scaledcontactapprox} is not satisfied. Figure~\ref{fig:eqvsplotsoutside} shows $\pi_e(\mathbf{x})$ and $\pi_{\scriptsize\textrm{approx}}(\mathbf{x})$ and their corresponding TVD for parameters $\frac{\lambda}{\mu} = 0.7, \Delta = 1.0496$. In this case, the value of $\Delta$ is above the upperbound $\Delta_u$ for all the networks in Figure~\ref{fig:networkexample}. As we expect, TVD is larger when compared to the TVD for processes with $\Delta$ well below $\Delta_u$ (see Figure~\ref{fig:eqvsplotsinside}). Again, for the same infection and healing rates, different networks have different TVD values.

For Networks A and B, the deviation between the true and approximate equilibrium distribution, $0.1073$ and $0.1186$,  respectively, is relatively small. We see from Figure~\ref{fig:16PathTVDeq} and Figure~\ref{fig:16CycleTVDeq} that many configurations have similar equilibrium probability for both distributions. For Network D and E, TVDs are $0.4798$ and $0.7898$, respectively. These figures show that the approximate distribution tends to overestimate the probability of highly probable configurations but underestimates the low probability configurations. However, there is good correlation between the relative ordering of configurations in both distributions; configurations that are highly probable in $\pi_e(\mathbf{x})$ are also highly probable in $\pi_{\scriptsize\textrm{approx}}(\mathbf{x})$.

%Comparing the distributions in Figure~\ref{fig:eqvsplotsinside} and Figure~\ref{fig:eqvsplotsoutside}, we can see that as $\Delta$ increase, the range of the probabilities in the distributions also increases. When $\Delta = 0.0023$, the difference between the maximum and minimum probability is a couple of decibels (recall that the Y-axis is plotted in $\log$ scale). When $\Delta = 1.0496$, the difference between the maximum and minimum probability can be as large as $19$ decibel. 
%When $\Delta$ is small, the endogenous infection rate, $\beta_e = \frac{\lambda}{\mu}\Delta$, has a smaller effect on the equilibrium distribution than the exogenous infection rate, $\lambda$, and the healing rate, $\mu$. Since $\lambda$ and $\mu$ are the same for all the agents in the network, an extended contact process with $\Delta = 0$, will have a equilibrium distribution that is the uniform distribution.

\begin{figure}[htbp]      
      \centering
             \begin{subfigure}[b]{0.45\textwidth}
                \centering
                \includegraphics[width=\textwidth]{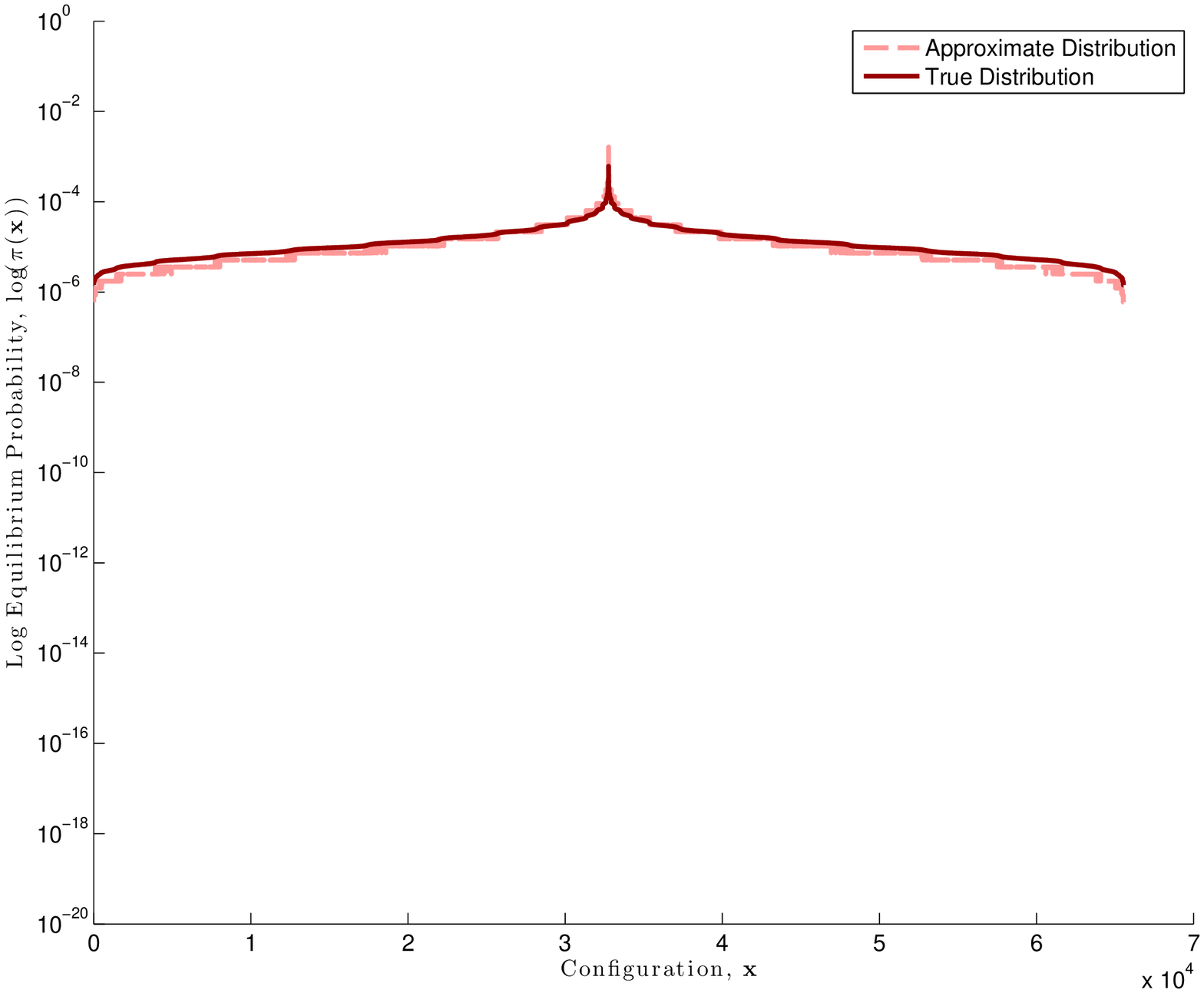}
                \caption{Network A (TVD = $0.1073$)}
                \label{fig:16PathTVDeq}
        \end{subfigure}%
        ~ %add desired spacing between images, e. g. ~, \quad, \qquad etc.
          %(or a blank line to force the subfigure onto a new line)
        \begin{subfigure}[b]{0.45\textwidth}
                \centering
                \includegraphics[width=\textwidth]{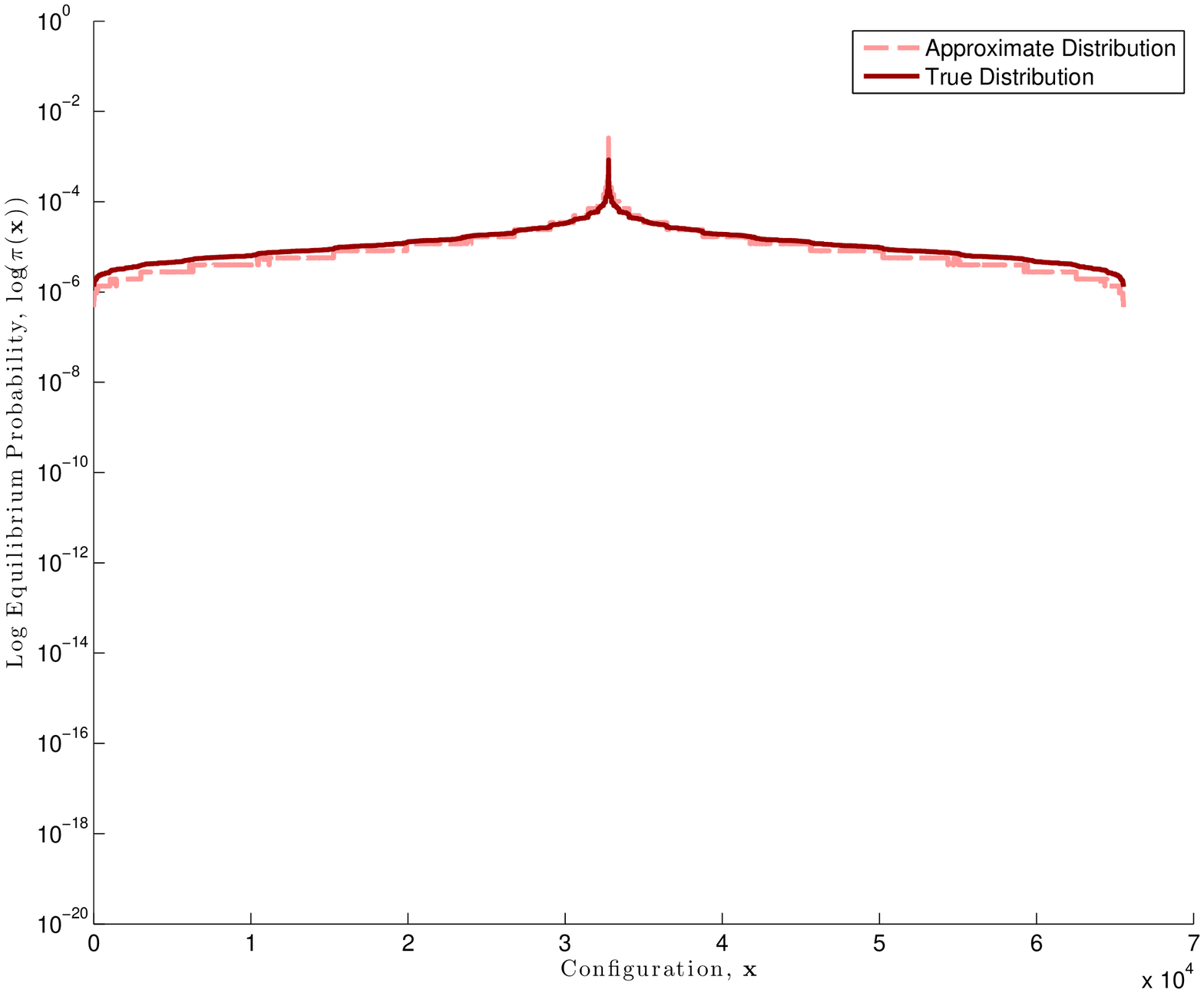}
                \caption{Network B (TVD = $0.1186$)}
                \label{fig:16CycleTVDeq}
        \end{subfigure}%

        \begin{subfigure}[b]{0.45\textwidth}
                \centering
                \includegraphics[width=\textwidth]{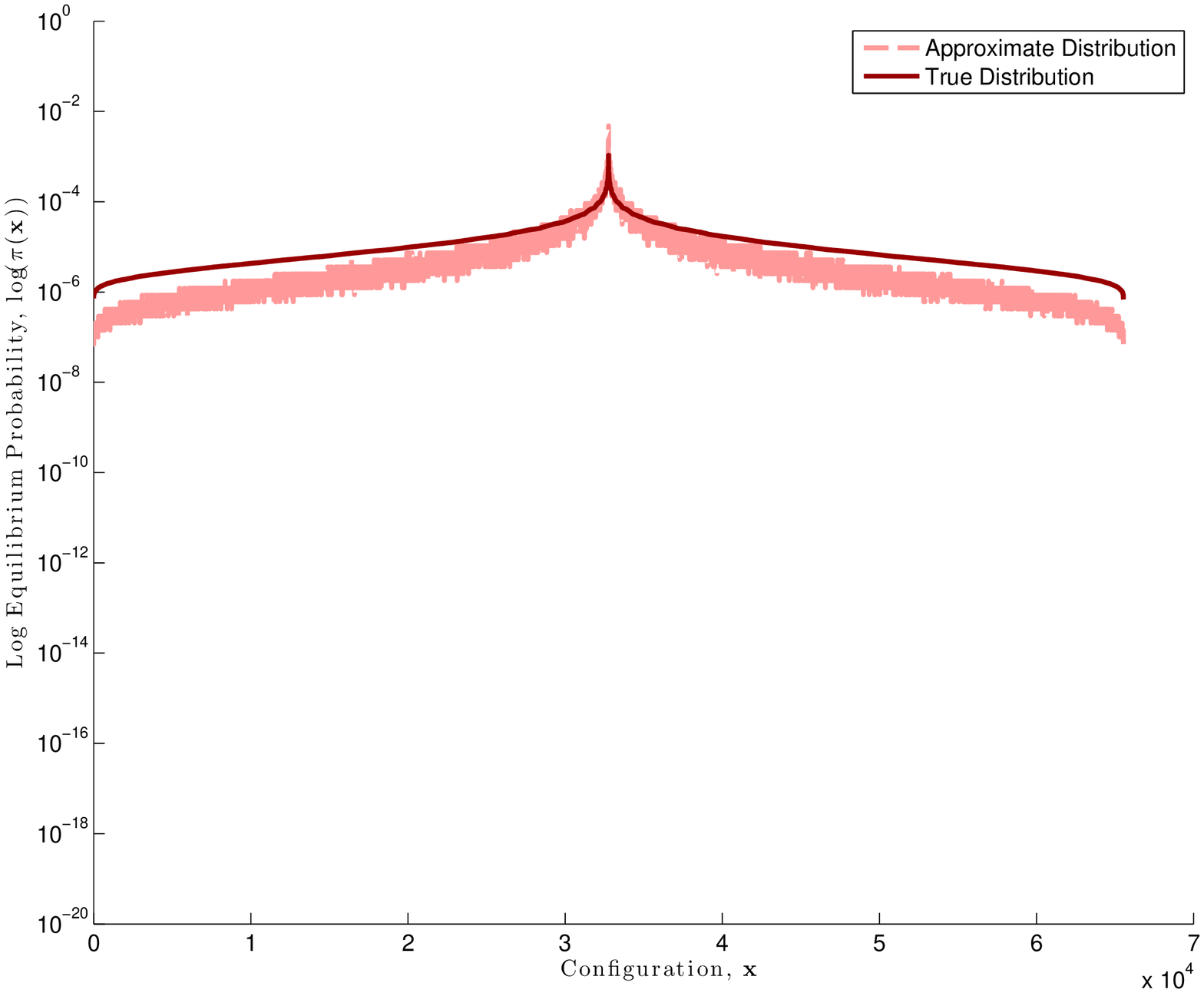}
                \caption{Network C (TVD = $0.294$)}
                \label{fig:16ERaTVDeq}
        \end{subfigure}%
            ~ %add desired spacing between images, e. g. ~, \quad, \qquad etc.
          %(or a blank line to force the subfigure onto a new line)
         \begin{subfigure}[b]{0.45\textwidth}
                \centering
                \includegraphics[width=\textwidth]{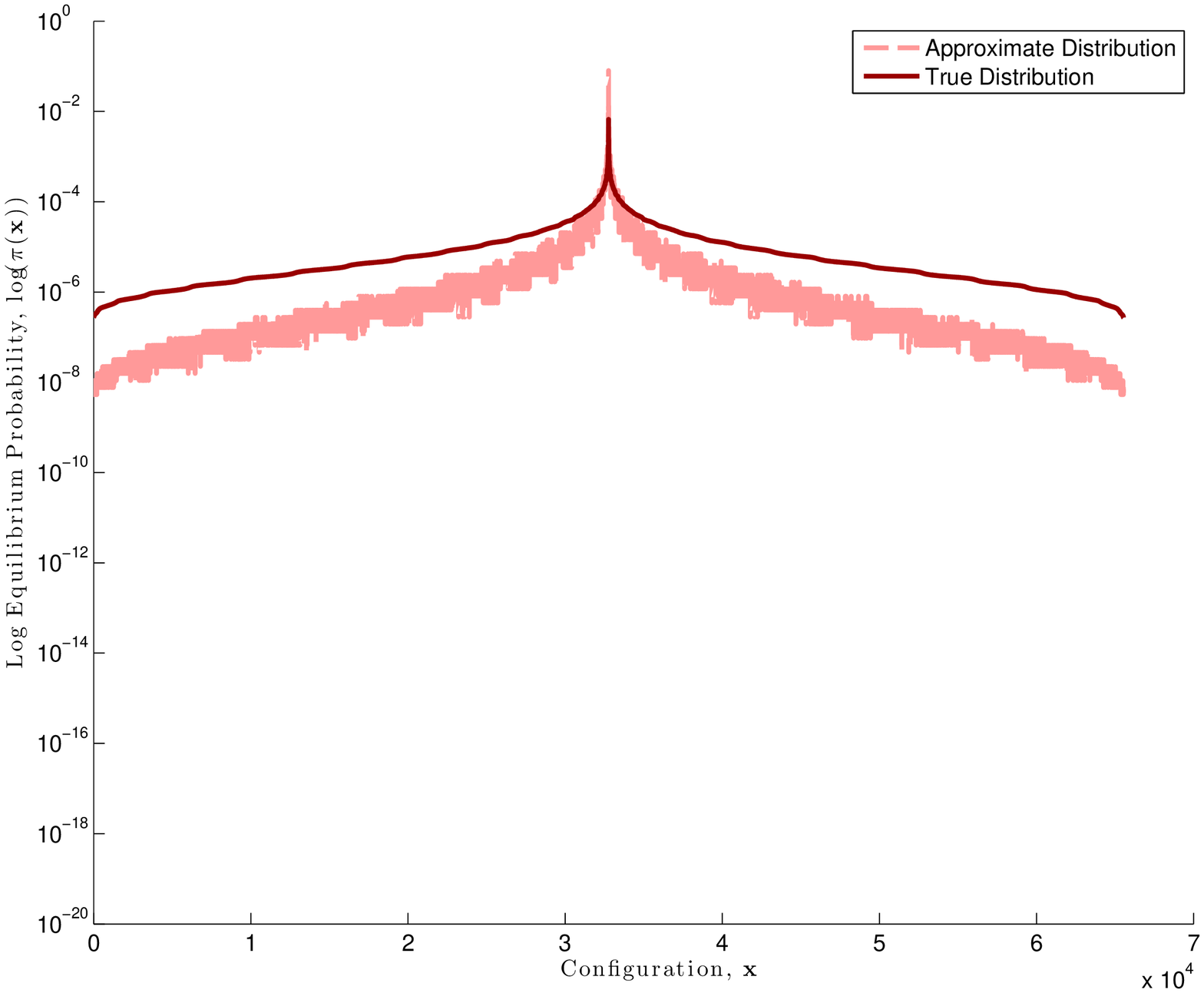}
                \caption{Network D (TVD = $0.4798$)}
                \label{fig:16ERbTVDeq}
        \end{subfigure}%
       
        \begin{subfigure}[b]{0.45\textwidth}
                \centering
                \includegraphics[width=\textwidth]{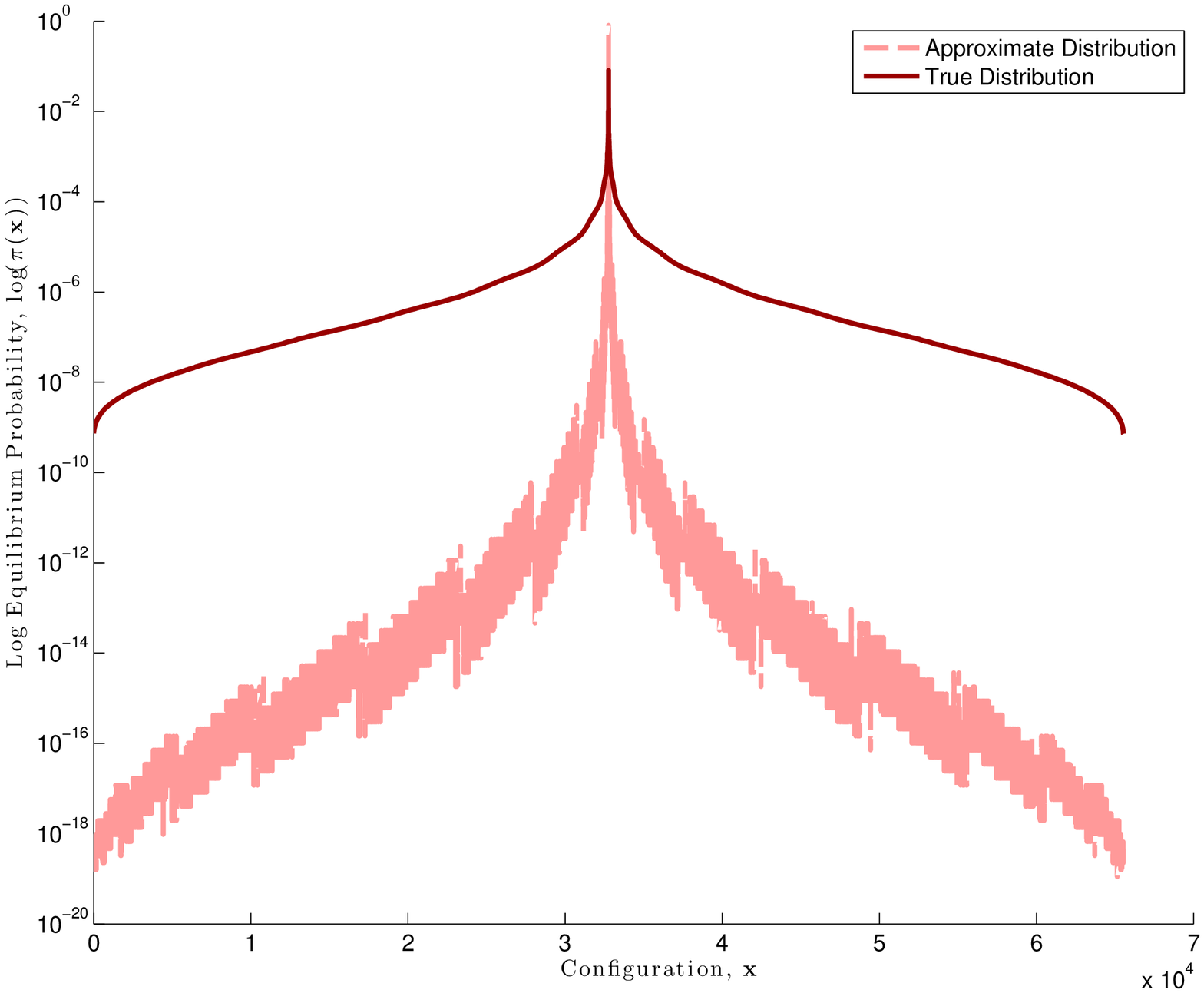}
                \caption{Network E (TVD = $0.7898$)}
                \label{fig:16WSaTVDeq}
        \end{subfigure}%
         ~ %add desired spacing between images, e. g. ~, \quad, \qquad etc.
          %(or a blank line to force the subfigure onto a new line)
        \begin{subfigure}[b]{0.45\textwidth}
                \centering
                \includegraphics[width=\textwidth]{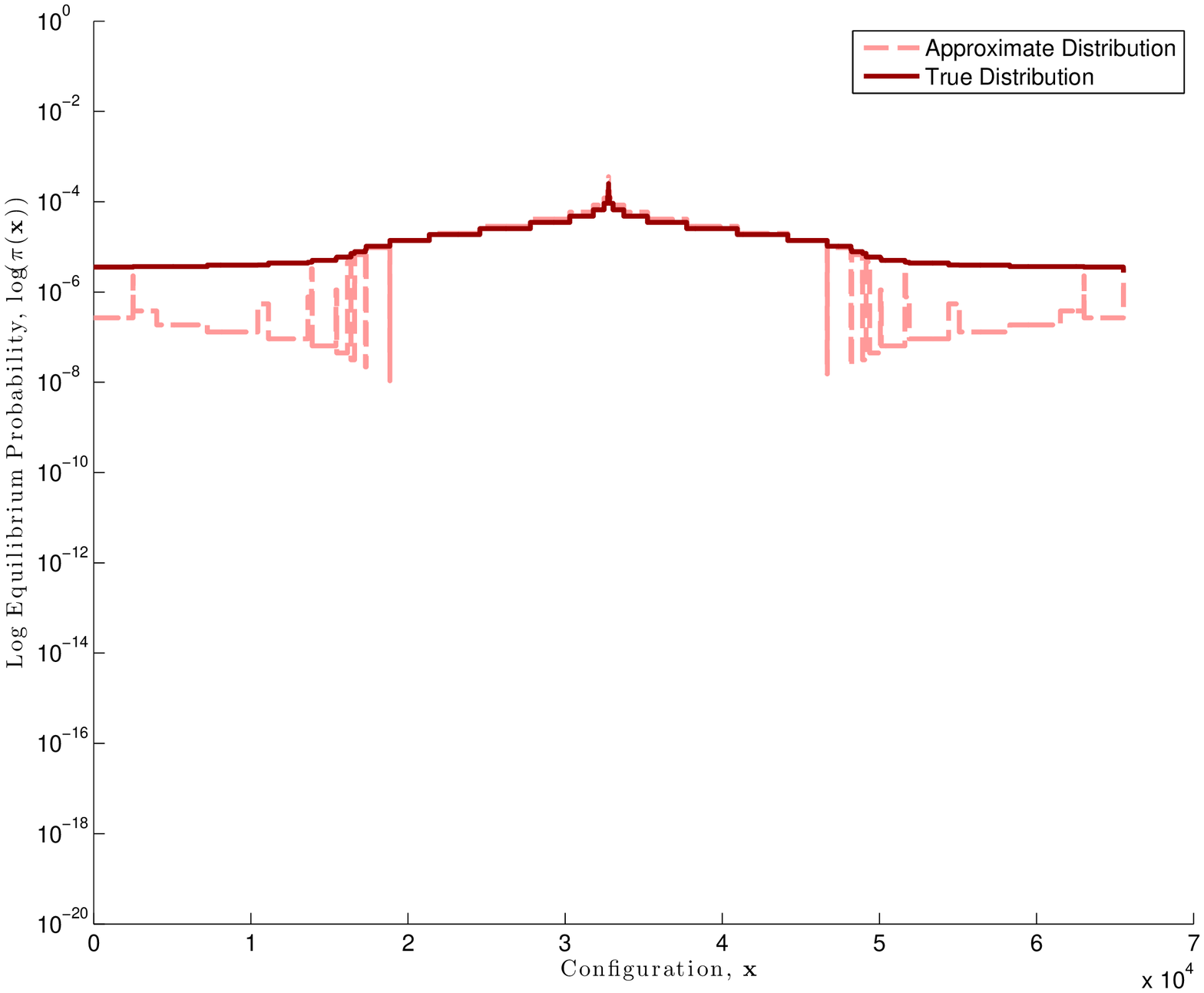}
                \caption{Network F (TVD = $0.1330$)}
                \label{fig:16StarTVDeq}
        \end{subfigure}%

	\caption{$\pi_e(\mathbf{x})$ and $\pi_{\scriptsize\textrm{approx}}(\mathbf{x})$ when $\frac{\lambda}{\mu} = 0.7, \Delta = 1.0496$}\label{fig:eqvsplotsoutside}
 \end{figure}	

\subsection{Results: TVD vs. $\Delta$ and $\frac{\lambda}{\mu}$}
We consider here the approximation of $\pi_e(\mathbf{x})$ by $\pi_{\scriptsize\textrm{approx}}(\mathbf{x})$ as the infection and healing rates change. Like the scaled SIS process, we can interpret the extended contact process as consisting of a \emph{topology-independent} process parameterized by $\frac{\lambda}{\mu}$\textemdash it is topology independent because the exogenous infection rate $\lambda$ and the healing rate $\mu$ are identical for all the agents in the network\textemdash and a \emph{topology-dependent} process parameterized by the endogenous infection rate $\beta$. When $\frac{\lambda}{\mu}$ is large, the topology-independent process exerts a larger effect on the equilibrium behavior of the network processes.

Figure~\ref{fig:networkexample1TVD} shows the TVD between the equilibrium distribution, $\pi_e(\mathbf{x})$, of the extended contact process and the approximate distribution, $\pi_{\scriptsize\textrm{approx}}(\mathbf{x})$, for different $\frac{\lambda}{\mu}$ and $\Delta$ values both below and well above the threshold $\Delta_u$. Figure~\ref{fig:networkexample1TVD} considers the six different network topologies in Figure~\ref{fig:networkexample}. We plot $\Delta$ along the X-axis and the TVD between $\pi_e(\mathbf{x})$ and $\pi_{\scriptsize\textrm{approx}}(\mathbf{x})$ along the Y-axis. Different curves in each figure correspond to equilibrium distributions with different  $\frac{\lambda}{\mu}$ values.

%Figure~\ref{fig:networkexample1TVD} shows how TVD changes with $\Delta$ for the six different network topologies (see Figure~\ref{fig:networkexample}). The X-axis shows the $\Delta$ values ranging from below to above the upperbound for each network. The Y-axis shows the TVD between $\pi_e(\mathbf{x})$ and $\pi_{\scriptsize\textrm{approx}}(\mathbf{x})$. Further, we investigated the affect of $\frac{\lambda}{\mu}$ on the approximation. The different plots in each figure correspond to different $\frac{\lambda}{\mu}$ values. We expect larger $\frac{\lambda}{\mu}$ to correspond to smaller TVD. 

For the same $\Delta$ value, we observe that larger $\frac{\lambda}{\mu}$ correspond to smaller TVD. This holds for all the networks. Also, as we expect, for $\Delta << \Delta_u$, TVD is negligible for all the networks. The deviation between the true equilibrium distribution and the approximation increases as $\Delta$ moves toward $\Delta_u$; the rate of this increase differs for different topologies. Surprisingly, this increase is \emph{not} monotonic for \emph{all} network topologies. As $\Delta$ increases to values larger than $\Delta_u$, TVD may actually decrease. We observe this decrease in TVD for both Network E in Figure~\ref{fig:16WSaTVD} and Network F in Figure~\ref{fig:16StarTVD}. In particular, Network F, which has the largest maximum degree of all the six networks, has relatively small deviation between $\pi_{e}(\mathbf{x})$ and $\pi_{\scriptsize\textrm{approx}}(\mathbf{x})$ compare to the other network topologies.

%We can also see from the plots in Figure~\ref{fig:networkexample1TVD} that the ratio of exogenous infection rate to healing rate, $\frac{\lambda}{\mu}$, does not affect TVD when $\Delta << \Delta_u$. This is not the case for larger $\Delta$ values. For the same $\Delta$, larger $\frac{\lambda}{\mu}$ correspond to smaller TVD. 
%
%Looking at the equilibrium distribution of the scaled SIS process~\eqref{eq:equilibriumdistribution}, we interpret the network process as consisting of a \emph{topology-independent} process parameterized by $\frac{\lambda}{\mu}$ \textendash it is topology independent because the exogenous infection rate $\lambda$ and the healing rate $\mu$ are identical for all the agents in the network \textendash and a \emph{topology-dependent} process. When $\frac{\lambda}{\mu}$ is large, the topology-independent process exerts a larger effect on the equilibrium behavior of the network process. Since the topology-independent process is the same for the extended contact process and the scaled SIS process, TVD is smaller for larger $\frac{\lambda}{\mu}$ values.

\begin{figure}[htbp]      
      \centering
             \begin{subfigure}[b]{0.4\textwidth}
                \centering
                \includegraphics[width=\textwidth]{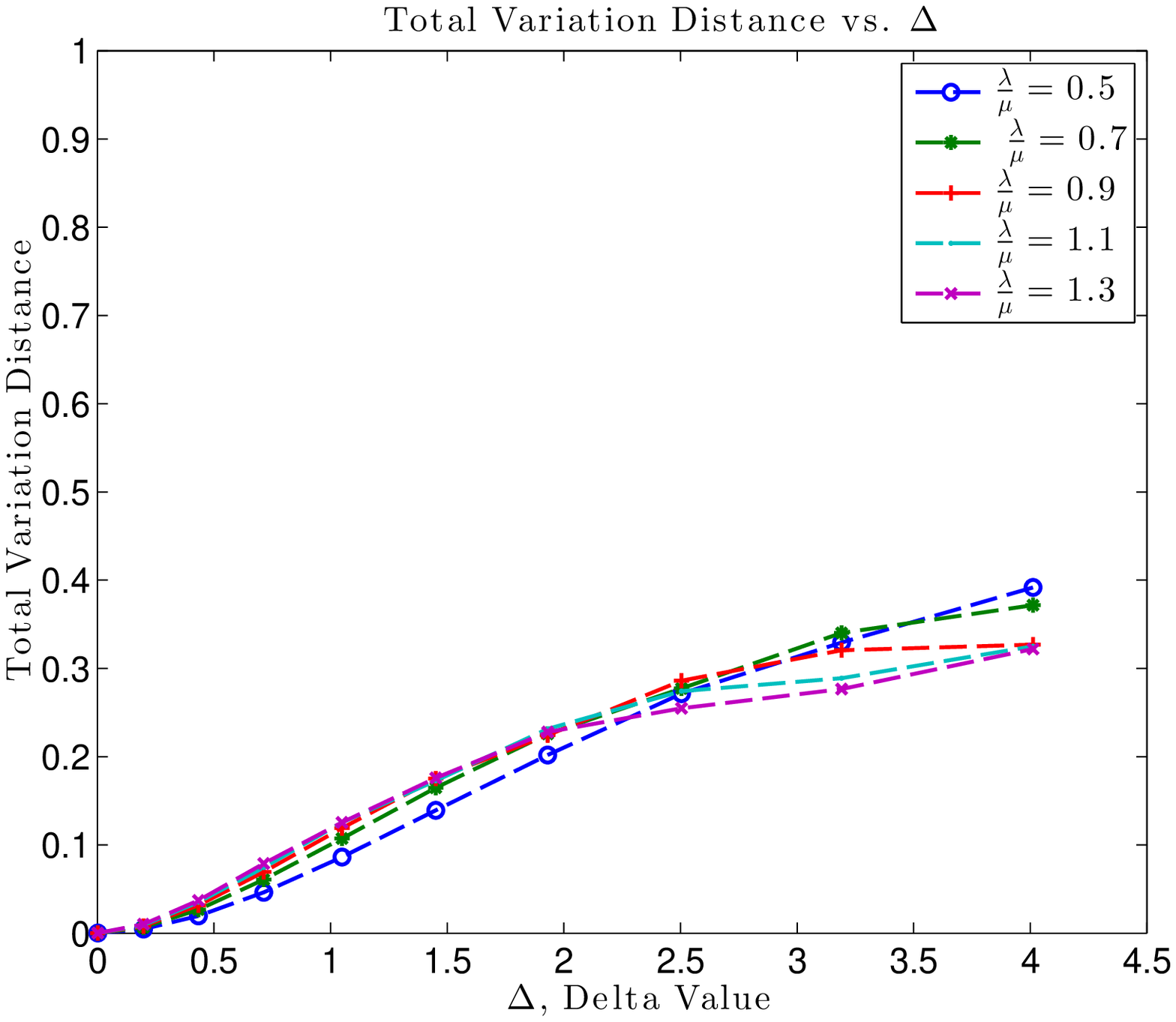}
                \caption{Network A ($\Delta_u = 1$)}
                \label{fig:16PathTVD}
        \end{subfigure}%
        ~ %add desired spacing between images, e. g. ~, \quad, \qquad etc.
          %(or a blank line to force the subfigure onto a new line)
        \begin{subfigure}[b]{0.4\textwidth}
                \centering
                \includegraphics[width=\textwidth]{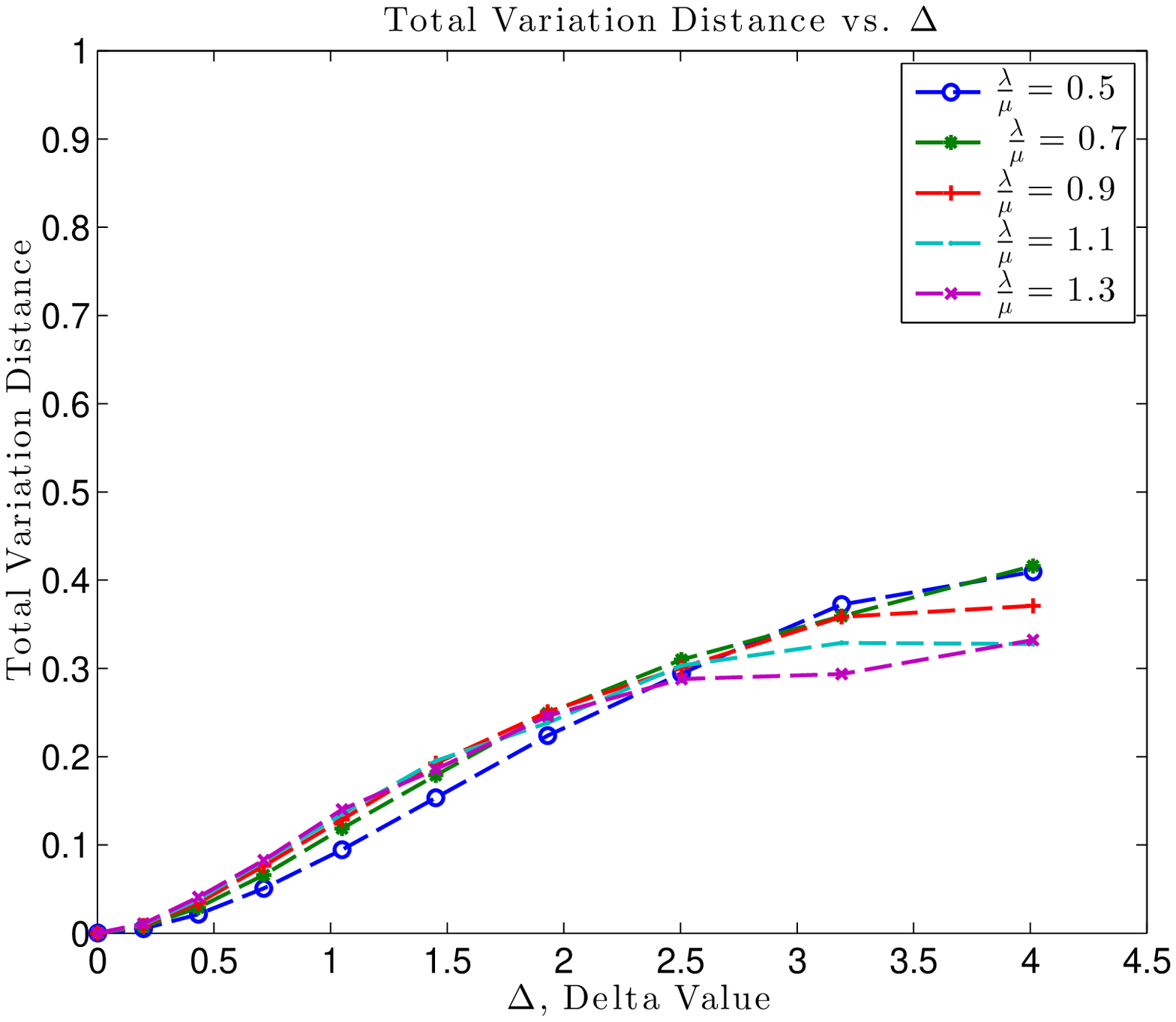}
                \caption{Network B ($\Delta_u = 1$)}
                \label{fig:16CycleTVD}
        \end{subfigure}%

        \begin{subfigure}[b]{0.4\textwidth}
                \centering
                \includegraphics[width=\textwidth]{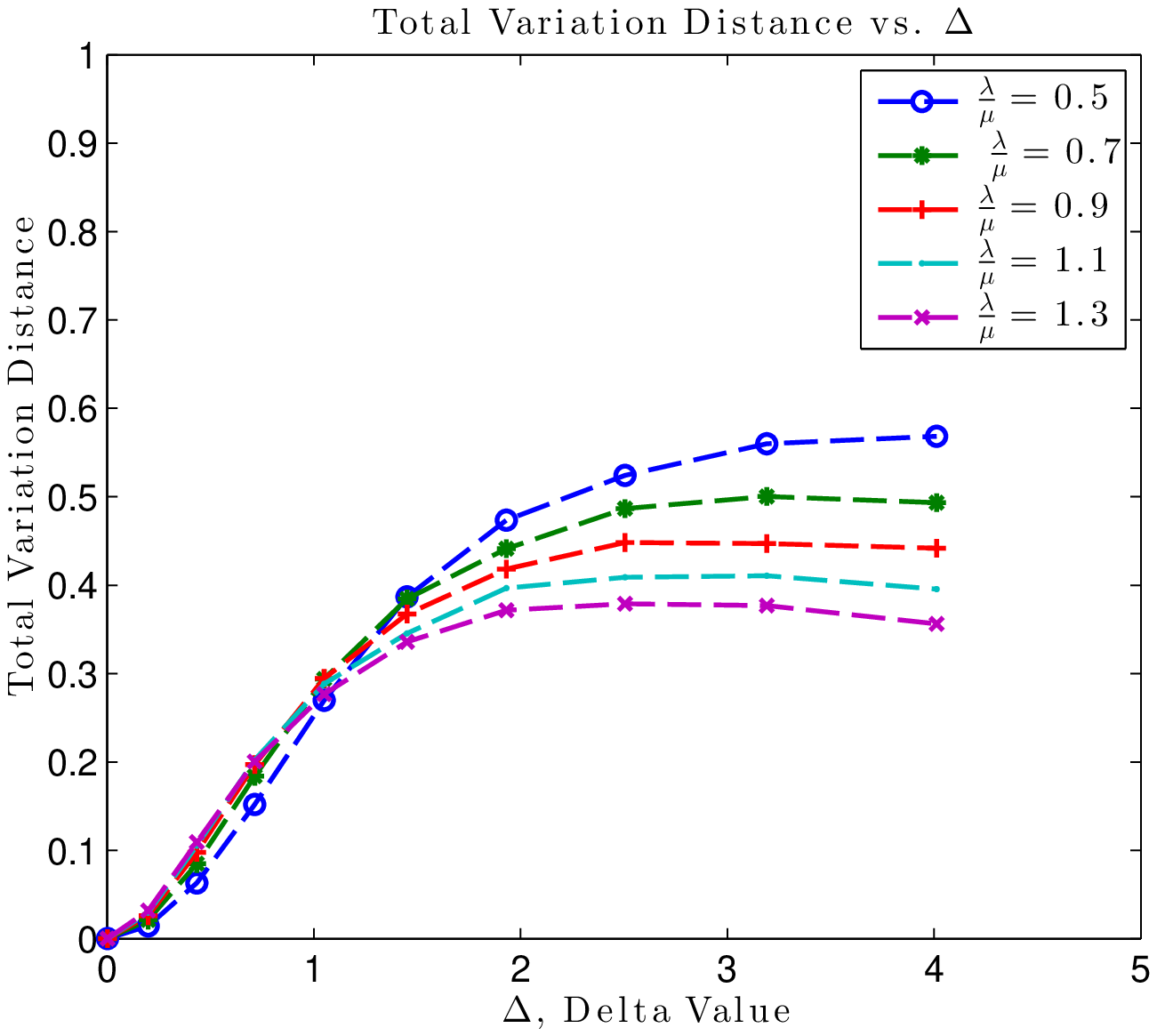}
                \caption{Network C ($\Delta_u = 0.32$)}
                \label{fig:16ERaTVD}
        \end{subfigure}%
            ~ %add desired spacing between images, e. g. ~, \quad, \qquad etc.
          %(or a blank line to force the subfigure onto a new line)
         \begin{subfigure}[b]{0.4\textwidth}
                \centering
                \includegraphics[width=\textwidth]{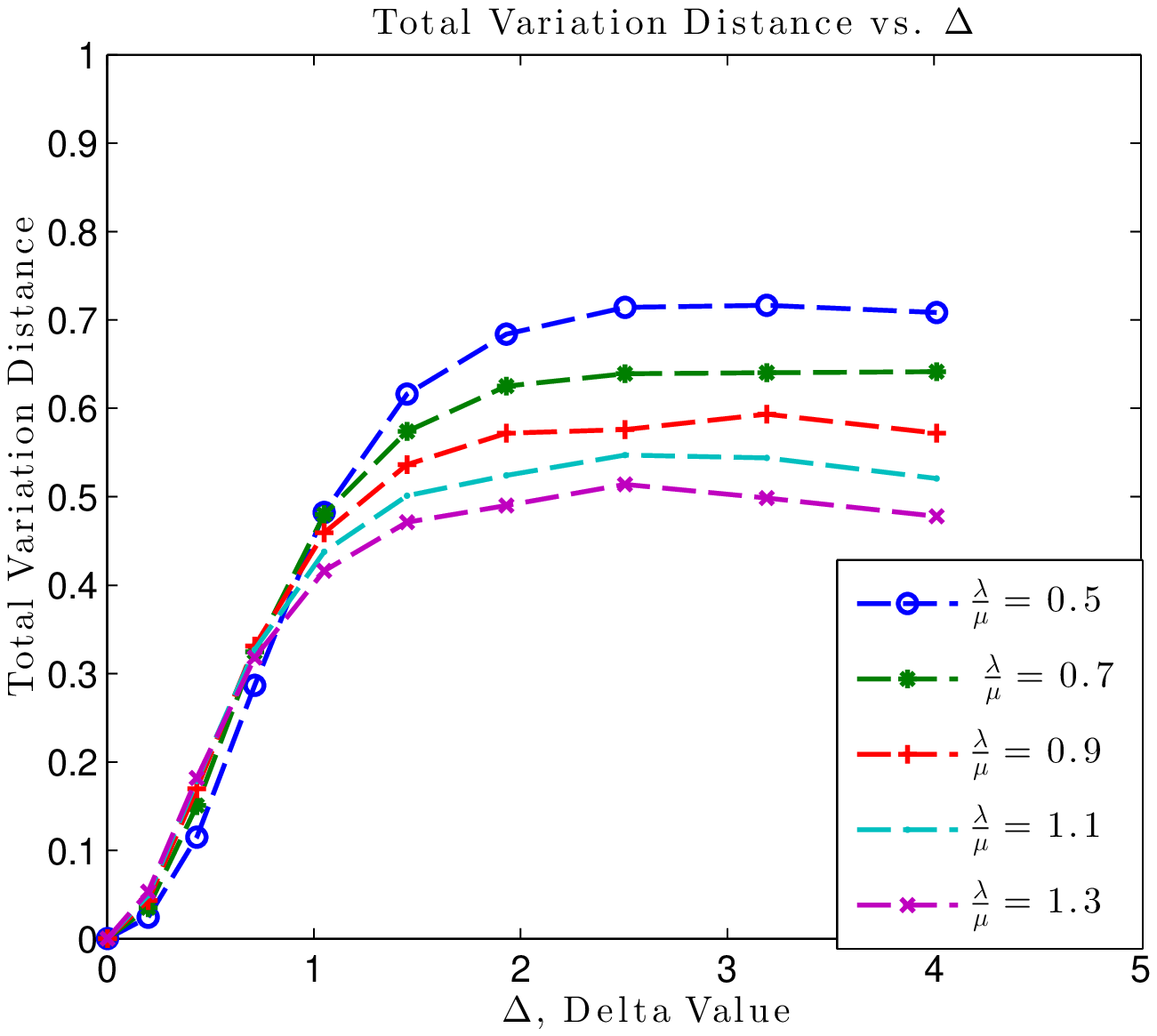}
                \caption{Network D ($\Delta_u = 0.32$)}
                \label{fig:16ERbTVD}
        \end{subfigure}%
       
        \begin{subfigure}[b]{0.4\textwidth}
                \centering
                \includegraphics[width=\textwidth]{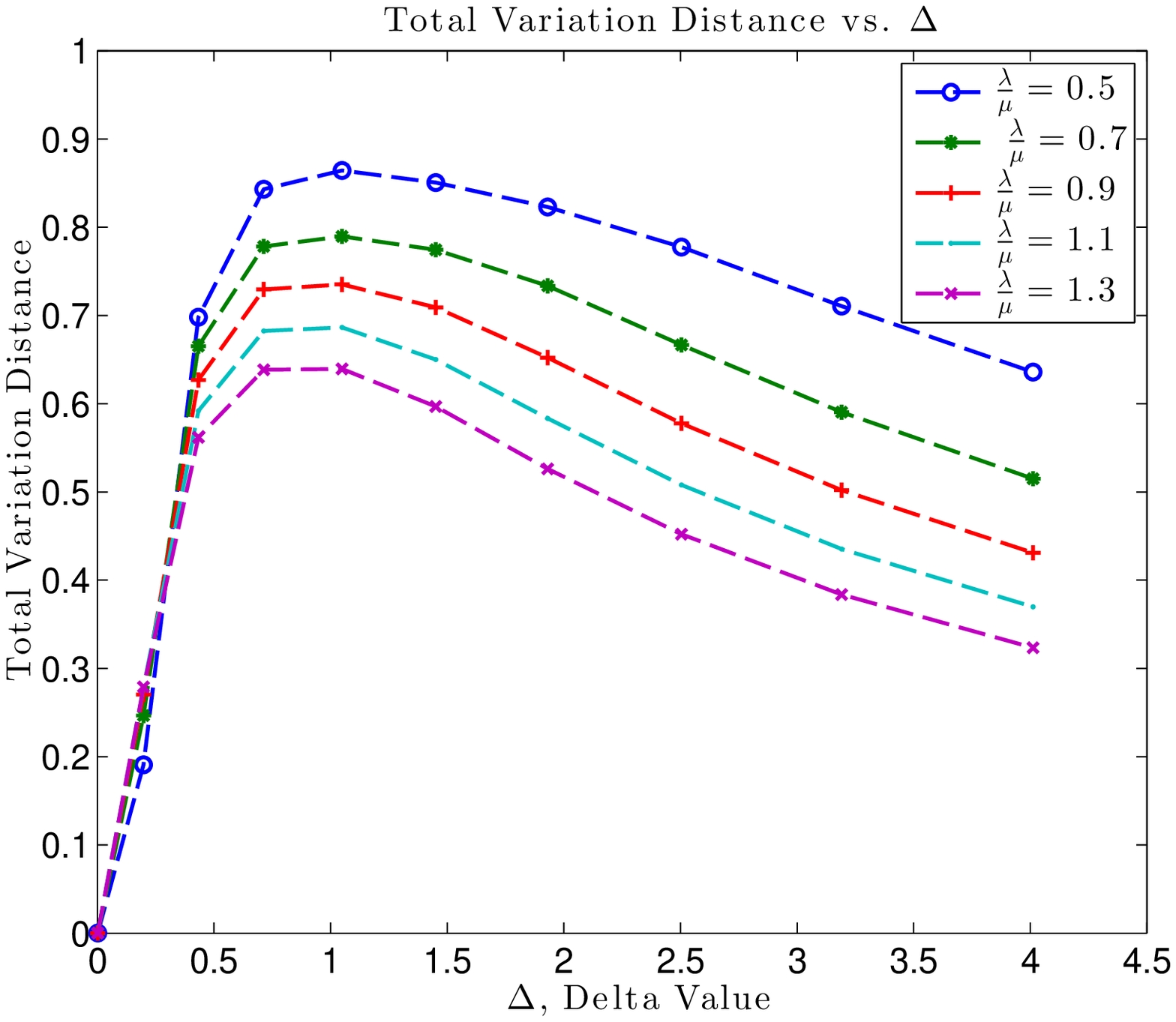}
                \caption{Network E ($\Delta_u = 0.135$)}
                \label{fig:16WSaTVD}
        \end{subfigure}%
         ~ %add desired spacing between images, e. g. ~, \quad, \qquad etc.
          %(or a blank line to force the subfigure onto a new line)
        \begin{subfigure}[b]{0.4\textwidth}
                \centering
                \includegraphics[width=\textwidth]{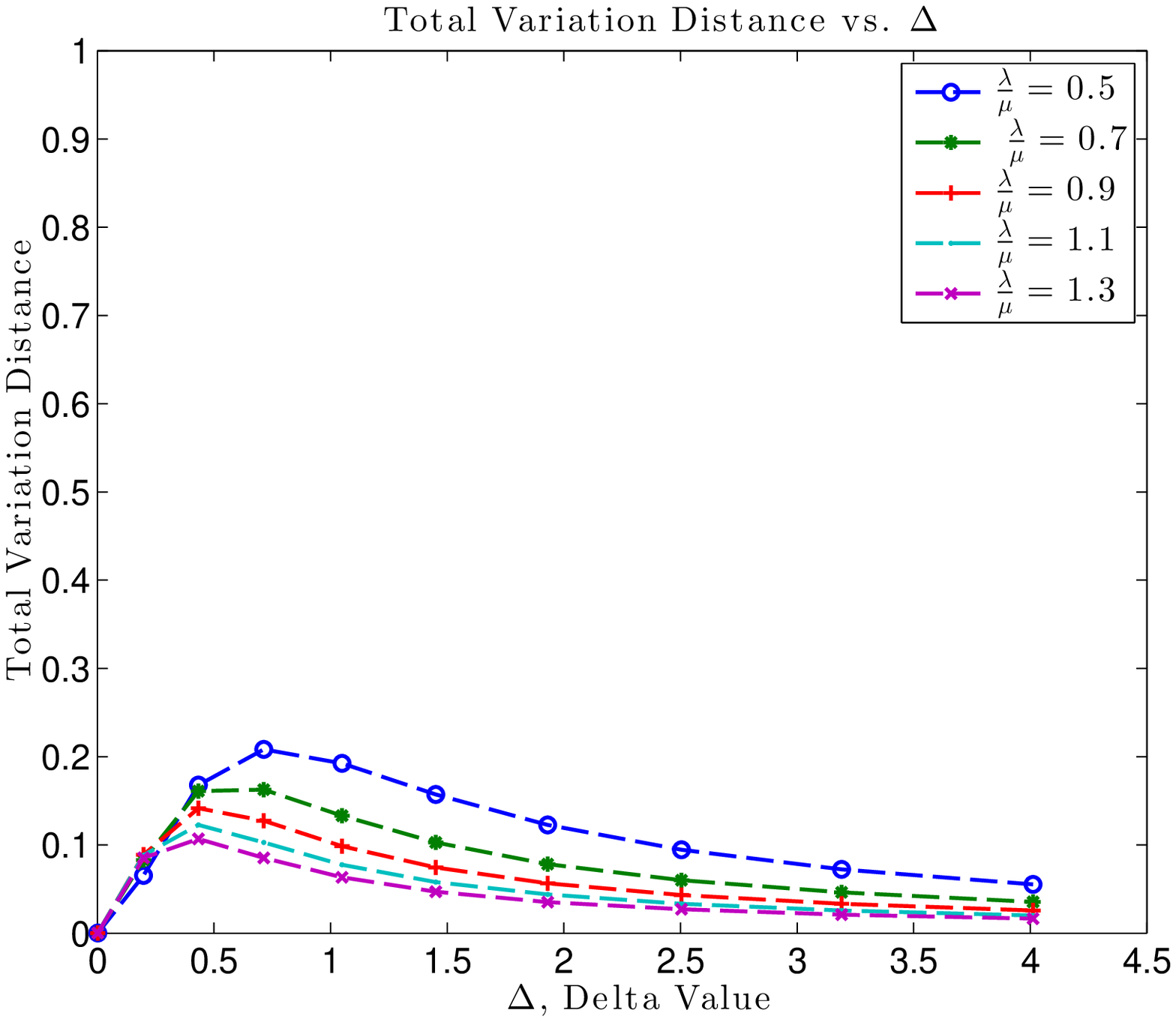}
                \caption{Network F ($\Delta_u = 0.098$)}
                \label{fig:16StarTVD}
        \end{subfigure}%

	\caption{Dependence of $\text{TVD}(\pi_{e}, \pi_{\scriptsize\textrm{approx}})$ on $\Delta$}\label{fig:networkexample1TVD}
 \end{figure}

 %%%%%%%%%%%%%%%%%%%%%%%%%%%%
 %%%%%%%%%%%%%%%%%%%%%%%%%%% 
\section{Most-Probable Configuration}\label{sec:vulnerablesub}

We showed in Figures~\ref{fig:eqvsplotsinside} and~\ref{fig:eqvsplotsoutside} that, for a range of the dynamic parameters, the equilibrium distribution $\pi_e(\mathbf{x})$ of the extended contact process is well approximated by the equilibrium distribution $\pi_{\scriptsize\textrm{approx}}(\mathbf{x})$ of the scaled SIS process. In this section, we consider the problem of finding the most-probable configuration (i.e., configuration with the maximum equilibrium probability). 

For network processes, the most-probable configuration depends on the infection and healing rates and on the underlying network topology. It identifies the set of agents that are most likely to be infected in the long run. These are the more vulnerable agents in the network. If the most probable configuration is $\mathbf{x}^0 = [0,0, \ldots 0]^T$, all agents are healthy, whereas if the most-probable configuration is $\mathbf{x}^N = [1, 1,\ldots 1]^T$, then all agents are at risk regardless of their location in the network. Except for these two cases, finding which agents are infected in the most-probable configuration is not trivial.  
The most-probable configuration of the extended contact process is
\[
\mathbf{x}^*_e = \arg \max_{\mathbf{x} \in \mathcal{X}} \pi_e(\mathbf{x}),
\]
where $\mathcal{X}$ is the set of all $2^N$ possible network configurations. For the extended contact process, there is no closed-form description of the equilibrium distribution, so, this problem can only be solved numerically, which is infeasible for large-scale networks. 

On the other hand, as stated in Theorem~\ref{theorem:scaledcontactapprox}, when $\Delta << \Delta_u$, the equilibrium distribution, $\pi_e(\mathbf{x})$, of the extended contact process is well approximated by the equilibrium distribution, $\pi_{\scriptsize\textrm{approx}}(\mathbf{x})$, of a scaled SIS process with endogenous infection rate $\beta_s = 1+ \Delta$. We proved in \cite{JZhangJournal2} that, in this case, the most-probable configuration of the scaled SIS process can be solved in \emph{polynomial-time} because it corresponds to solving for the minimum of a submodular function. It is therefore possible to identify vulnerable network substructures for networks with hundreds and thousands of agents.

%We have shown that we can find, on a standard desktop, the most-probable configuration for the scaled SIS process for networks with $4941$ agents, meaning that there are $2^{4941}$ possible network configurations.

From the simulation results in the previous section, we now compare the most-probable configuration of the extended contact process with the most-probable configuration of the approximating scaled SIS process. Table~\ref{table:mpc2} lists for the six networks in Figure~\ref{fig:networkexample}, the TVD between the distributions, the corresponding most-probable configurations, and the probabilities of the most-probable configuration for $\frac{\lambda}{\mu} = 0.9744$ and $\Delta = 0.02$. We observe that when the condition of Theorem~\ref{theorem:scaledcontactapprox} is satisfied: 
\begin{enumerate}
\item the most-probable configuration, $\mathbf{x}^*_e$, of the extended contact process is the approximately the same as the most-probable configuration, $\mathbf{x}^*_{\scriptsize\textrm{approx}}$, of the scaled SIS process;
\item the probability of the most-probable configuration, $\pi_e(\mathbf{x}^*_e)$, of the extended contact process is the same as the probability of the most-probable configuration, $\pi_{\scriptsize\textrm{approx}}(\mathbf{x}^*_{\scriptsize\textrm{approx}})$, of the scaled SIS process.
\end{enumerate}

% \begin{table}[h]
%\begin{tabular}{|c | c | c | c | c | c | }
%\hline
%& TVD$(\pi_e,\pi_{\scriptsize\textrm{approx}})$ &  $\mathbf{x}^*_e$& $\mathbf{x}^*_{\scriptsize\textrm{approx}}$ & $\pi_e(\mathbf{x}^*_{e})$& $\pi_{\scriptsize\textrm{approx}}(\mathbf{x}^*_{\scriptsize\textrm{approx}})$ \\ \hline
%
% 
%Network A & $1.4031 \times 10^{-6}$ & $\mathbf{x}^0 = [0, 0, \ldots 0]^T$ & $\mathbf{x}^0= [0, 0, \ldots 0]^T $&  $1.5496 \times 10^{-5}$ & $1.5496\times 10^{-5}$ \\ \hline
%
%Network B & $1.5101 \times 10^{-6}$ & $\mathbf{x}^0 = [0, 0, \ldots 0]^T$ & $\mathbf{x}^0= [0, 0, \ldots 0]^T $&  $1.5487 \times 10^{-5}$ & $1.5487\times 10^{-5}$ \\ \hline
%
%Network C & $4.6584 \times 10^{-6}$ & see Figure~\ref{fig:mpcnondegen} & see Figure~\ref{fig:mpcnondegen} &  $1.5468 \times 10^{-5}$ & $1.5468\times 10^{-5}$ \\ \hline
%
%Network D & $7.2088 \times 10^{-6}$ & $\mathbf{x}^N = [1, 1, \ldots 1]^T$ & $\mathbf{x}^N = [1, 1, \ldots 1]^T$&  $1.5631 \times 10^{-5}$ & $1.5633\times 10^{-5}$ \\ \hline
%
%Network E & $4.0832 \times 10^{-5}$ & $\mathbf{x}^N = [1, 1, \ldots 1]^T$ & $\mathbf{x}^N = [1, 1, \ldots 1]^T$&  $1.6675 \times 10^{-5}$ & $1.6689\times 10^{-5}$ \\ \hline
%
%Network F & $3.4172 \times 10^{-5}$ & $\mathbf{x}^0 = [0, 0, \ldots 0]^T$ & $\mathbf{x}^0 = [0, 0, \ldots 0]^T$&  $1.5496 \times 10^{-5}$ & $1.5495\times 10^{-5}$ \\ \hline
%\end{tabular}\caption{TVD for Different Network and $\frac{\lambda}{\mu} = 0.997$ When $\Delta = 0.0023$.}
%\end{table}\label{table:mpc1}

%%%%%%%%%%%%%%%%%%%%
 \begin{table}[h]
\begin{tabular}{|c | c | c | c | c | c | }
\hline
& TVD$(\pi_e,\pi_{\scriptsize\textrm{approx}})$ &  $\mathbf{x}^*_e$& $\mathbf{x}^*_{\scriptsize\textrm{approx}}$ & $\pi_e(\mathbf{x}^*_{e})$& $\pi_{\scriptsize\textrm{approx}}(\mathbf{x}^*_{\scriptsize\textrm{approx}})$ \\ \hline

Network A & $1.0236 \times 10^{-4}$ & $\mathbf{x}^0 = [0, 0, \ldots 0]^T$ & $\mathbf{x}^0= [0, 0, \ldots 0]^T $&  $1.7431 \times 10^{-5}$ & $1.7427\times 10^{-5}$ \\ \hline

Network B & $1.1027 \times 10^{-4}$ & $\mathbf{x}^0 = [0, 0, \ldots 0]^T$ & $\mathbf{x}^0= [0, 0, \ldots 0]^T $&  $1.7347 \times 10^{-5}$ & $1.7342\times 10^{-5}$ \\ \hline

Network C & $3.3806 \times 10^{-4}$ & see Figure~\ref{fig:mpcnondegen} & see Figure~\ref{fig:mpcnondegen} &  $1.7107 \times 10^{-5}$ & $1.7154\times 10^{-5}$ \\ \hline

Network D & $5.2714 \times 10^{-4}$ & $\mathbf{x}^N = [1, 1, \ldots 1]^T$ & $\mathbf{x}^N = [1, 1, \ldots 1]^T$&  $1.8622 \times 10^{-5}$ & $1.8781\times 10^{-5}$ \\ \hline

Network E & $0.0031$ & $\mathbf{x}^N = [1, 1, \ldots 1]^T$ & $\mathbf{x}^N = [1, 1, \ldots 1]^T$&  $3.1073 \times 10^{-5}$ & $3.277\times 10^{-5}$ \\ \hline

Network F & $0.0023$ & $\mathbf{x}^0 = [0, 0, \ldots 0]^T$ & $\mathbf{x}^N = [0, 0, \ldots 0]^T$&  $1.7419 \times 10^{-5}$ & $1.7389\times 10^{-5}$ \\ \hline
\end{tabular}\caption{Most-Probable Configuration when $\frac{\lambda}{\mu} = 0.9744$ and $\Delta = 0.02$.}\label{table:mpc2}
\end{table}

%%%%%%%%%%%%%%%%%%%%

% \begin{figure}[h]      
%      \centering
%               \includegraphics[width=0.4\textwidth]{16ERampc.eps}
%                \caption{Possible Most-Probable Configuration for Network C (infected = black, healthy = white)}
%                \label{fig:mpcnondegen}
% \end{figure}
% 
 %%%%%%%%
 
 \begin{figure}[ht]      
      \centering
          \begin{subfigure}[b]{0.25\textwidth}
                \centering
                \includegraphics[width=\textwidth]{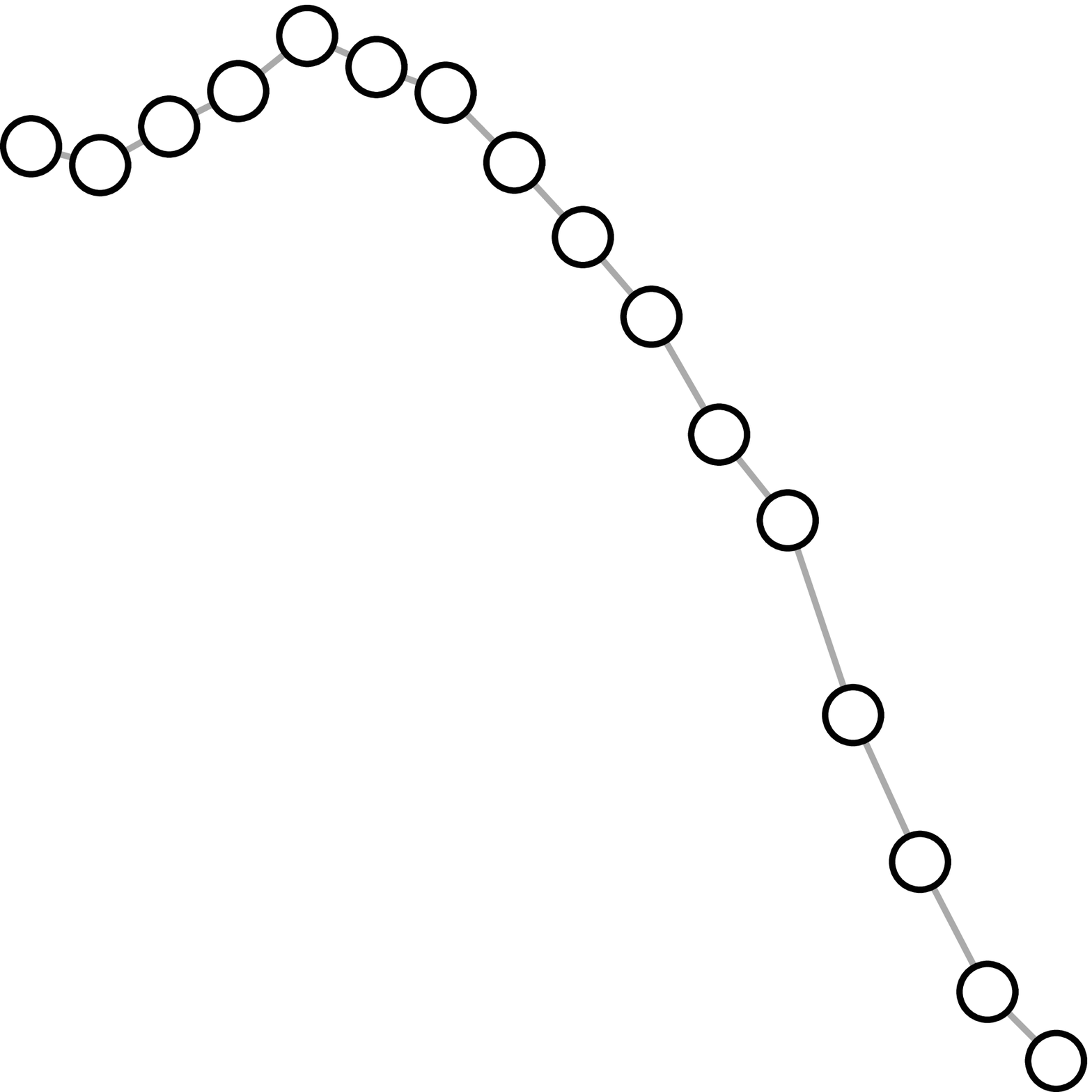}
                \caption{Network A $(d(G) = 0.9375)$}
                \label{fig:16Pathmpc}
        \end{subfigure}
	  \enspace%add desired spacing between images, e. g. ~, \quad, \qquad etc.
          %(or a blank line to force the subfigure onto a new line)
        \begin{subfigure}[b]{0.25\textwidth}
                \centering
                \includegraphics[width=\textwidth]{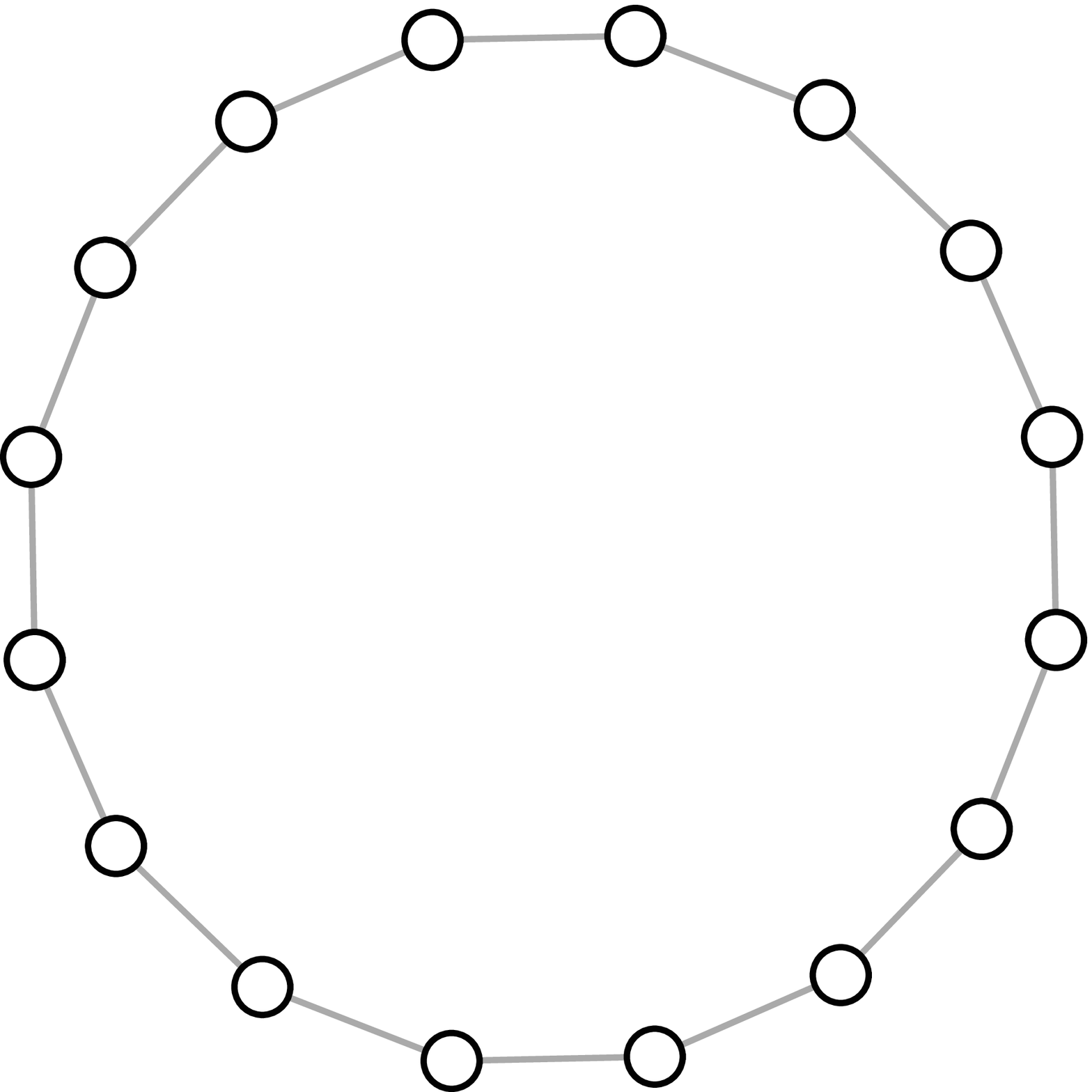}
                \caption{Network B\\ $(d(G) = 1)$}
                \label{fig:16Cyclempc}
        \end{subfigure}%
         \enspace%add desired spacing between images, e. g. ~, \quad, \qquad etc.
          %(or a blank line to force the subfigure onto a new line)
          \begin{subfigure}[b]{0.25\textwidth}
                \centering
                \includegraphics[width=\textwidth]{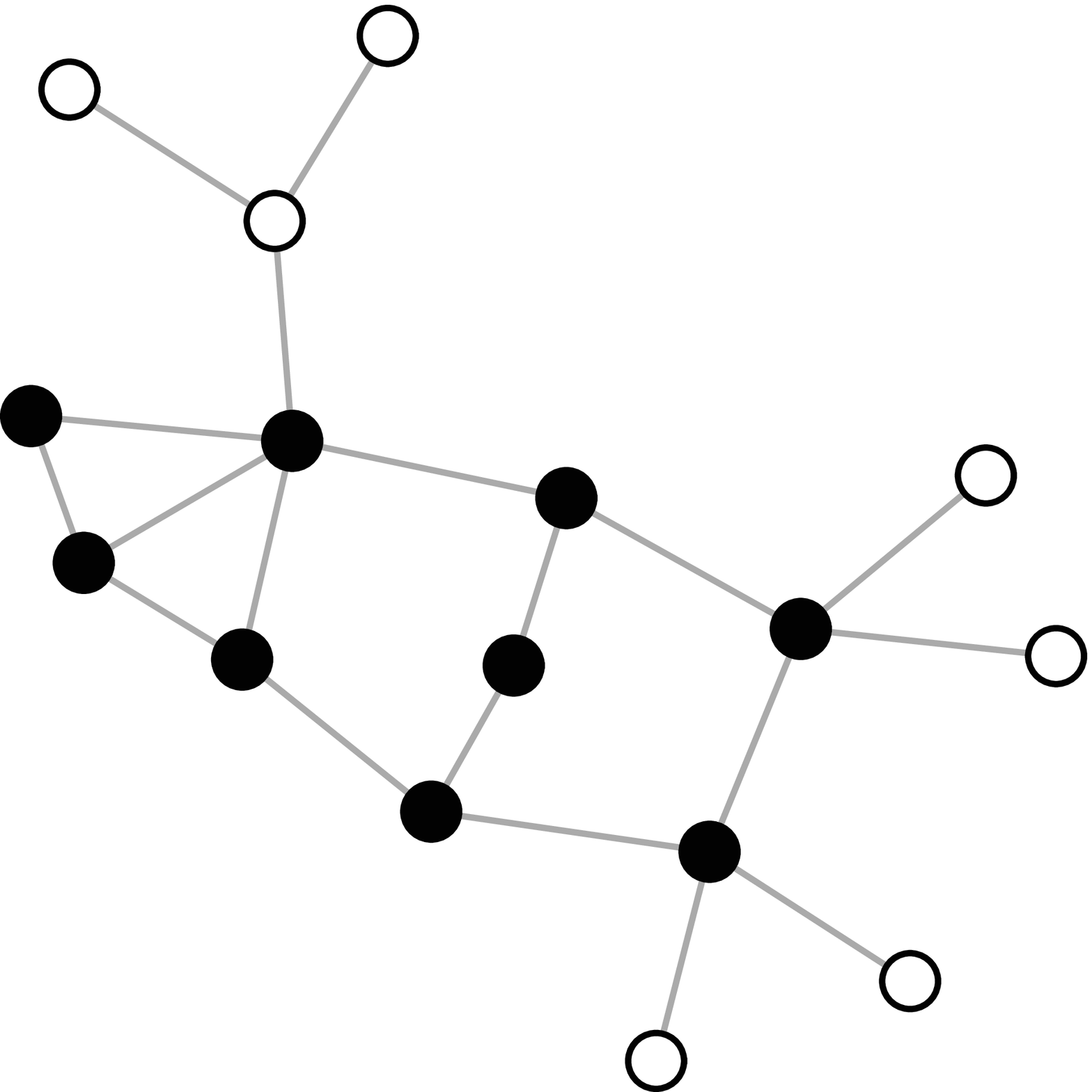}
                \caption{Network C $(d(G) = 1.1875)$}
                \label{fig:mpcnondegen}
        \end{subfigure}
        
        \hfill \\
          \begin{subfigure}[b]{0.25\textwidth}
                \centering
                \includegraphics[width=\textwidth]{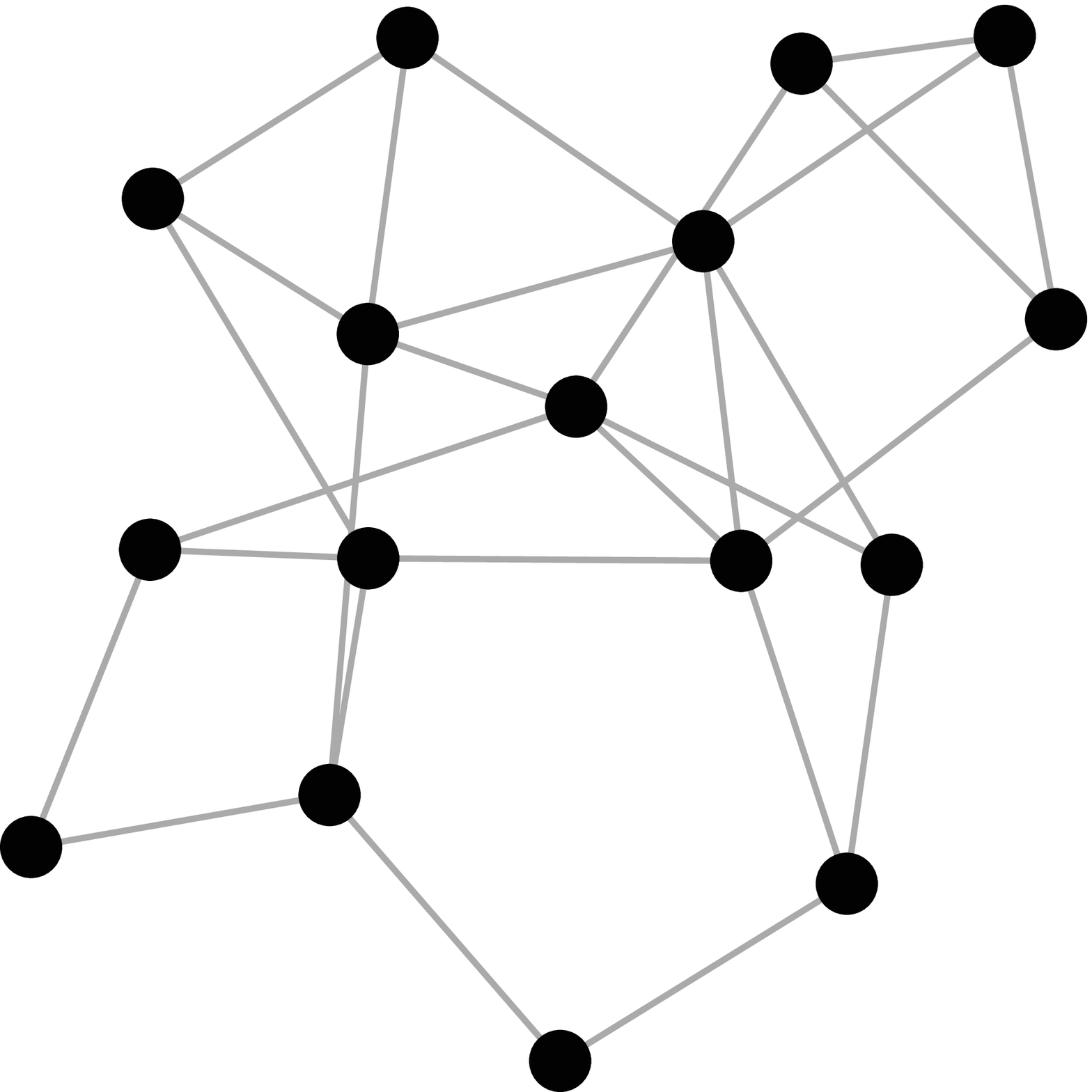}
                \caption{Network D $(d(G) = 1.75)$}
                \label{fig:16ERbmpc}
        \end{subfigure}
	 \enspace %add desired spacing between images, e. g. ~, \quad, \qquad etc.
          %(or a blank line to force the subfigure onto a new line)
         \begin{subfigure}[b]{0.25\textwidth}
                \centering
                \includegraphics[width=\textwidth]{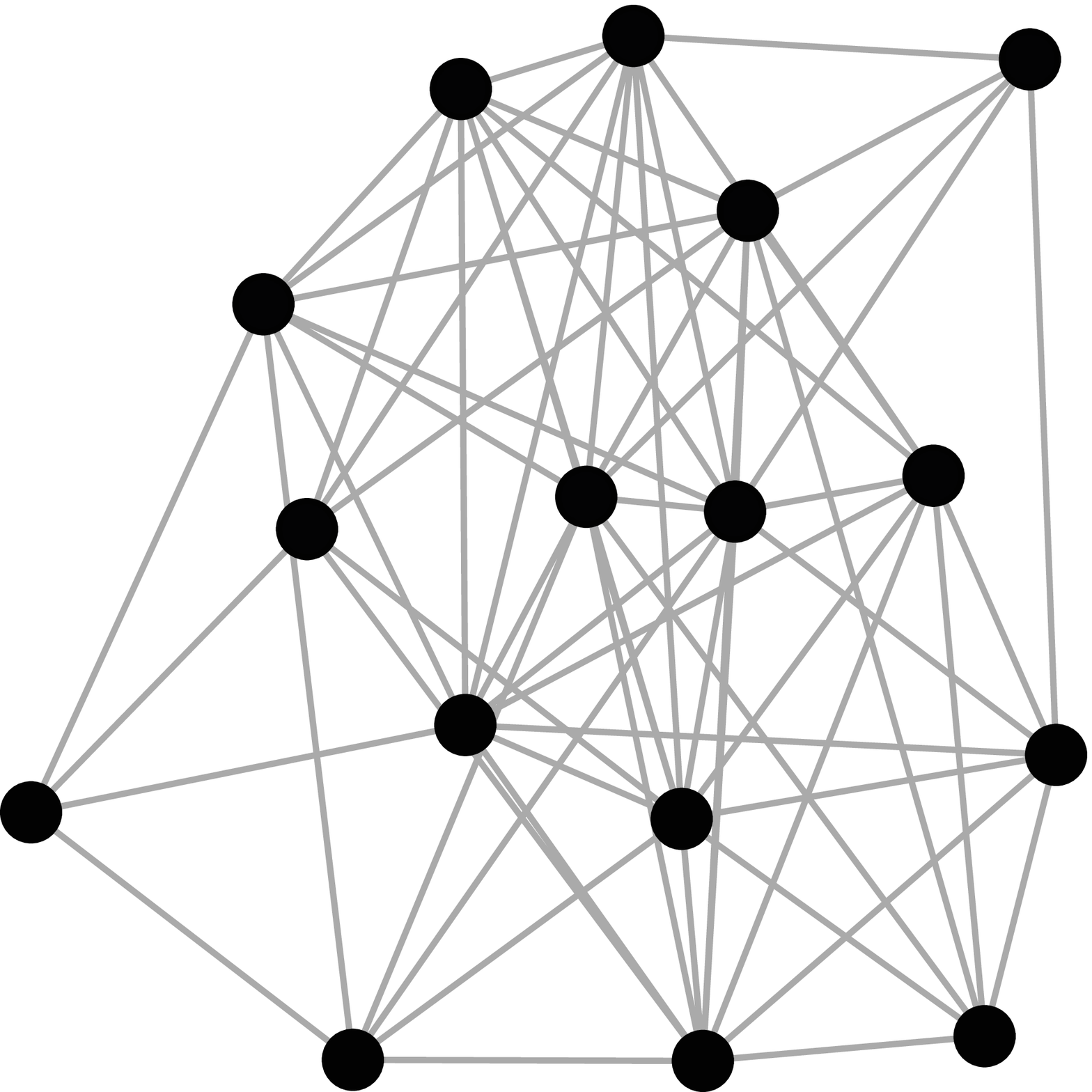}
                \caption{Network E $(d(G) = 4.125)$}
                \label{fig:16WSampc}
        \end{subfigure}
       \enspace %add desired spacing between images, e. g. ~, \quad, \qquad etc.
          %(or a blank line to force the subfigure onto a new line)
         \begin{subfigure}[b]{0.25\textwidth}
                \centering
                \includegraphics[width=\textwidth]{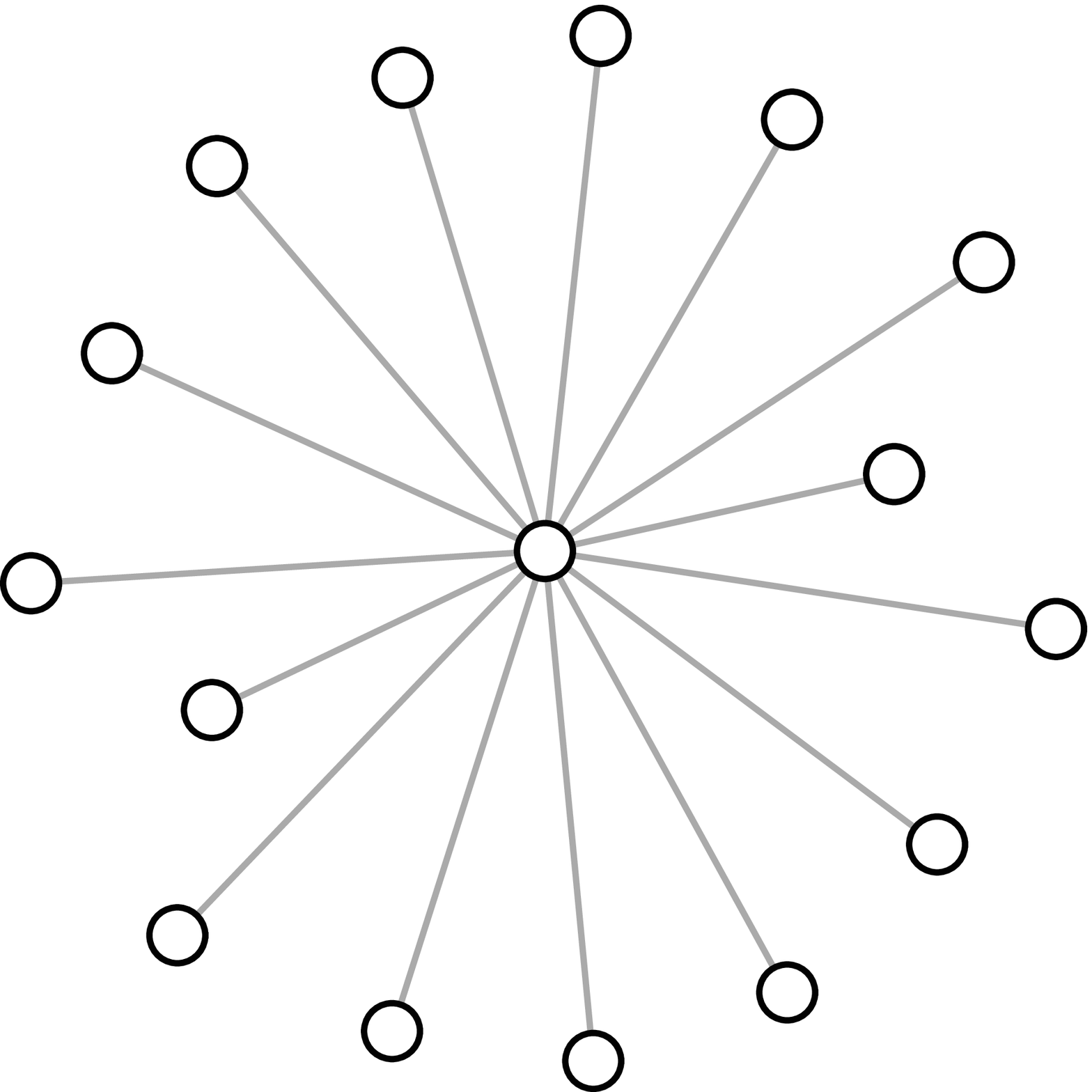}
                \caption{Network F $(d(G) = 0.9375)$}
                \label{fig:16Starmpc}
        \end{subfigure}

	\caption{Most Probable Configuration when $\frac{\lambda}{\mu} = 0.9744$ and $\Delta = 0.02$ \\(infected = black, healthy = white)}\label{fig:npc}
 \end{figure}

For Networks A, B, and F, the most-probable configuration for both the extended contact process and the approximate scaled SIS process is $\mathbf{x}^0$, the configuration where all the agents are healthy. However, for the same infection and healing rate, the most-probable configuration for Networks D and E for both the extended contact process and the scaled SIS process is $\mathbf{x}^N$, the configuration where all the agents are infected. Figure~\ref{fig:mpcnondegen} shows that the most-probable configuration for Network C is neither $\mathbf{x}^0$ nor $\mathbf{x}^N$ but a configuration where nine agents are infected while seven agents are healthy; we call most-probable configurations that are neither $\mathbf{x}^0$ nor $\mathbf{x}^N$ \emph{non-degenerate} most-probable configurations.

For an extended contact process with exogenous infection rate and healing rate, $\frac{\lambda}{\mu} = 0.9744$, and endogenous infection rate, $\beta_c =\frac{\lambda}{\mu}\Delta = 0.019488$, the epidemic is minor in Networks A, B, and F, but should be of concern in Networks D and E. In Network C, a subset of agents are more at risk than others. Different networks have different risk levels because the propagation of contagious infection is dependent on the underlying network topology. The result in Figure~\ref{fig:mpcnondegen} confirms for the extended contact process what we proved for the scaled SIS process in \cite{JZhangJournal2}, namely, that in the most-probable configuration the infected agents belong to dense subgraphs in the network. Reference \cite{Khuller:2009vj} defines density of a graph $G$ by 
\[
d(G) = \frac{\left | E(G) \right |}{ \left | V(G) \right |},
\]
where $\left | E(G) \right |$ is the total number of edges and  $\left | V(G) \right |$ is the total number of nodes. Networks that are more connected have higher densities than sparsely connected networks.  

Networks with high density, such as Networks D and E are more at risk to contagion than networks with low density such as Networks A, B, and F. Network F, although it has the largest maximum degree, has the same density as Network A. It is difficult for infection to spread in Network F because the center agent is the only agent capable of transmitting the infection to its neighbors. We showed in \cite{JZhangJournal2} that the nine infected agents in Network C are more at risk of infection than the other agents because they form a subgraph that is denser than the overall network; these nine agents are especially well-connected in this network.

%%%%%%%%%%%%%%%%%%%%
 \begin{table}[h]
\begin{tabular}{|c | c | c | c | c | c | }
\hline
& TVD$(\pi_e,\pi_{\scriptsize\textrm{approx}})$ &  $\mathbf{x}^*_e$& $\mathbf{x}^*_{\scriptsize\textrm{approx}}$ & $\pi_e(\mathbf{x}^*_{e})$& $\pi_{\scriptsize\textrm{approx}}(\mathbf{x}^*_{\scriptsize\textrm{approx}})$ \\ \hline

Network A & $0.0266$ & $\mathbf{x}^0 = [0, 0, \ldots 0]^T$ & $\mathbf{x}^0= [0, 0, \ldots 0]^T $&  $6.7989 \times 10^{-5}$ & $6.4085\times 10^{-5}$ \\ \hline

\cellcolor[gray]{0.8}Network B & \cellcolor[gray]{0.8}$0.029$ & \cellcolor[gray]{0.8}$\mathbf{x}^0 = [0, 0, \ldots 0]^T$ & \cellcolor[gray]{0.8}$\mathbf{x}^N= [1, 1, \ldots 1]^T $&  \cellcolor[gray]{0.8}$6.2942 \times 10^{-5}$ & \cellcolor[gray]{0.8}$6.1972\times 10^{-5}$ \\ \hline

\cellcolor[gray]{0.8}Network C & \cellcolor[gray]{0.8}$0.0848$ & \cellcolor[gray]{0.8}see Figure~\ref{fig:mpcnondegen} & \cellcolor[gray]{0.8}$\mathbf{x}^N= [1, 1, \ldots 1]^T $ &  \cellcolor[gray]{0.8}$7.0847 \times 10^{-5}$ & \cellcolor[gray]{0.8}$1.214\times 10^{-4}$ \\ \hline

Network D & $0.1505$ & $\mathbf{x}^N = [1, 1, \ldots 1]^T$ & $\mathbf{x}^N = [1, 1, \ldots 1]^T$&  $2.5957\times 10^{-4}$ & $0.0011$ \\ \hline

Network E & $0.6652$ & $\mathbf{x}^N = [1, 1, \ldots 1]^T$ & $\mathbf{x}^N = [1, 1, \ldots 1]^T$&  $0.0066$ & $0.1849$ \\ \hline

Network F & $0.1609$ & $\mathbf{x}^0 = [0, 0, \ldots 0]^T$ & $\mathbf{x}^N = [0, 0, \ldots 0]^T$&  $5.988 \times 10^{-5}$ & $3.7915\times 10^{-5}$ \\ \hline
\end{tabular}\caption{Most-Probable Configuration when $\frac{\lambda}{\mu} = 0.7$ and $\Delta = 0.4333$.}\label{table:mpc3}
\end{table}

%%%%%%%%%%%%%%%%%%%%

Table~\ref{table:mpc3} lists the TVD between the distributions, the most-probable configurations for the extended contact process and the approximate scaled SIS process, and the probabilities of the most-probable configurations for $\frac{\lambda}{\mu} = 0.7$ and $\Delta = 0.4333$; the factor, $\Delta$, no longer satisfies the condition of Theorem~\ref{theorem:scaledcontactapprox}. As a result, the TVD between the distributions are larger and for Networks B and C, the most-probable configurations of the extended contact process and the approximate scaled SIS process are no longer the same. One intuition as to why for Network C,  $\mathbf{x}^*_{\scriptsize\textrm{approx}}=\mathbf{x}^N$ (the configuration where all the agents are infected) while $\mathbf{x}^*_e = \mathbf{x}^0$ (the configuration where all the agents are healthy) is because the endogenous infection rate of the scaled SIS is exponentially dependent on the number of infected neighbors while the endogenous infection rate of the extended contact process is linearly dependent on the number of infected neighbors; contagion is more virulent because the endogenous infection rate is higher in the scaled SIS process. 

Note that the configuration in Figure~\ref{fig:mpcnondegen}, where nine agents are more at risk of infection than others, remains the most-probable configuration for Network C. Even though this configuration no longer has the highest equilibrium probability in the approximate distribution, it remains a highly probable configuration. This reinforces our observation from Figure~\ref{fig:eqvsplotsoutside} that configurations with high probabilities in the approximate distribution are also highly probable in the equilibrium distribution of the extended contact process. The substructures that are vulnerable for the scaled SIS process, the \emph{non-degenerate} most-probable configurations, are also vulnerable substructures of the extended contact process.

\section{Conclusion}\label{sec:conclusion}
This paper considers conditions under which the extended contact process \cite{PhysRevE.86.016116, JZhang} is well approximated by a scaled SIS process \cite{JZhangJournal}. This is important because the extended contact process can only be studied by numerical means, except in the trivial complete network, whereas for the scaled SIS process, we have a closed-form solution for its equilibrium distribution. Both processes model Markov dynamic processes over a static, undirected network. The extended contact process, also called the $\epsilon$-SIS process, is a modification of the basic contact process  \cite{liggett1999stochastic} to include a nonzero exogenous infection rate. The contact process is often used as model of diffusion of virus or information over networks. 

We are interested in understanding how the time-asymptotic behavior of dynamical network processes in particular, of the extended contact process, depends on the underlying network topology. The equilibrium distribution of the extended contact process, although well-defined, requires solving numerically an eigenvalue-eigenvector problem for a $2^N \times 2^N$ matrix, which is infeasible except for very small size networks. It is also not analytically available for arbitrary network topologies. In this paper, we show how the equilibrium distribution of the scaled SIS process, which does have a closed-form analytical description \cite{JZhangJournal}, is, under a certain range of parameter values, a good approximation to the equilibrium distribution of the extended contact process.

The extended contact process assumes that the infection rate of a susceptible agent has a linear dependence on the number of infected neighbors, whereas the scaled SIS process assumes that the infection rate is exponentially dependent on the number of infected neighbors. The paper gives a conservative upperbound on the endogenous infection rate, $\beta_e$, of the extended contact process for which the equilibrium distribution of the extended contact process, $\pi_e(\mathbf{x})$ is well approximated by that of an equivalent scaled SIS process, $\pi_{\scriptsize\textrm{approx}}(\mathbf{x})$. We showed that this upperbound depends on the maximum degree, $d_{\max}$, of the underlying network topology.

We confirmed these results with simulations using six different networks with 16 nodes (for which can solve numerically the associated eigenvector-eigenvalue problem). We compared the true equilibrium distribution, $\pi_e(\mathbf{x})$, of the extended contact process, obtained numerically, with that of the approximate distribution, $\pi_{\scriptsize\textrm{approx}}(\mathbf{x})$, derived from the scaled SIS process. By checking the deviation for parameters both within and outside the theoretical bound, we confirmed that the proposed approximation is good. Depending on the underlying network, the total variation distance (TVD) between the true equilibrium distribution and its approximation may still remain small even for processes where the parameter values are larger than the upperbound. Further, the TVD does not necessarily increase for increasing deviation from the theoretical upperbound. In future work, we would like to explore how the structure of the transition rate matrices leads to a decrease in deviation between the extended contact process and the approximation.

We then used this approximation to determine the most-probable configuration of the extended contact process. Unlike the scaled SIS process, we do not have any bounds for the extended contact process on the infection and healing rate as to when the most-probable configuration is $\mathbf{x}^0$, $\mathbf{x}^N$, or a non-degenerate configuration. When the TVD is small, the configurations with the highest equilibrium probability are identical for both the extended contact process and the scaled SIS process. This means that we can use the scaled SIS process, whose most-probable configuration can be found in polynomial-time, to find the most-probable configuration of the extended contact process. The most-probable configurations reveal subgraphs in the network that are more vulnerable to infection by the extended contact process.

%This paper is concerned with studying the time-asymptotic behaviors of the extended contact process on finite-size networks of arbitrary topology. For much  process, this requires solving for the eigenvector of the zero eigenvalue of the $2^N \times 2^N$ rate matrix, $\mathbf{Q}$, an infeasible computationally problem for N larger than a few nodes. We take a different approach and show that for a subclass of extended contact processes, their equilibrium distribution can be approximated by that of the scaled SIS process studied in \cite{JZhangJournal, JZhangJournal2}. The advantage is that for the scaled SIS process, we can can find the closed form distribution for arbitrary networks. We present a conservative range of parameters of the extended contact process when this approximation is appropriate. We confirm this with numerical simulations. We use this approximation to address important question such as which nodes and network substructures and more susceptible to infection (or failures) in the long run.

\section*{Acknowledgement}
We thank Microsoft for providing us their cloud computational resources with a Microsoft Azure Research Award. 

%%%%%%%%%%%%%%%%%%%
%REFERENCES
%%%%%%%%%%%%%%%%%%
\bibliographystyle{IEEEtran}
% argument is your BibTeX string definitions and bibliography database(s)
\bibliography{refs}

% Generated by IEEEtran.bst, version: 1.13 (2008/09/30)
\begin{thebibliography}{10}
\providecommand{\url}[1]{#1}
\csname url@samestyle\endcsname
\providecommand{\newblock}{\relax}
\providecommand{\bibinfo}[2]{#2}
\providecommand{\BIBentrySTDinterwordspacing}{\spaceskip=0pt\relax}
\providecommand{\BIBentryALTinterwordstretchfactor}{4}
\providecommand{\BIBentryALTinterwordspacing}{\spaceskip=\fontdimen2\font plus
\BIBentryALTinterwordstretchfactor\fontdimen3\font minus
  \fontdimen4\font\relax}
\providecommand{\BIBforeignlanguage}[2]{{%
\expandafter\ifx\csname l@#1\endcsname\relax
\typeout{** WARNING: IEEEtran.bst: No hyphenation pattern has been}%
\typeout{** loaded for the language `#1'. Using the pattern for}%
\typeout{** the default language instead.}%
\else
\language=\csname l@#1\endcsname
\fi
#2}}
\providecommand{\BIBdecl}{\relax}
\BIBdecl

\bibitem{PhysRevE.86.016116}
\BIBentryALTinterwordspacing
P.~Van~Mieghem and E.~Cator, ``Epidemics in networks with nodal self-infection
  and the epidemic threshold,'' \emph{Phys. Rev. E}, vol.~86, p. 016116, Jul
  2012. [Online]. Available:
  \url{http://link.aps.org/doi/10.1103/PhysRevE.86.016116}
\BIBentrySTDinterwordspacing

\bibitem{liggett1999stochastic}
T.~M. Liggett, \emph{Stochastic Interacting Systems: Contact, Voter and
  Exclusion Processes}.\hskip 1em plus 0.5em minus 0.4em\relax Springer, 1999,
  vol. 324.

\bibitem{barrat2008dynamical}
A.~Barrat, M.~Barthelemy, and A.~Vespignani, \emph{Dynamical Processes on
  Complex Networks}.\hskip 1em plus 0.5em minus 0.4em\relax Cambridge
  University Press, 2008.

\bibitem{keeling2005networks}
M.~J. Keeling and K.~T. Eames, ``Networks and epidemic models,'' \emph{Journal
  of the Royal Society Interface}, vol.~2, no.~4, pp. 295--307, 2005.

\bibitem{porter2014dynamical}
M.~A. Porter and J.~P. Gleeson, ``Dynamical systems on networks: A tutorial,''
  \emph{arXiv preprint arXiv:1403.7663}, 2014.

\bibitem{pellis2014eight}
L.~Pellis, F.~Ball, S.~Bansal, K.~Eames, T.~House, V.~Isham, and P.~Trapman,
  ``Eight challenges for network epidemic models,'' \emph{Epidemics}, 2014.

\bibitem{Ganesh}
A.~Ganesh, L.~Massoulie, and D.~Towsley, ``{The effect of network topology on
  the spread of epidemics},'' in \emph{Proceedings of the Annual Joint
  Conference of the IEEE Computer and Communications Societies}, Miami, USA,
  Mar. 2005, pp. 1455--1466 vol. 2.

\bibitem{Draief:2006:TVS:1190095.1190160}
\BIBentryALTinterwordspacing
M.~Draief, A.~Ganesh, and L.~Massouli{\'e}, ``Thresholds for virus spread on
  networks,'' in \emph{Proceedings of the 1st International Conference on
  Performance Evaluation Methodolgies and Tools}, ser. valuetools '06.\hskip
  1em plus 0.5em minus 0.4em\relax New York, NY, USA: ACM, 2006. [Online].
  Available: \url{http://doi.acm.org/10.1145/1190095.1190160}
\BIBentrySTDinterwordspacing

\bibitem{JZhangJournal}
J.~Zhang and J.~M.~F. Moura, ``Diffusion in social networks as {SIS} epidemics:
  Beyond full mixing and complete graphs,'' \emph{IEEE Journal of Selected
  Topics in Signal Processing}, vol.~8, no.~4, pp. 537--551, Aug 2014.

\bibitem{JZhangJournal2}
------, ``Role of subgraphs in epidemics over finite-size networks under the
  scaled {SIS} process,'' \emph{Journal of Complex Networks}, 2015.

\bibitem{norris1998markov}
J.~R. Norris, \emph{Markov Chains}.\hskip 1em plus 0.5em minus 0.4em\relax
  Cambridge University Press, 1998.

\bibitem{Kelly}
F.~P. Kelly, \emph{{Reversibility and Stochastic Networks}}.\hskip 1em plus
  0.5em minus 0.4em\relax Cambridge University Press, 2011.

\bibitem{west2001introduction}
D.~B. West \emph{et~al.}, \emph{Introduction to Graph Theory}.\hskip 1em plus
  0.5em minus 0.4em\relax Prentice hall Upper Saddle River, 2001, vol.~2.

\bibitem{algraph}
G.~R. C.~Godsil, \emph{Algebraic Graph Theory}.\hskip 1em plus 0.5em minus
  0.4em\relax Springer-Verlag, 2001.

\bibitem{Augusto}
A.~Santos and J.~M.~F. Moura, ``Emergent behavior in large scale networks,'' in
  \emph{2011 50th IEEE Conference on Decision and Control and European Control
  Conference (CDC-ECC)}.\hskip 1em plus 0.5em minus 0.4em\relax IEEE, 2011, pp.
  4485--4490.

\bibitem{JZhang}
J.~Zhang and J.~M.~F. Moura, ``Accounting for topology in spreading contagion
  in non-complete networks,'' in \emph{2012 {IEEE} International Conference on
  Acoustics, Speech and Signal Processing ({ICASSP})}, March 2012, pp.
  2681--2684.

\bibitem{JZhang2}
------, ``Threshold behavior of epidemics in regular networks,'' in \emph{2013
  {IEEE} International Conference on Acoustics, Speech and Signal Processing
  ({ICASSP})}, May 2013, pp. 5411--5414.

\bibitem{levin2009markov}
D.~A. Levin, Y.~Peres, and E.~L. Wilmer, \emph{Markov Chains and Mixing
  Times}.\hskip 1em plus 0.5em minus 0.4em\relax American Mathematical Soc.,
  2009.

\bibitem{Khuller:2009vj}
S.~Khuller and B.~Saha, ``On finding dense subgraphs,'' in \emph{Automata,
  Languages and Programming}.\hskip 1em plus 0.5em minus 0.4em\relax Springer,
  2009, pp. 597--608.

\bibitem{papoulis2002probability}
A.~Papoulis and S.~U. Pillai, \emph{Probability, Random Variables, and
  Stochastic Processes}.\hskip 1em plus 0.5em minus 0.4em\relax Tata
  McGraw-Hill Education, 2002.

\end{thebibliography}

%%%%%%%%%%%%%%%
%APPENDICES
%%%%%%%%%%%%%%%

\appendices
\section{Proof of Lemma~\ref{lemma:approx}}\label{proof:lemma:approx}
\begin{lemma}
For any nonnegative integer $m$ from $0$ to $d_{\max}$, if 
\[
\Delta^2 << \frac{2}{d_{\max}(d_{\max} - 1)},
\]
then, for $\beta = 1+\Delta$,
\[
\frac{\lambda}{\mu} \beta^m  = \frac{\lambda}{\mu}(1+\Delta)^{m} \approx  \frac{\lambda}{\mu} + \frac{\lambda}{\mu}\Delta m.
\]
\end{lemma}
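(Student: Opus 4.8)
The plan is to expand $(1+\Delta)^m$ by the binomial theorem and show the quadratic and higher-order terms are negligible under the stated hypothesis. Since $m$ is a nonnegative integer,
\[
(1+\Delta)^m = \sum_{k=0}^{m}\binom{m}{k}\Delta^k = 1 + m\Delta + R_m, \qquad R_m := \sum_{k=2}^{m}\binom{m}{k}\Delta^k ,
\]
and because $\Delta = \beta_e\mu/\lambda > 0$ every term of $R_m$ is nonnegative. It therefore suffices to produce an upper bound on $R_m$ and argue $R_m \ll 1$; since the quantity being approximated, $1 + m\Delta$, is at least $1$, this makes both the absolute and the relative error of $(1+\Delta)^m \approx 1 + m\Delta$ negligible, and multiplying through by $\lambda/\mu > 0$ yields the lemma. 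I would dispose of $m \in \{0,1\}$ first, where $R_m = 0$ and the identity is exact; for $m \ge 2$ we must have $d_{\max} \ge 2$, so the right-hand side of the hypothesis is finite and positive.

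Next I would isolate the leading correction by writing $R_m = \binom{m}{2}\Delta^2\sum_{k=2}^{m}\frac{\binom{m}{k}}{\binom{m}{2}}\Delta^{k-2}$ and bounding the ratio: for $k \ge 2$,
\[
\frac{\binom{m}{k}}{\binom{m}{2}} = \frac{2}{k!}\,(m-2)(m-3)\cdots(m-k+1) \le m^{\,k-2} \le d_{\max}^{\,k-2},
\]
using $k! \ge 2$ and $m \le d_{\max}$. The inner sum is then dominated by the geometric series $\sum_{j=0}^{m-2}(d_{\max}\Delta)^j$. The hypothesis controls $d_{\max}\Delta$: from $\Delta^2 \ll \frac{2}{d_{\max}(d_{\max}-1)}$ one gets $(d_{\max}\Delta)^2 \ll \frac{2d_{\max}}{d_{\max}-1} \le 4$ for $d_{\max}\ge 2$, so $d_{\max}\Delta$ sits well below $1$ and the geometric sum is at most an absolute constant (say $\le 2$). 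Combining gives $R_m \le 2\binom{m}{2}\Delta^2 = m(m-1)\Delta^2 \le d_{\max}(d_{\max}-1)\Delta^2 \ll 2$, where the last inequality is exactly the hypothesis; hence $R_m \ll 1$ and $(1+\Delta)^m = 1 + m\Delta + R_m \approx 1 + m\Delta$, so $\frac{\lambda}{\mu}(1+\Delta)^m \approx \frac{\lambda}{\mu} + \frac{\lambda}{\mu}\Delta m$.

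I expect the main obstacle to be controlling the cubic and higher terms rather than the leading $\binom{m}{2}\Delta^2$ term — the hypothesis is designed precisely so that $\binom{d_{\max}}{2}\Delta^2 \ll 1$. The delicate point is that a naive bound such as $R_m \le \Delta^2\,2^m$ would fail to be $\ll 1$ for large $d_{\max}$; what rescues the argument is that the same smallness assumption also forces $d_{\max}\Delta \ll 1$, so the successive correction terms decay geometrically and the whole tail is dominated by its first term. An alternative route that avoids the geometric-series bookkeeping is Taylor's theorem with Lagrange remainder, $R_m = \binom{m}{2}(1+\xi)^{m-2}\Delta^2$ for some $\xi \in (0,\Delta)$, combined with $(1+\xi)^{m-2} \le e^{d_{\max}\Delta} \le e^{2}$; I would include this as a cross-check.
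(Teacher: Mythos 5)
Your proof is correct and follows the same basic strategy as the paper's: expand $(1+\Delta)^m$ by the binomial theorem and show the terms beyond $1+m\Delta$ are negligible when $\binom{d_{\max}}{2}\Delta^2 \ll 1$. Where you differ is in how the tail is controlled. The paper argues term-by-term: it observes that the successive ratios $\frac{m-k}{k+1}$ decrease in $k$, concludes $\binom{m}{2}\Delta^2 > \binom{m}{3}\Delta^3 > \cdots$, and then declares every higher-order term negligible because the largest one, $\binom{d_{\max}}{2}\Delta^2$, is $\ll 1$ by hypothesis. You instead bound the \emph{aggregate} remainder $R_m=\sum_{k\ge 2}\binom{m}{k}\Delta^k$ by factoring out $\binom{m}{2}\Delta^2$ and dominating the rest by the geometric series in $d_{\max}\Delta$, which the hypothesis also forces to be small. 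Your version buys two things the paper's argument glosses over: (i) it rules out the possibility that many individually small terms accumulate into a non-negligible sum, and (ii) it makes explicit that the monotone decay of the terms is itself a consequence of $d_{\max}\Delta$ being small, rather than of the ratio inequalities alone (which, as written in the paper, do not by themselves imply $\binom{m}{k}\Delta^k$ is decreasing unless $\frac{m-1}{2}\Delta<1$). Your observation that the naive bound $R_m\le 2^m\Delta^2$ would fail, and the Lagrange-remainder cross-check $R_m=\binom{m}{2}(1+\xi)^{m-2}\Delta^2\le \binom{m}{2}e^{d_{\max}\Delta}\Delta^2$, are both sound. Given that the lemma is stated informally (with $\ll$ and $\approx$), your slight looseness in passing from $(d_{\max}\Delta)^2\ll 4$ to ``the geometric sum is at most $2$'' is acceptable and in keeping with the paper's own level of rigor.
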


\begin{proof}
From the binomial series, for integer $m \in \{0, 1, \ldots d_{\max}\}$,
\begin{align*}
\frac{\lambda}{\mu}\beta^m &= \frac{\lambda}{\mu}(1+\Delta)^m =  \frac{\lambda}{\mu}\left(\sum_{k=0}^m {m \choose k}\Delta^k\right) \\
& = \frac{\lambda}{\mu}\left({m \choose 0}\Delta^0 + {m \choose 1}\Delta + {m \choose 2}\Delta^2 + {m \choose 3}\Delta^3 \ldots + {m \choose m}\Delta^m\right)\\
& =\frac{\lambda}{\mu}\left(1 + m\Delta+ \frac{m(m-1)}{2}\Delta^2 + \ldots m\Delta^{m-1} +\Delta^m \right).
\end{align*}

If 
\begin{equation}\label{eq:negligible}
\left\{{m \choose 2}\Delta^2, {m \choose 3}\Delta^3, \ldots {m \choose \frac{m}{2}}\Delta^{\frac{m}{2}}\right\} << 1, \quad \forall \,m \in \{0, 1, \ldots d_{\max}\}, 
\end{equation}
then the quadratic and higher order terms in the summation are negligible and we obtain the linear approximation
\[
\frac{\lambda}{\mu}\beta^{m} \approx  \frac{\lambda}{\mu} + \frac{\lambda}{\mu}\Delta m,
\] 
which holds for all $m$.

% if $\frac{m(m-1)}{2}\Delta^2 << 1$, resulting in the relationship
%
%when $\frac{d_{\max}(d_{\max}-1)}{2}\Delta^2 << 1$.
Recognize that for $m \in \{0, 1, \ldots d_{\max}\}$, $m > \frac{m-1}{2} > \frac{m-2}{3} > \ldots > \frac{\frac{m}{2}-1}{\frac{m}{2}}$. This means that
\[
{m \choose 2}\Delta^2 > {m \choose 3}\Delta^3 > \ldots > {m \choose \frac{m}{2}}\Delta^{\frac{m}{2}}, \quad \forall \,m \in \{0, 1, \ldots d_{\max}\}. 
\]
The largest possible upperbond is when $m = d_{\max}$. Therefore, condition \eqref{eq:negligible} is satisfied when $\frac{d_{\max}(d_{\max}-1)}{2}\Delta^2 << 1$.

\end{proof}

%%%%%%%%%%%%%%%%%%%%%%
\section{Proof of Theorem~\ref{theorem:scaledcontactapprox}}\label{proof:theorem:scaledcontactapprox}

\begin{theorem}
Consider the extended contact process exogenous infection rate $\lambda$, healing rate $\mu$, and endogenous infection rate $\beta_e$, over a static, simple, connected, undirected network of arbitrary topology, $G$, with maximum degree $d_{\max}$. Let $\beta_e = \frac{\lambda}{\mu}\Delta$. If 
\[
\Delta^2 << \frac{2}{d_{\max}(d_{\max} - 1)},
\]
then the equilibrium distribution of the extended contact process is well approximated by 
\begin{equation*}
\pi_{\scriptsize\textrm{approx}}(\mathbf{x})= \frac{1}{Z}\left( \frac{\lambda}{\mu}\right)^{1^T{\mathbf x}}  (1 + \Delta)^{\frac{{\mathbf x}^TA{\mathbf x}}{2}  }, \quad  \mathbf{x} \in \mathcal{X} ,
\end{equation*}
where $A$ is the adjacency matrix of the network $G$, and $Z$ is the partition function. The approximate distribution, $\pi_{\scriptsize\textrm{approx}}(\mathbf{x})$, is the equilibrium distribution \eqref{eq:eqapprox} of an equivalent scaled SIS process over the same network $G$ with exogenous infection rate $\lambda$, healing rate $\mu$, and endogenous infection rate $\beta_s = 1+ \Delta$.
\end{theorem}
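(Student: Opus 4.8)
The plan is to compare the two continuous-time Markov chains at the level of their infinitesimal generators. Write $\mathbf{Q}_e$ for the transition rate matrix of the extended contact process (built from \eqref{eq:contacthealrate} and \eqref{eq:extendedcontactinfectrate}) and $\mathbf{Q}_s$ for that of the scaled SIS process with endogenous rate $\beta_s = 1+\Delta$ (built from \eqref{eq:scaledhealrate} and \eqref{eq:scaledinfectrate}). Both chains act on the same state space $\mathcal{X}$, allow exactly the same transitions (single-coordinate flips), and are irreducible because $\lambda>0$, so each has a unique equilibrium distribution. The strategy is: (i) show $\mathbf{Q}_e$ and $\mathbf{Q}_s$ are entrywise close under the hypothesis on $\Delta$; (ii) deduce that the equilibrium distributions $\pi_e$ and $\pi_s$ are close; (iii) invoke the closed form \eqref{eq:equilibriumdistribution} for $\pi_s$ with $\beta_s = 1+\Delta$ to identify $\pi_s$ with the claimed $\pi_{\scriptsize\textrm{approx}}$.

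For step (i) I would take $\mu = 1$, as is standard for the contact process (cf. \eqref{eq:contacthealrate}), so that $\beta_e = \frac{\lambda}{\mu}\Delta = \lambda\Delta$. The healing entries of $\mathbf{Q}_e$ and $\mathbf{Q}_s$ are identical. For an infection transition $\mathbf{x}\to T_k^{+}\mathbf{x}$ whose target node has $m = \sum_i x_i A_{ik}\in\{0,1,\ldots,d_{\max}\}$ infected neighbors, the extended contact rate is $\lambda + \beta_e m = \lambda(1+\Delta m)$ while the scaled SIS rate is $\lambda(1+\Delta)^m$. By Lemma~\ref{lemma:approx} (multiplied through by $\mu$), the hypothesis $\Delta^2 \ll \frac{2}{d_{\max}(d_{\max}-1)}$ forces $\lambda(1+\Delta)^m \approx \lambda + \lambda\Delta m = \lambda + \beta_e m$ for every such $m$; concretely, the binomial expansion of $\lambda(1+\Delta)^m$ exceeds $\lambda + \beta_e m$ only by $\lambda\sum_{k\ge 2}\binom{m}{k}\Delta^k$, which the lemma's argument bounds uniformly in $m\le d_{\max}$ by a quantity negligible compared with $\lambda$. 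Hence $\mathbf{Q}_e = \mathbf{Q}_s + \mathbf{E}$ with $\mathbf{E}$ entrywise negligible relative to $\mathbf{Q}_s$.

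For step (ii), since $\pi_s$ is exactly stationary for $\mathbf{Q}_s$ we have $\pi_s\mathbf{Q}_s = 0$, and therefore $\pi_s\mathbf{Q}_e = \pi_s(\mathbf{Q}_s + \mathbf{E}) = \pi_s\mathbf{E} \approx 0$; that is, $\pi_s$ is an approximate left null vector of $\mathbf{Q}_e$. Because $\mathbf{Q}_e$ is the generator of a finite irreducible chain, its null space is one-dimensional and separated from the rest of the spectrum, so an approximate left null vector, after renormalization, is close to the true stationary vector $\pi_e$; thus $\pi_e(\mathbf{x}) \approx \pi_s(\mathbf{x})$ for all $\mathbf{x}$. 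For step (iii), substituting $\beta_s = 1+\Delta$ into the closed-form equilibrium distribution \eqref{eq:equilibriumdistribution} of the scaled SIS process yields exactly $\pi_{\scriptsize\textrm{approx}}(\mathbf{x}) = \frac{1}{Z}\left(\frac{\lambda}{\mu}\right)^{1^T\mathbf{x}}(1+\Delta)^{\mathbf{x}^TA\mathbf{x}/2}$, which is the asserted \eqref{eq:eqapprox}; and this is genuinely the equilibrium distribution of a scaled SIS process since it is obtained from \eqref{eq:scaledinfectrate} with $\beta_s = 1+\Delta$.

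The hard part is step (ii): entrywise closeness of the generators does not by itself give a sharp total-variation bound between the stationary distributions, because the implicit stability constant is governed by the spectral gap of $\mathbf{Q}_e$ (equivalently, the conditioning of $\mathbf{Q}_e$ restricted to the complement of its stationary direction), and in principle this could be unfavorable for some topologies. In keeping with the sense of $\approx$ already used in Lemma~\ref{lemma:approx}, I would present step (ii) as the statement that $\pi_e$ and $\pi_s$ agree up to terms negligible in the stated regime, and point to Section~\ref{sec:numerical}, where the measured TVD is on the order of $10^{-5}$ whenever $\Delta \ll \Delta_u$, as empirical confirmation that the stability constant is benign in the cases of interest.
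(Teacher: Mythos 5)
Your proposal follows essentially the same route as the paper's proof: use Lemma~\ref{lemma:approx} to show that the two generators $\mathbf{Q}_e$ and $\mathbf{Q}_s$ agree entrywise up to negligible terms under the stated condition on $\Delta$, conclude that their zero-eigenvalue left eigenvectors (the equilibrium distributions) agree, and substitute $\beta_s = 1+\Delta$ into the closed form \eqref{eq:equilibriumdistribution}. Your step (ii) is in fact more careful than the paper's, which passes directly from ``the transition rate matrices are approximately the same'' to ``the equilibrium distributions are approximately the same'' without addressing the spectral-gap/conditioning constant you correctly identify as the implicit stability assumption behind that inference.
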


\begin{proof}
From the theory of continuous-time Markov processes \cite{norris1998markov}, the equilibrium distribution of the extended contact process is the left eigenvector of the transition rate matrix, $\mathbf{Q}_e$, corresponding to the 0 eigenvalue:
\[
\pi \mathbf{Q}_e =0
\]
Entries of the matrix $\mathbf{Q}_e$ correspond to the transition rates from one configuration $\mathbf{x} \in \mathcal{X}$ to another configuration according to the rates~\eqref{eq:contacthealrate} and~\eqref{eq:extendedcontactinfectrate}. 

Lemma~\ref{lemma:approx} gave the condition for when the infection rates (normalized by the healing rate) of the extended contact process are approximately the same as those of the scaled SIS process. As a result, the transition rate matrix of both processes are approximately the same. Therefore, the left eigenvector corresponding to the 0 eigenvalue of $\mathbf{Q}_e$ is also the left eigenvector corresponding to the 0 eigenvalue of $\mathbf{Q}_s$, the transition rate matrix of the scaled SIS process with entries generated according to~\eqref{eq:scaledhealrate} and~\eqref{eq:scaledinfectrate}. We know that the left eigenvector of interest for the rate matrix, $\mathbf{Q}_s$, of the scaled SIS process is given by the closed-form equation \eqref{eq:equilibriumdistribution}.
%\begin{equation}
%\pi_{\scriptsize\textrm{approx}}({\bf x}) =\frac{1}{Z}\left( \frac{\lambda}{\mu}\right)^{1^T{\mathbf x}}  \beta_s^{\frac{{\mathbf x}^TA{\mathbf x}}{2}  }, \quad  \mathbf{x} \in \mathcal{X}.
%\end{equation}

%
%Considering some contact process with exogenous infection rate. The transition rate matrix, $\mathbf{Q}_e$, has the same non-zero and zero location as $\mathbf{Q}_s$. The rates~\eqref{eq:extendedcontactinfectrate} and~\eqref{eq:contacthealrate} may not necessarily be the same as the scaled SIS process, the equilibrium distribution of the contact process is not the same as the scaled SIS process. 
%
%However, under specific conditions, the entries of $\mathbf{Q}_e$ are approximately the same as the entries of $\mathbf{Q}_s$. Since the healing rates are the same for both process, the lower triangular matrix of $\mathbf{Q}_e$ and $\mathbf{Q}_s$ are identical. Under the conditions of Lemma~\ref{lemma:approx}, the upper triangular matrix of $\mathbf{Q}_e$ and $\mathbf{Q}_s$ are approximately the same. Therefore, the left eigenvector corresponding to the 0 eigenvalue of $\mathbf{Q}_s$ is also the left eigenvector corresponding to the 0 eigenvalue of $\mathbf{Q}_e$.
\end{proof}

%%%%%%%%%%%%%%%%%%%%%%%%%%%
\section{Properties of Functions of Exponentially Distributed Random Variables\cite{papoulis2002probability}}\label{expmin}

\subsection{Two Independent Random Variables}
Let $A \sim \exp(\alpha)$, $B \sim \exp(\beta)$ be two independent exponentially distributed random variables. Then,
\begin{equation*}
\begin{split}
P(A \le B) &= P(A - B \le 0) \\ 
&= \int_0^\infty \int_0^b \alpha e^{-\alpha x} \beta e^{-\beta b} \mathrm{d}a\,\mathrm{d}b\\
&=  \int_0^\infty \left( \int_0^b \alpha e^{-\alpha x} \mathrm{d}a \right) \beta e^{-\beta b}\mathrm{d}b\\
& =  \int_0^\infty(1 - e^{-\alpha b}) \beta e^{-\beta b}\mathrm{d}b\\
& = \int_0^\infty \beta e^{-\beta b}\mathrm{d}b- \int_0^\infty \beta e^{-(\alpha + \beta)b}\mathrm{d}b\\
& = 1 - \left(\frac{\beta}{\alpha + \beta}\right)\\
& = \frac{\alpha}{\alpha+\beta}
\end{split}
\end{equation*}

\subsection{Multiple Independent Random Variables}
Let $A \sim \exp(\alpha)$, $B_1 \sim \exp(\beta_1)$, $B_2 \sim \exp(\beta_2)$, \ldots $B_m \sim \exp(\beta_m)$ be independent exponentially distributed random variables. Let $C = \min\{B_1, B_2, \ldots, B_m)$, from properties of the exponential distribution, $C$ is also an exponentially distributed random variable with rate $\beta_1+ \beta_2 + \ldots + \beta_m$.

Therefore, 
\begin{equation*}
\begin{split}
P(A \le C) &= P(A - C \le 0) \\ 
& = 1 - \left(\frac{\beta_1+ \beta_2 + \ldots + \beta_m}{\alpha + \beta_1+ \beta_2 + \ldots + \beta_m}\right)\\
&= \frac{\alpha}{\alpha + \beta_1+ \beta_2 + \ldots + \beta_m}.
\end{split}
\end{equation*}

The proof follows by induction on the number of independent exponentially distributed random variables.

\end{document}